\documentclass[final]{article}
\pdfoutput=1
\usepackage[utf8]{inputenc}
\usepackage{jheppub}
\usepackage[export]{adjustbox}
\usepackage{amsmath,amsthm,amsfonts,amssymb,amscdx,mathrsfs}
\usepackage{scalerel}
\usepackage{multirow}
\usepackage{longtable}
\usepackage{mathtools}
\usepackage{upgreek}
\usepackage{subcaption}
\usepackage{tikz}
\usepackage{physics}
\usetikzlibrary{calc}		
\usetikzlibrary{math} 
\usetikzlibrary{intersections} 

\DeclareMathAlphabet{\mathpzc}{OT1}{pzc}{m}{it}
\DeclareMathAlphabet\mathsf{OT1}{lcmss}{m}{n}
\usepackage[mathscr]{eucal}
\newcommand{\sets}[1]{\mathscr{#1}}  

\newcommand{\be}{\begin{equation}}
\newcommand{\ee}{\end{equation}}

\newenvironment{sproof}{
\proof}{\endproof}

\newcommand{\I}{\mathcal{I}}  
\newcommand{\M}{\mathcal{M}}  
\newcommand{\N}{\mathcal{N}}  

\DeclareMathOperator{\area}{area}
\DeclareMathOperator*{\infp}{inf\vphantom{p}}

\newcommand{\pwset}{\mathscr{P}{([\mathsf{N}])}\setminus \{\emptyset\}}

\newcommand{\bkslice}{\sigma}  
\newcommand{\bdyslice}{\Sigma}  
\newcommand{\bksliceset}{\sets{S}}  

\newcommand{\ts}{\tau} 
\newcommand{\tsset}{\sets{T}} 
\newcommand{\pts}[2]{\ts_{#1}^{(#2)}} 

\newcommand{\surf}{\gamma}  
\newcommand{\surfset}{\Gamma} 
\newcommand{\newsurfset}{\Omega_{\tau}} 
\newcommand{\minimax}{\gamma^{+}} 
\newcommand{\newminimax}{\tilde{\gamma}^{+}} 
\newcommand{\HRT}{{\surf_{_{\text{HRT}}}}}  
\newcommand{\psurf}[2]{\surf_{#1}^{(#2)}} 
\newcommand{\pminimax}[2]{\gamma^{+(#2)}_{#1}}
\newcommand{\rmms}{\tilde{\gamma}^{+}} 
\newcommand{\rmmt}{\tilde{\ts}} 
\newcommand{\nmms}{\hat{\gamma}} 
\newcommand{\nmmt}{\hat{\ts}} 
\newcommand{\nmmr}{\hat{\mathcal{R}}} 


\newcommand{\conf}{\mathfrak{t}} 
\newcommand{\confset}{\mathfrak{T}} 
\newcommand{\coopset}{\mathfrak{C}} 
\newcommand{\mintset}{\mathfrak{M}} 

\newcommand{\hor}{\mathcal{H}}
\newcommand{\ew}{\mathcal{W}}

\newtheorem{theorem}{Theorem}[section]
\newtheorem{lemma}[theorem]{Lemma}
\newtheorem{corollary}[theorem]{Corollary}
\newtheorem{definition}{Definition}[section]
\newtheorem{conjecture}{Conjecture}[section]

\definecolor{brg}{RGB}{70, 255,70}

\preprint{BRX-TH-6723}

\title{\boldmath Minimax surfaces and the holographic entropy cone}

\author[a]{Brianna Grado-White,}
\author[a]{Guglielmo Grimaldi,}
\author[a]{Matthew Headrick,} 
\author[b]{and Veronika E.\ Hubeny}

\affiliation[a]{Martin Fisher School of Physics, Brandeis University, Waltham MA 02453, USA}
\affiliation[b]{Center for Quantum Mathematics and Physics (QMAP)\\ Department of Physics \& Astronomy, University of California, Davis CA, USA}
\emailAdd{bgradowhite@brandeis.edu}
\emailAdd{ggrimaldi@brandeis.edu}
\emailAdd{headrick@brandeis.edu}
\emailAdd{veronika@physics.ucdavis.edu}

\abstract{ We study and prove properties of the minimax formulation of the HRT holographic entanglement entropy formula, which involves finding the maximal-area surface on a timelike hypersurface, or time-sheet, and then minimizing over the choice of time-sheet. In this formulation, the homology condition is imposed at the level of the spacetime: the homology regions are spacetime volumes rather spatial regions. We show in particular that the smallest minimax homology region is the entanglement wedge. The minimax prescription suggests a way to construct a graph model for time-dependent states, a weighted graph on which min cuts compute HRT entropies. The existence of a graph model would imply that HRT entropies obey the same inequalities as RT entropies, in other words that the RT and HRT entropy cones coincide. Our construction of a graph model relies on the time-sheets obeying a certain ``cooperating'' property, which we show holds in some examples and for which we give a partial proof; however, we also find scenarios where it may fail.
\newline
\newline
A video abstract is available at \url{https://youtu.be/Ja-TPYNBujM}.
}

\begin{document}
\maketitle
\flushbottom
\section{Introduction}\label{sec:intro}

Holography has revealed a deep connection between spacetime geometry and quantum information. Entanglement, in particular, has played a central role, and it has been suggested that it is the fundamental ingredient from which spacetime itself emerges \cite{VanRaamsdonk:2010pw}. Many of these features are best understood in the classical limit of the AdS/CFT correspondence, where states of the boundary CFT are mapped to classical, smooth geometries of the bulk gravitational theory. However, many questions about this mapping between states and geometries remain. In particular, we would like to understand how to characterize which quantum states admit classical, geometric dual spacetimes.

An important step towards formulating such criteria was made by Ryu-Takayanagi (RT), who proposed that for static states in holographic systems, the entanglement entropy $S(A)$ of a boundary region $A$ was geometrically realized in the bulk as the area of the minimal surface homologous to $A$ \cite{Ryu:06b20v, Ryu:2006ef}. (It was later realized that it is enough for the state to be \emph{instantaneously static}, in other words the invariant Cauchy slice of a time-reflection symmetric spacetime; this is the sense in which we will use the term \emph{static} in this paper.) Early checks of the RT formula relied on showing that these areas satisfied known properties of entanglement entropy. For example, the strong subadditivity (SSA) inequality,
\be
 S(AB) + S(BC) \geq S(B) + S(ABC)\,,
\label{eq:ssa}
\ee
was proven for the static case in \cite{Headrick:2007km} (see also \cite{Headrick:2013zda} for a systematic study of properties satisfied by the RT formula). In addition to inequalities like SSA that are satisfied by general quantum states, static holographic states were shown to satisfy further inequalities, implying that not all quantum states admit classical geometric duals. In particular, it was proven that static states obeyed an inequality termed \textit{monogamy of mutual information} (MMI), 
\be
S(AB) + S(BC) + S(AC) \geq S(A) + S(B) + S(C) + S(ABC)\,,
\label{eq:mmi}
\ee
which can be violated by simple quantum states such as the 4-party GHZ state \cite{Hayden:2011ag}.

MMI, however, is not the only holography-specific inequality; refining the boundary subdivision increases the number and complexity of these inequalities; altogether there are in fact infinitely many of them \cite{Bao:2015bfa}. These inequalities are linear, and as a result they define a convex cone in the space of entropies. More precisely, given a quantum state with $\mathsf{N}$ parties $\{A_1,A_2,\dots,A_{\mathsf{N}}\}$, the entropies of all single and joint parties, $S(A_1),S(A_2),S(A_1A_2),\ldots$, form a real vector with $2^{\mathsf{N}}-1$ components. The $\mathsf{N}$-party \emph{quantum entropy cone} is the set of all such vectors realizable by some quantum state, while the $\mathsf{N}$-party static holographic entropy cone, or \emph{RT cone}, is the subset of such vectors realizable by the entropies of spatial regions of static holographic states. While much effort has been devoted to understanding the structure of the RT cone (see \cite{Grado-White:2024gtx} and references therein, in particular footnote 2, along with the more recent work \cite{Bao:2024vmy,Bao:2024azn,Czech:2025tds}), 
 its full characterization remains elusive, along with a qualitative understanding of its physical interpretation and implications.

A central tool in the study of the RT cone is the \emph{graph model}. This is a weighted graph with $\mathsf{N}+1$ external vertices, constructed out of a static holographic spacetime with $\mathsf{N}$ boundary regions specified, such that the entropy of any subset of the regions is given by the corresponding min cut on the graph. The graph model is obtained by decomposing the bulk slice along the RT surface of every single and joint boundary region. Each vertex corresponds to a connected bulk region (or cell) resulting from this partition.
In particular, $\mathsf{N}$ external vertices correspond to the homology regions of the boundary regions and the $(\mathsf{N}+1)$st represents that of the complementary boundary region, or purifier. Two vertices are connected by an edge if the respective bulk regions share part of an RT surface, with weight given by its area. Since every static spacetime has a graph model, and every weighted graph is the graph model of some spacetime, the RT cone is equal to the min cut cone for weighted graphs. The study of the RT cone is thereby distilled into a problem in graph theory.

Moving beyond the static regime, the RT proposal was soon made covariant by Hubeny-Rangamani-Takayanagi (HRT) \cite{Hubeny:2007xt}, who suggested that the entanglement entropy $S(A)$ in time-dependent settings is given by the area of the minimal extremal surface $\HRT(A)$ that is spacelike-homologous to $A$.\footnote{The spacelike homology condition can be phrased as the existence of a spacelike region $r(A) \subset \bkslice$ on a bulk Cauchy slice $\bkslice$ such that $\partial r(A) = \HRT(A) \cup A$.} An immediate question is whether these entropies obey the same set of inequalities as static holographic ones -- in other words, whether the $\mathsf{N}$-region HRT cone equals the $\mathsf{N}$-region RT cone. Note that all $\mathsf{N}$ given boundary regions must lie on a common Cauchy slice $\bdyslice$, so that their unions are also spatial regions. Their HRT surfaces, as well as those of their unions, then all lie within the Wheeler-DeWitt (WDW) patch of $\bdyslice$, but not necessarily on a common bulk Cauchy slice (unlike in the static case, where all surfaces necessarily lie on the static bulk Cauchy slice). Specifically, when boundary regions cross,\footnote{Regions $A,B$ in the boundary Cauchy slice $\bdyslice$ are said to cross if $A \cap B$, $A^c \cap B^c$, $A \cap B^c$, $A^c \cap B$ are all non-empty, where $A^c:=\bdyslice\setminus A$ and $B^c:=\bdyslice\setminus B$.} their HRT surfaces are not generically mutually achronal and therefore cannot lie on a common Cauchy slice; this fact will be important for us shortly.

Initial progress on the question of whether the HRT cone equals the RT cone came with the \emph{maximin} reformulation of the HRT formula, which made it possible to prove SSA and MMI
in the time-dependent context. The maximin formula is defined by first finding the globally minimal surface homologous to $A$ on each bulk Cauchy slice $\sigma$ containing the entangling surface $\partial A$, and then picking the largest area surface among these minimal surfaces:
\be 
\label{eq:maximin}
S(A) = \frac{1}{4G_{\rm N}}\sup_{\sigma}\infp_\gamma
|\gamma|\,,
\ee
where $|\surf|$ is the area of $\surf$. Importantly, it was shown that, for a nested pair of boundary regions, such as the regions $B$ and $ABC$ appearing in SSA \eqref{eq:ssa}, there exists a slice $\bkslice$ that contains both $\HRT(B)$ and $\HRT(ABC)$, and furthermore on which both surfaces are minimal. While $\bkslice$ does \emph{not} in general contain the HRT surfaces for the other regions $AB,BC$ appearing in the inequality, it does contain surfaces $\tilde\gamma(AB),\tilde\gamma(BC)$ homologous to those regions that are smaller (by a focusing argument) than $\HRT(AB),\HRT(BC)$ respectively. With this collection of surfaces defined on $\bkslice$, the static argument can then be applied. The same argument goes through for MMI. Since SSA and MMI are the only three-party inequalities obeyed by RT, we learn that the three-region HRT cone equals the three-region RT cone. Unlike in the static case, the proofs of SSA and MMI (as well as the equivalence between maximin and HRT, and consistency of HRT with boundary causality \cite{Headrick:2014cta}) rely on dynamical assumptions about the spacetime: the null energy condition, the Einstein equation, and AdS boundary conditions; indeed,  NEC violation generically leads to SSA violation \cite{Callan:2012ip}. Thus, while the RT cone is essentially a graph-theoretic object, the HRT cone connects directly to the physics of gravity and holography.

Unfortunately, when we go to more than three regions, the maximin proof strategy fails, as higher inequalities contain crossing pairs of regions on the smaller side of the inequality, where we need all HRT surfaces to be minimal on a common Cauchy slice. In \cite{Rota:2017ubr}, it was shown that the maximin strategy cannot be used to prove any entropy inequalities beyond SSA and MMI. Nonetheless, there has been a plethora of partial progress in characterizing the HRT cone:
\begin{enumerate}
    \item In \cite{Bao:2018wwd}, it was shown that for  black brane spacetimes in which the membrane theory of \cite{Mezei:2018jco} can be applied, in the late-time and large boundary region limit, all of the RT inequalities hold. 
    \item In \cite{Erdmenger:2017gdk,Caginalp:2019mgu}, the inequalities were numerically tested in 2+1 bulk dimensions and trivial topology.
    \item In \cite{Czech:2019lps}, it was proved that the inequalities hold in 2+1 dimensions for topologically trivial spacetimes. 
    \item In \cite{Grado-White:2024gtx}, the inequalities were tested numerically in 2+1 dimensions and non-trivial topology, and a proof was given for spacetimes with $\pi_1(\mathcal{M}) = \mathbb{Z}$.
    \item In \cite{Bousso:2024ysg}, a new proof strategy, using the original HRT formula rather than maximin, was used to re-prove SSA and MMI; it may be possible to apply it to higher entropy inequalities.
\end{enumerate}
All of these developments point to the likely equality of the HRT and RT cones.

A closely related question is whether a general time-dependent holographic state, with $\mathsf{N}$ boundary regions specified, admits a graph model, in the same sense as for RT: a weighted graph with $\mathsf{N}+1$ external vertices such that the entropy of any subset of these  regions is given by the corresponding min cut on the graph. If all the HRT surfaces are contained and minimal on a common Cauchy slice, then we can construct a graph model the same way as for RT. However, as noted above, this is generically not the case, so any general graph model is non-trivial to construct. In fact, it is not obvious if a graph model even exists. Since, as mentioned above, the RT cone equals the min cut cone, equality of the HRT and RT cones would be an immediate corollary of the existence of a graph model.

The graph model is also of interest beyond the issue of the entropy cones. By capturing all the entropies of the regions in a purely \emph{spatial} construct, it would imply that, as far as the entanglement structure of the state is concerned, the time direction can be ``integrated out.'' More precisely, in fixing the boundary regions we are fixing a boundary Cauchy slice, and so it is the bulk time within the corresponding WDW patch that is being integrated out. This is perhaps not so surprising, since we know that time evolution within a WDW patch is a mere gauge transformation in gravity, and so in principle should be irrelevant for physically well-defined quantities. Graph models are also related to tensor network models of holographic states \cite{Bao:2018pvs}, and so may help to clarify the problem of constructing tensor networks for time-dependent holographic states. On the other hand, if a graph model does \emph{not} in general exist, then that would tell us that the bulk time direction \emph{cannot} be integrated out in describing the entanglement structure of a holographic state; specifically, the fact that the HRT surfaces of crossing regions are not mutually achronal must be encoded in the values of the entanglement entropies themselves. For these reasons, we believe the existence of a covariant graph model is an important question about holography.

In this work, we will attempt --- but not entirely succeed --- to answer these questions about the graph model and HRT cone. To this end, we will focus on a reformulation of the HRT formula that is inspired by, but different from, maximin;
namely the \emph{minimax} prescription introduced in \cite{Headrick:2022nbe}. In this formulation, one first finds the maximal area surface on a given timelike or null hypersurface $\ts$ homologous to $D(A)$, dubbed \emph{time-sheet}, and then minimizes over time-sheets:
\be
S(A) =\frac{1}{4G_{\rm N}} \, \infp_\tau\sup_\gamma|\gamma|.
\ee
We prove several properties of minimax surfaces and time-sheets, a minimax time-sheet being one that contains the HRT surface and on which it is maximal.

For multiple regions, we define the notion of a \emph{cooperating} time-sheet configuration, as follows: Intersecting time-sheets cut each other up into partial time-sheets, and cut the surfaces they host into partial surfaces; a set of minimax time-sheets cooperates if every partial HRT surface is maximal on its partial time-sheet. The value of cooperating time-sheets is that, as we show, they define a graph model. We also show that, for a pair of crossing regions with connected HRT surfaces, there does exist a cooperating time-sheet pair. However, we have not been able to prove that a cooperating configuration exists for a general set of boundary regions, and our investigations of specific examples are inconclusive; while in some cases we can construct cooperating time-sheets, in others we can actually argue \emph{against} their existence. This leaves several logically possible outcomes for this story:
\begin{enumerate}
\item Cooperating time-sheets always exist, but one must be more clever than we were to construct them in specific examples.
\item Cooperating time-sheets do not always exist, but a different (perhaps more flexible) construction yields a graph model.
\item A graph model does not exist in general, but the HRT cone equals the RT cone for all $\mathsf{N}$ for some other reason.
\item The HRT cone does not equal the RT cone for all $\mathsf{N}$. 
\end{enumerate}
We leave the solution to this multiple-choice problem to future work.

The rest of the paper is organized as follows. In the remainder of this section we provide a self-contained summary of our main results. Section \ref{sec:max_to_min} gives a precise definition of the minimax prescription and shows that it agrees with HRT. In doing so, we discuss the stability of the surface in relation to its extremality. We also introduce yet another reformulation in the form of a relaxation of the main minimax prescription. In section \ref{sec:properties}, we prove several properties of minimax surfaces. Some of these are already known in the literature, but we provide a minimax-based proof for them, while others are new results. In section \ref{sec:conjecture} we define the cooperating property for sets of time-sheets and explain how to construct a graph model from it. We give supporting evidence for the existence of cooperating time-sheets, but also discuss scenarios where their existence may be challenged.\footnote{In appendix \ref{app:geodesic_seams}, we provide details on computing the spacelike and timelike geodesics used in the examples to test the conjecture.} We conclude in section \ref{sec:discussion} with a discussion of possible applications and future directions. 

\subsection{User-friendly summary}\label{sec:summary}

In this section we provide an  
overview of the main results; proofs and details are given in the main text. In what follows, the bulk spacetime will be denoted by $\hat{\M}$, the conformal boundary by $\hat{\N}$, and the future and past boundaries by $\hat{\I}^{+}$ and $\hat{\I}^{-}$ respectively (we use the hatted notation to avoid confusion with the regulated spacetime that we will use in the bulk of the paper, which will be introduced in subsection \ref{sec:regulator}). For a fixed boundary region $A$ on a boundary Cauchy slice $\bdyslice \subset \hat{\N}$,  the conformal boundary can then be decomposed as
(cf.\ figure \ref{fig:time-sheet-intersection})
\be
    \hat{\N} = D(A) \cup D(A^c) \cup J^{+}(\partial A) \cup J^{-}(\partial A)\,,
\ee
where $D(A)$ is the boundary domain of dependence of $A$, $A^c = \bdyslice\setminus A$ is the complement of $A$  and $J^{\pm}(\partial A)$ are the boundary causal future and past of the entangling surface $\partial A$.

\subsubsection*{Minimax}

\begin{figure}
    \centering
    \includegraphics[width=0.6\textwidth]{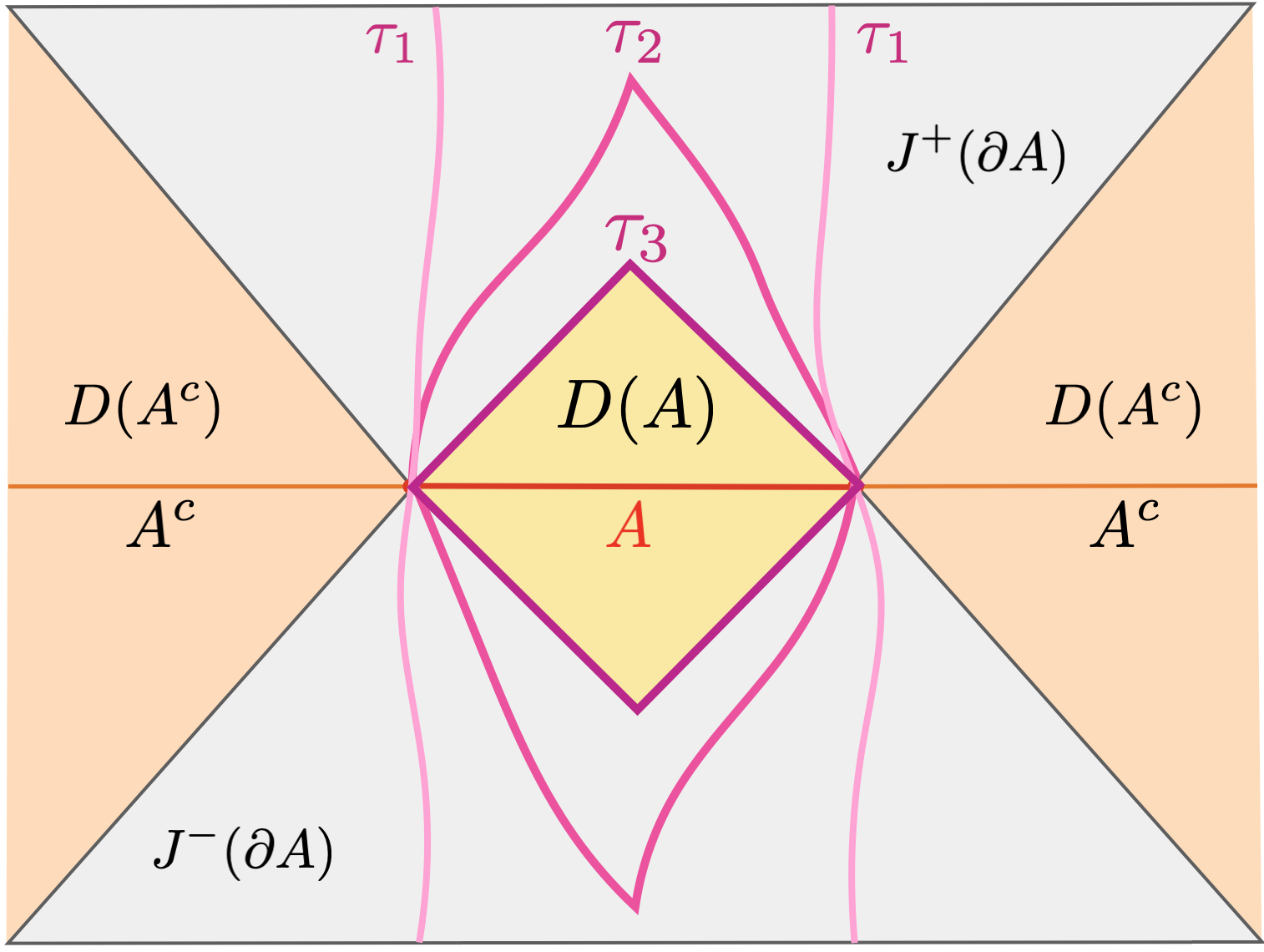}
    \caption{Illustration of the decomposition  $\hat{\N}=D(A) \cup D(A^c) \cup J^{+}(\partial A) \cup J^{-}(\partial A)$ of the conformal boundary for a given boundary region $A$, with the left and right edges understood to be identified. We also illustrate the relative homology condition by showing the intersection of three valid time-sheets $\tau_1, \tau_2$, $\tau_3$ with the boundary $\hat{\N}$: (i) $\tau_1$ is smooth, timelike, and continues all the way to $\hat\I^{\pm}$; (ii) $\tau_2$ is piecewise timelike and contains seams; (iii) $\tau_3$ is piecewise null and intersects $\hat{\N}$ exactly at the boundary of $D(A)$ (for example $\tau_3$ may be the boundary of the entanglement wedge). Notice that all the time-sheets contain $\partial A$ and do not intersect $D(A)$ or $D(A^c)$, as required by the relative homology condition.}
    \label{fig:time-sheet-intersection}
\end{figure}

\begin{figure}
    \centering
    \includegraphics[width=0.8\textwidth]{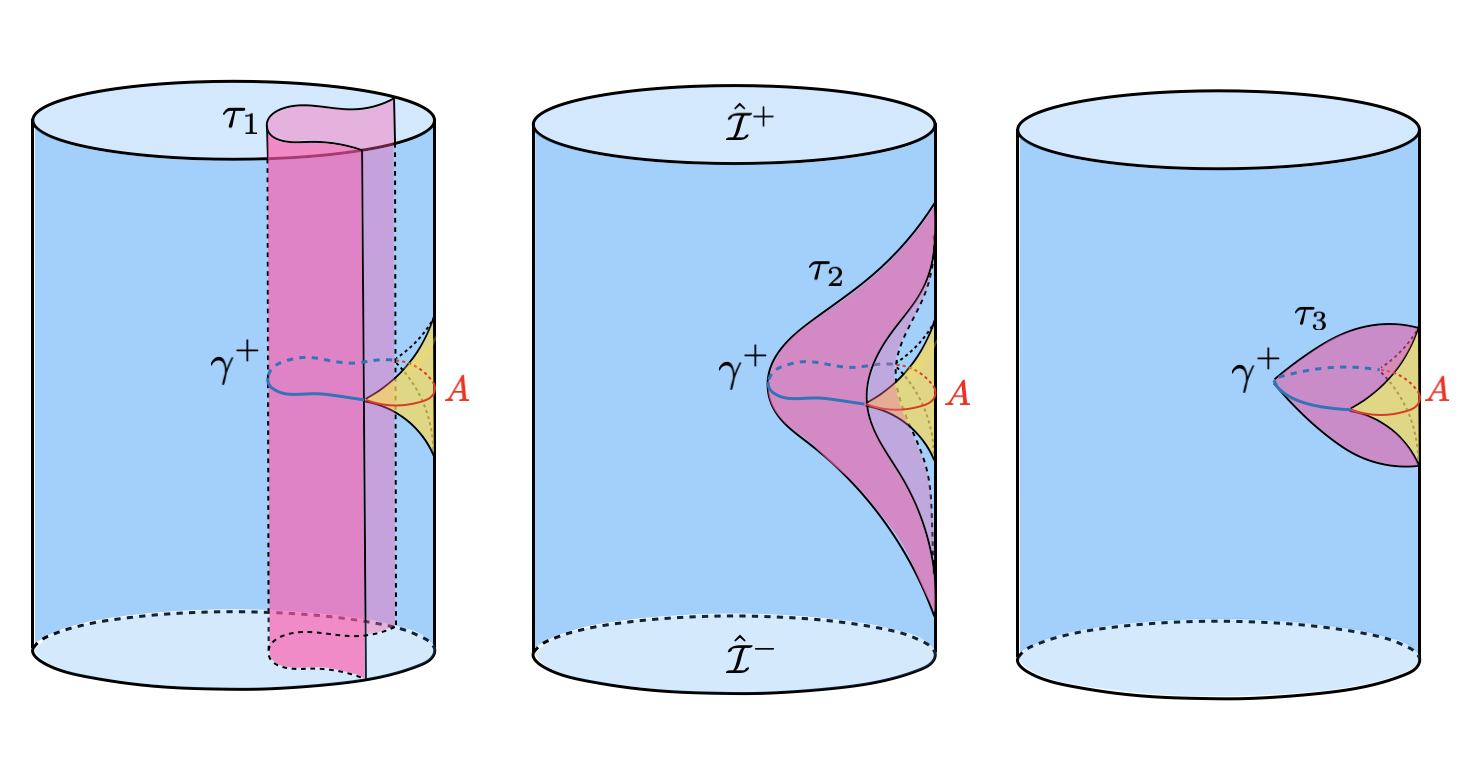}
    \caption{The time-sheets $\tau_1$, $\tau_2$, $\tau_3$ from figure \ref{fig:time-sheet-intersection} are shown from the bulk perspective, in separate panels. For all figures, the boundary subregion $A$ is indicated in red, its boundary domain of dependence $D(A)$ is indicated in yellow, the time-sheets are indicated in magenta, and the minimax surface $\minimax$ is indicated in blue. {\bf Left:} The time-sheet $\tau_1$ is smooth and ends on the conformal boundary, and past and future infinity. 
    {\bf Middle:} The time-sheet $\tau_2$ has a seam and fully ends on $\hat\N$. 
    {\bf Right:} The time-sheet $\tau_3$ is the boundary of the entanglement wedge $\mathcal{H}(A)$; it is a piecewise-null time-sheet with a spacelike seam, and ends on $\partial D(A)$.
    }
    \label{fig:time-sheets}
\end{figure}

In the minimax prescription \cite{Headrick:2022nbe}, the order of extremization is swapped relative to maximin. In doing this swap, the central geometrical construct shifts from Cauchy slices to \emph{time-sheets}. A time-sheet is defined as a piecewise timelike or null hypersurface homologous to $D(A)$ relative to $\hat{\I}^{+} \cup \hat{\I}^{-} \cup J^{+}(\partial A) \cup J^{-}(\partial A)$. In essence, this means that a time-sheet contains $\partial A$ and is allowed to end on $\hat{\I}^{+} \cup \hat{\I}^{-} \cup J^{+}(\partial A) \cup J^{-}(\partial A)$ on the boundary; see figures \ref{fig:time-sheet-intersection} and \ref{fig:time-sheets}  for examples of allowed possibilities. Consequently, there exists a codimension-0 bulk homology region $R(A)$ interpolating between the time-sheet $\ts$ and a portion of the boundary that includes all of $D(A)$ and none of $D(A^c)$. Given a boundary subregion, we call $\tsset$ the set of all valid time-sheets and $\bksliceset$ the set of all valid Cauchy slices, i.e.\ those containing $\partial A$. We can then define $\surfset_\ts$ as the set of all codimension-2 surfaces of the form $\gamma = \sigma \cap \ts$, for $\sigma \in \bksliceset$. The minimax prescription states 
\be
\label{eq:minimax0}
   S(A)  =  \frac{1}{4G_{\rm N}} \infp_{\ts \in \tsset} \sup_{\surf \in \surfset_\ts} |\surf|\,.
\ee
That is, on each time-sheet we find the maximal surface, and then pick the minimal one among 
choices of time-sheet. 
Following the notation of \cite{Headrick:2022nbe}, we denote the minimax surface by $\minimax$. Any time-sheet on which $\minimax$ is maximal is called a \textit{minimax time-sheet}.

\subsubsection*{Minimax = maximin = HRT}

A priori, the minimax prescription \eqref{eq:minimax0} could output a different value than maximin \eqref{eq:maximin}. However it was shown in \cite{Headrick:2022nbe} that the two prescriptions give the same value under standard holographic conditions (i.e.\ the Einstein field equations, null energy condition, and AdS boundary conditions). In the main text below, we will reiterate this argument and in addition show that minimax is in fact the minimal codimension-2 extremal surface in the correct homology class. More specifically, we will show:
\begin{itemize}
    \item $\area(\text{minimax})$ = $\area(\text{maximin})$, and
    \item minimax = HRT.
\end{itemize}
Note that while the equivalence between maximin and HRT has been previously shown \cite{Wall_2014}, and is  
invoked in the first argument, the second one does not rely on maximin and is thus more direct.

Leading up to the minimax = HRT result, we discuss the extremality of the minimax surface in relation to its stability, and show that a stable and extremal minimax surface always exists and that any unstable surface that may arise on some time-sheet is necessarily non-extremal. Furthermore, this stable and extremal surface can be shown to be the minimal one among all extremal surfaces, proving equivalence with the HRT prescription.

We also explore a relaxation of the minimax prescription, where instead of maximizing over codimension-2 surfaces that are achronal in the spacetime, we maximize over surfaces that are achronal just within the time-sheet. We call this the \textit{relaxed minimax}\footnote{This relaxed minimax prescription is not to be confused with the \emph{convex-relaxed} minimax prescription described in \cite{Headrick:2022nbe}.}
prescription. We show that when NEC is obeyed, the two prescriptions coincide. This relaxation of minimax will become a useful tool in our proof of entropy inequalities. 

\subsubsection*{Properties of minimax}
In section \ref{sec:properties}, we will prove various properties of minimax surfaces, some of which will be utilized in the following sections: 
\begin{enumerate}
    
    \item Minimality of entanglement wedge:
    As shown in \cite{Headrick:2022nbe}, the entanglement horizon $\hor(A)$ of the minimax surface $\minimax$ is a minimax time-sheet, i.e.\ a time-sheet on which the minimax surface is globally maximal. However, it is not the only one, as minimax time-sheets are highly non-unique. Among the set of all spacetime homology regions bounded by minimax time-sheets, we show that the entanglement wedge $\ew_A$ is the smallest one (i.e.\ it is contained in all the others). 

    \item Area bound:
    The causal holographic information upper-bounds the holographic entanglement entropy. A proof of this statement using maximin surfaces was first given in \cite{Wall_2014}. 

    \item Entanglement wedge nesting: Given two boundary regions $A$ and $B$ with $D(B) \subset D(A)$
    the entanglement wedge of $B$ is contained in the entanglement wedge of $A$, i.e.\ $\ew_B \subset \ew_A$. As a corollary of this statement, two minimax surfaces for the same boundary region must be spacelike related. 
    
    \item Pairwise cooperation:
    Whenever two minimax time-sheets intersect, they partition each other into what we call \textit{partial minimax time-sheets} (and consequently, the minimax surfaces are cut into partial minimax surfaces). We prove that there always exists a pair of minimax time-sheets $(\ts_1, \ts_2)$ such that each partial minimax surface is maximal on the respective partial minimax time-sheet.  
\end{enumerate}

The result above naturally raises the question of whether the cooperating property holds more generally, for multiple pairs of fixed time-sheets.
Specifically, we wonder if an arbitrary number of intersecting time-sheets can be arranged so that the partial minimax surfaces are maximal on their respective partial time-sheets. We motivate this \emph{cooperating conjecture} in section \ref{sec:conjecture} by introducing some of its powerful consequences, and further, we discuss its validity by considering explicit examples in 2+1 dimensions. 

\subsubsection*{The spacetime graph model} 

One of the key implications of the cooperating conjecture is that our minimax construction allows for the definition of a graph model for time-dependent states, which in turn implies the equivalence of the RT and HRT cones. The minimax prescription associates to a boundary domain of dependence $D(A)$ a spacetime homology region $R(A)$ bounded by the minimax time-sheet. Therefore, given a holographic spacetime $\M$ with $\mathsf{N}$ boundary regions specified, the minimax time-sheets of every joint boundary region define a partitioning of $\M$ into spacetime cells. We propose a graph model for time-dependent states where each such bulk cell corresponds to a vertex, and two vertices are joined with an edge if they share a part of a time-sheet. The edge weight is the area of the portion of minimax surface on the shared portion of the time-sheet. 

The only missing ingredient for this graph to be a working graph model of holographic entanglement is for the entropy of any subset of the regions to be given by its corresponding min cut on the graph. A sufficient condition for this to happen is if the cooperating property from the previous section holds for the entire collection of time-sheets in the spacetime, i.e.\ if every partial minimax surface on every partial time-sheet is maximal on the respective partial time-sheet. If this cooperating property holds, static and time-dependent holographic states share the same exact graph model of holographic entanglement. Any property obeyed by one is obeyed by the other, and in particular, the two entropy cones must be equivalent. 

Except for the case of two intersecting time-sheets, we do not have a general proof of the existence of cooperating time-sheets, which we leave as a conjecture. However, we show supporting evidence of its validity in subsection \ref{sec:favor} by doing direct analytic and numerical computations for a specific set of configurations of time-sheets in pure AdS$_3$ and in asymptotically AdS$_3$ with spherically symmetric NEC-obeying matter. In subsection \ref{sec:against}, we also present a scenario in pure AdS$_3$ where the existence of a cooperating configuration for three time-sheets is challenged. The setup entails an apparently frustrated configuration of nearly-intersecting but relatively-boosted HRT surfaces. In this extreme scenario cooperation appears difficult to achieve, suggesting that either a more clever construction is needed or the cooperating conjecture, as currently stated, may be false. These potential counterexamples occur in a topologically trivial $2+1$ dimensional bulk, where we know that the RT inequalities are obeyed \cite{Czech:2019lps}; thus, while they call into question the cooperating conjecture, they do not directly argue against the equality of the HRT and RT cones.
We leave further investigations to future studies.

\section{The minimax prescription}\label{sec:max_to_min}

In this section, we will review the minimax prescription \cite{Headrick:2022nbe} and prove its equivalence to two other covariant holographic entanglement entropy formulas, the original HRT formula and its maximin version.

Throughout this work we will take the bulk spacetime manifold $\mathcal{M}$ to be classical, smooth, and asymptotically AdS. We will also assume matter to satisfy the null energy condition, $T_{ab}k^a k^b \geq 0$ for any null vector $k^a$. Further, we will often impose a generic condition such that along any segment of any null geodesic with tangent vector $k^a$ there exists a point for which $R_{ab}k^a k^b \neq 0$. Together with the Raychaudhuri equation, these conditions will ensure the focusing of any  null congruence with initial non-positive expansion. Since we will frequently use this focusing argument throughout many proofs in the paper, we will refer to it as ``by focusing'' for convenience. 

This section is organized as follows. The first two subsections are largely based on \cite{Headrick:2022nbe}. In subsection \ref{sec:regulator}, we explain a choice of regulator to deal with ultraviolet divergences and in subsection \ref{sec:minimax} 
we introduce the minimax prescription and explain its relation to the HRT and maximin formulas. Subsections \ref{sec:stability} and \ref{sec:relaxation} consist of new material. In subsection \ref{sec:stability}, we define a notion of stability for minimax surfaces, and prove that such a surface is extremal if and only if it is stable. This gives a different way, without reference to maximin, of proving that the (stable) minimax surface is the HRT surface. Finally, in subsection \ref{sec:relaxation} we introduce a new protocol for computing minimax entropies, obtained as a relaxation of the main prescription, and we show that under standard holographic assumptions the two prescriptions coincide. This new reformulation will be important later in the paper.

\subsection{Entanglement wedge cross section regulator} \label{sec:regulator}

Generally, the entropies we are interested in may be infinite, due to UV divergences (and potentially IR divergences). Here and below, we will attempt to be careful and work only with UV-regulated quantities. In particular, we choose the entanglement wedge cross section (EWCS) as our regulator \cite{Dutta:2019gen}, which we now review. 

Let $A$ be a boundary region contained in a boundary Cauchy slice $\Sigma$. If the entangling surface $\partial A$ is non-empty, the entanglement entropy $S(A)$ will be divergent due to the UV modes across $\partial A$. In the bulk, this is manifested as the area of the HRT surface diverging as it approaches the boundary. We can regulate this divergence as follows. Denote by $B$ the complement of $A$ in $\Sigma$, with a small neighborhood removed around $\partial A$, such that $A$ and $B$ do not touch. Let $\surf^0$ be the bulk extremal surface homologous to the buffer, and therefore anchored to $\partial(AB)$ (or, if there is more than one, the surface closest to the buffer). Morally, $\surf^0$ is the HRT surface for $AB$; technically, however, defining it that way would make the definition circular. The \textit{entanglement wedge cross section}, denoted by $E_w(A:B)$, is the area of the minimal extremal surface homologous to $A$ relative to $\surf^0$.On the boundary, the fact that $A$ and $B$ do not touch means that there is no UV divergent entanglement between them. In the bulk, this is reflected in the fact that the extremal surface computing $E_w(A:B)$ ends on $\surf^0$; since it does not extend all the way to the boundary, its area is finite.

\begin{figure}
    \centering
    \includegraphics[width = \textwidth]{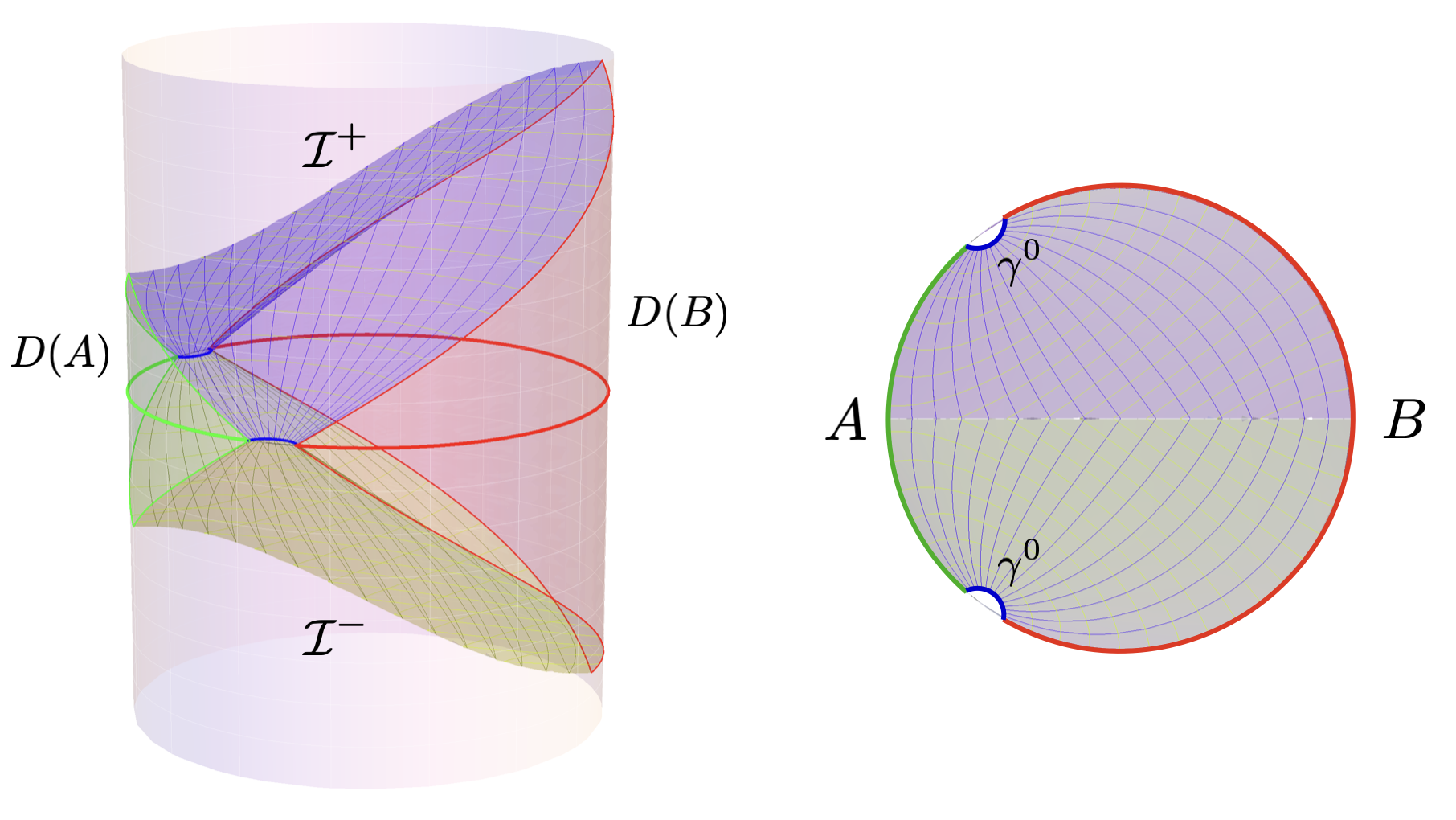}
    \caption{The regulated spacetime $\M$ for two regions $A$ and $B$. {\bf Left:} We present the full regulated spacetime, showing how it is bounded by ingoing null congruences emitted from $\surf^0$ (the HRT surface of $AB$ in the full spacetime). The remaining volume of the AdS cylinder (defined as the future and past of the two small spatial regions bounded by $\surf^0$) is removed. {\bf Right:} same spacetime but seen from the top, where we get a better view of $\surf^0$.}
    \label{fig:ewcs}
\end{figure}

As noted in \cite{Headrick:2022nbe}, using the EWCS as a regulator is equivalent to applying the HRT formula in the spacetime region defined by the $AB$ entanglement wedge $\hat{\mathcal{W}}(AB)$, with homology now computed relative to $\surf^0$. ($\hat{\mathcal{W}}(AB)$ is defined as the causal domain of the homology region interpolating between $AB$ and $\surf^0$.) This relative homology constraint can be thought of as analogous to the constraint arising from an end-of-the-world brane, in that the EWCS is allowed to end anywhere along $\surf^0$ (with the distinction that $\surf^0$ is codimension 2, while the end-of-the-world-brane is codmiension 1). We will take this perspective in the remainder of the paper, and simply ignore the portion of the spacetime outside of $\hat{\mathcal{W}}(AB)$. We use hatted symbols to denote quantities defined in the full spacetime (e.g.\ $\hat{\mathcal{I}}^\pm$), and reserve un-hatted symbols to denote quantities defined relative to the regulated spacetime. In the regulated spacetime, the conformal boundary $\N$ will be comprised solely of $D(A)$ and $D(B)$, and $\mathcal{I}^{\pm}$ will be given by the future/past boundaries of $\hat{\mathcal{W}}(AB)$. This regulation carries over analogously for more than two relevant boundary regions: given a partitioning of a boundary spatial slice into subregions $A_1,\ldots, A_\mathsf{N}$, plus a purifier $ A_{\mathsf{N} + 1}$, we remove small neighborhoods from this slice around each entangling surface, and discard any spacetime outside the joint entanglement wedge of the remaining portion of the boundary. We denote by $\gamma^0$ the joint ``HRT'' surface of $A_1,\ldots, A_\mathsf{N + 1}$, which is the intersection of $\mathcal{I}^+$ and $\mathcal{I}^-$. We show in figure \ref{fig:ewcs} the geometric construction of the regulated spacetime for two regions. 

For the purposes of the remainder of the paper, the upshot of this construction is that our boundary regions $A_i$ will be composed of entire connected components of the conformal boundary $\N = \bigcup_i D(A_i)$, and the bulk spacetime consists of the joint entanglement wedge of these regions. Of course, if the entangling surfaces of the regions we are interested in are empty (i.e.\ if they are made up of entire connected components of the boundary) then the ``regulated spacetime'' will simply be the original spacetime.

\subsection{Minimax surfaces}\label{sec:minimax}

We start with some basic definitions. Throughout this paper, a \emph{surface} is a spacelike codimension-2 submanifold of the bulk, and a \emph{slice} is a bulk Cauchy slice, which we take to be acausal, i.e.\ spacelike achronal. A surface is \emph{spacelike-homologous} to a boundary region $A$ if it is homologous to $A$ on some slice relative to $\gamma^0$. Such a surface is \emph{extremal} if it extremizes the area functional among such surfaces, which is true if and only if the trace of its extrinsic curvature vanishes everywhere and it meets $\gamma^0$ orthogonally. A surface $\surf$ is an \emph{HRT} surface for $A$ if it minimizes the area among extremal surfaces spacelike-homologous to $A$. In a completely generic spacetime, $A$ has a unique HRT surface. With some fine-tuning or symmetries, there may be multiple HRT surfaces (all with the same area, by definition). Nonetheless, for simplicity we will sometimes refer to ``the'' HRT surface $\HRT(A)$; if there are multiple ones, then the statements made hold for any of them. Its area is denoted $S(A)$. (From now on we work in units where $ 4G_{\rm N} =1$.) The causal domain of the homology region interpolating between $\HRT(A)$ and $A$ is the entanglement wedge $\ew(A)$, and the bulk part of its boundary is the entanglement horizon $\hor(A)$.

The \emph{maximin} formula \cite{Wall_2014} is
\be\label{maximin}
S_-(A):=\sup_\sigma\infp_{\surf}|\surf|\,,
\ee
where the $\sup$ is over slices $\sigma$ and the $\inf$ is over surfaces in $\sigma$ homologous to $\sigma\cap D(A)$ (relative to $\surf^0$). It was argued in \cite{Wall_2014} that the sup and inf in \eqref{maximin} are achieved, and that a maximin surface, subject to an addition stability criterion, is an HRT surface. This implies the existence of an HRT surface, as well as $S_-(A)=S(A)$.

\begin{figure}
    \centering
    \includegraphics[scale = 0.5]{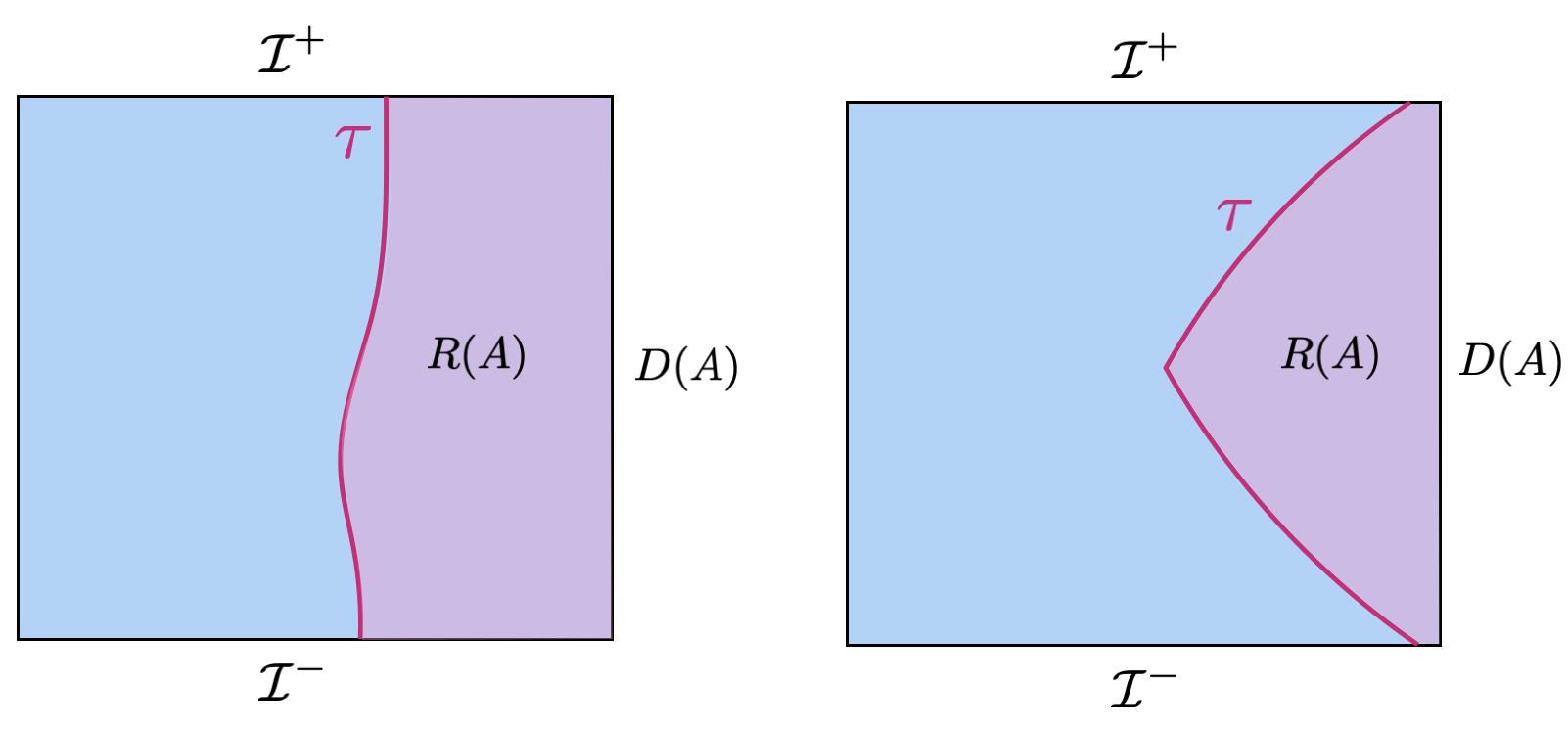}
    \caption{Illustration of two possible time-sheets in the regulated spacetime. We show them schematically as two-dimensional cross sections of the spacetime (such as those shown in figure \ref{fig:time-sheets}). {\bf Left:} A smooth timelike time-sheet homologous to $D(A)$ relative to $\I$. The codimension-0 homology region $R(A)$ is shaded in pink. {\bf Right:} The  entanglement horizon, a piecewise null time-sheet. Its homology region is the entanglement wedge, $R(A) = \ew_A$.
    }
    \label{fig:ts-regulated}
\end{figure}

To define minimax, we introduce the timelike counterparts of Cauchy slices, called time-sheets.
\begin{definition}[Time-sheet]
Given a boundary region $A$, a time-sheet for $A$ is a piecewise timelike or null hypersurface homologous to $D(A)$ relative to $\mathcal{I}^{+}\cup \mathcal{I}^{-}$, not intersecting $\N$. 
\end{definition}
By relative homology, we mean that the region $\mathcal{I}^{+}\cup \mathcal{I}^{-}$ does not count when computing homology, so that the time-sheet $\ts$ is allowed to end anywhere on this part of the boundary; in other words, there exists a codimension-0 homology region $R(A)$ that is bounded by $D(A)$, $\ts$, and an arbitrary subset of  $\mathcal{I}^{+}\cup \mathcal{I}^{-}$. It follows from the definition that, if $\surf^0$ is non-empty, then a time-sheet $\ts$ intersects it. 
We also note that a time-sheet is allowed to have discontinuous tangent plane. We refer to this as a seam, which may have spacelike, timelike, and null pieces.
A time-sheet may also be disconnected --- which can happen when the boundary region is itself made of disconnected components or when the HRT surface includes compact disconnected components, such as the bifurcation surface of a black hole. The set of possible time-sheets for a given boundary region $A$ will be denoted by $\tsset_A$ (or by $\tsset$ if the subregion is clear from context). See figure \ref{fig:ts-regulated} for an illustration of the time-sheets in the regulated geometry. We denote by $\tilde{\surfset}_\ts$ the set of codimension-2 surfaces of the form $\bkslice \cap \ts$ for some Cauchy slice $\bkslice$. 
For convenience of maximizing over a closed set we define the set $\surfset_\ts$ to be the closure of $\tilde{\surfset}_\ts$. Taking the closure allows the surface to have null pieces despite the fact that we defined Cauchy slices to be acausal.\footnote{The reader may ask whether we could instead have defined $\surfset_{\ts}$ to be the set of achronal codimension-2 surfaces directly, by defining Cauchy slice to be merely achronal rather than acausal (i.e.\ allowing it to have null pieces). The choices are equivalent except when $\ts$ is null, for example for the entanglement horizon for which any subset of the future (or past) is achronal. The supremum would thus be infinite, and so a time-sheet with null portions would not be a minimax time-sheet. However, we will find it convenient to retain the entanglement horizon as a minimax time-sheet, and thus restrict our surfaces to be in $\surfset_{\ts}$.
\label{achronal-issues-footnote}}

The paper \cite{Headrick:2022nbe} defined and studied the minimax quantity
\be\label{Splus}
   S_+(A)  := \adjustlimits \inf_{\ts \in \tsset_A} \sup_{\surf \in \surfset_\ts} |\surf|\,.
\ee
In this paper, we will be interested not only in the minimax area, but also in the surfaces and time-sheets yielding that quantity. We will assume that the sup and inf are achieved, so that there exist minimax time-sheets and minimax surfaces.\footnote{After theorem \ref{thm:minimax=HRT}, we will refine the definition of minimax surface to incorporate a natural stability condition.} This is a reasonable assumption for the following reason. The future and past boundaries of $\mathcal{M}$ are made up of singularities, Cauchy horizons, and/or the joint entanglement horizons of the boundary regions (if the spacetime has been regulated, as described in the previous subsection). Near these boundaries, the spatial components of the metric (or at least some of them) become small, while near the conformal boundaries the spatial components become large. Therefore, one would expect there to exist a saddle point of the area somewhere in the interior. It would be interesting to give a careful proof, under some assumptions, of the existence of minimax time-sheets and surfaces, but we leave this problem to future work.

For the rest of this section, we fix the boundary region $A$ and drop the dependence of each quantity on $A$, for brevity. Given a minimax pair $(\surf^+,\ts^+)$, surfaces in $\ts^+$ away from $\surf^+$ will typically have areas smaller than $\surf^+$. Therefore, $\ts^+$ can be slightly deformed away from $\surf^+$ while keeping the time-sheet minimax. This implies that minimax time-sheets are far from unique; in fact, away from minimax surfaces they are quite ``floppy.'' A useful time-sheet that is always minimax is the entanglement horizon. However, due to this ``floppiness,'' one would also expect to find minimax time-sheets that are smooth and everywhere timelike. In this paper, we make the assumption that one such everywhere timelike and smooth time-sheet exists. The issue of uniqueness of minimax \emph{surfaces} is more subtle, as we will discuss in the next subsection.

It was argued in \cite{Headrick:2022nbe}, assuming the existence of a maximin/HRT surface $\HRT$, that $\HRT$ is a minimax surface and $S_+=S$. First, by focusing, $\HRT$ is maximal on the entanglement horizon $\hor$, which is a time-sheet for $A$. So it is a candidate surface for the minimization. To show that it is indeed minimal amongst all maximal surfaces on the time-sheets in $\tsset$ we proceed by contradiction. Assume that there is a time-sheet $\ts'$ on which there exists a maximal surface $\surf'$ with smaller area than $\HRT$. Let $\sigma_{\rm max}$ be a slice on which $\HRT$ is minimal, and define $\surf'':=\sigma_{\rm max}\cap\ts'$; $\surf''$ is homologous to $\sigma_{\rm max}\cap D(A)$ via the homology region $\sigma_{\rm max}\cap R(A)$, and is no larger than $\surf'$, hence smaller than $\HRT$. But this contradicts the minimality of $\HRT$ on $\sigma_{\rm max}$.

\subsection{Stability and extremality of minimax surfaces}\label{sec:stability}

Above, we assumed the existence of an HRT surface. We can alternatively use minimax to \emph{prove} that one exists and is minimax, so that $S_+=S$. As always in this paper, we are assuming the existence of a minimax surface; the point here is that, for the reasons given in the previous subsection, this is intuitively clear in a well-behaved holographic spacetime, whereas the existence of an extremal surface seems a priori more mysterious. A naive argument is that a minimax surface extremizes its area against deformations in both the spatial and temporal normal directions and is therefore extremal. On the other hand, by focusing, every extremal surface is maximal on its ``entanglement horizon.'' So, a minimax surface is a least-area extremal surface, i.e.\ an HRT surface.

The problem with the above argument is that there can exist a minimax surface that, under an arbitrarily small deformation of the time-sheet in its neighborhood, ceases being maximal on it. This can easily be arranged starting from any minimax pair $(\surf^+,\ts^+)$, and wiggling $\ts^+$ far away from $\surf^+$ to make a new time-sheet $\ts^{\prime}$ with a surface $\surf^{\prime}$ fine-tuned to have the same area as $\surf^+$. The new surface $\surf^{\prime}$ is maximal on $\ts^{\prime}$ and therefore minimax, but is obviously not extremal.  A small deformation of $\ts^{\prime}$ near $\surf^{\prime}$ can decrease its area at first order, making $\surf^{\prime}$ no longer maximal on it; in other words, it is unstable.

In this section, we will make the notion of stability precise, prove that a stable minimax surface is extremal (and an extremal one is stable), and prove that any minimax time-sheet contains exactly one stable minimax surface. Together, these facts imply that an extremal surface exists. Combined with the fact that an extremal surface is maximal on its horizon, we have that a stable minimax surface exists and is HRT, and vice versa.

\begin{figure}
    \centering
    \includegraphics[width=0.65\textwidth]{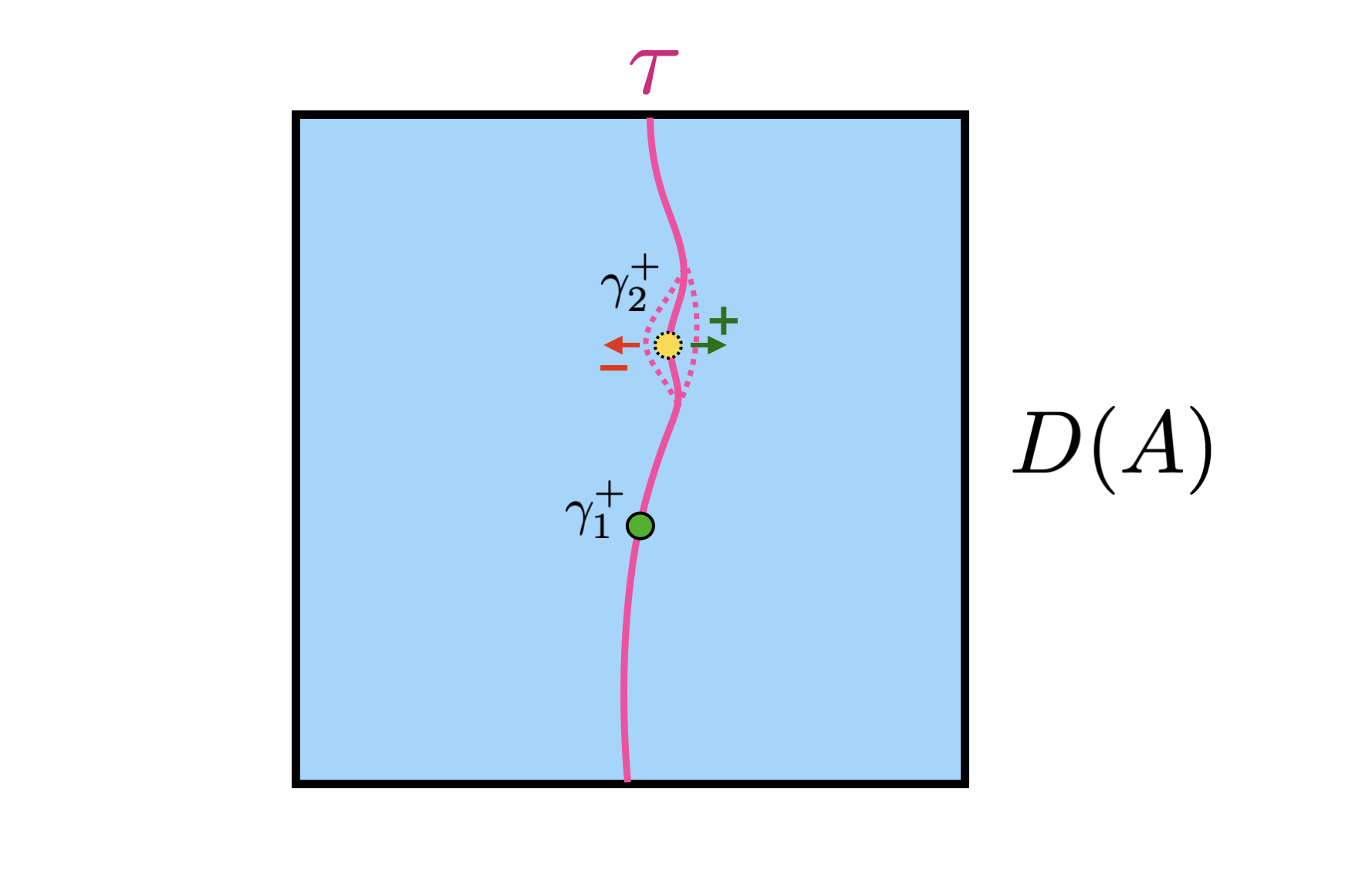}
     \caption{Suppose that $(\minimax_1,\ts)$ and $(\minimax_2,\ts)$ are both minimax, and that $\minimax_1$ is extremal while $\minimax_2$ is not. Then there exists a deformation of $\tau$ in a neighborhood of $\minimax_2$ that decreases its area, which makes $(\minimax_2,\ts)$ unstable. The opposite deformation increases its area, making $(\minimax_1,\ts)$ unstable. 
     However, $(\minimax_1, \ts')$ \emph{is} stable, where $\ts'$ is the time-sheet obtained by the first deformation, since $\minimax_1$ is the unique maximal surface on it. Since there exists such a time-sheet, we say that $\minimax_1$ is stable. Thus we see that stability and extremality are correlated, a relation that we prove in theorem \ref{thm:neitherboth}.}
    \label{fig:instability}
\end{figure}

\begin{definition}[Stable]
The minimax pair $(\minimax,\tau)$ is stable if, for any open neighborhood $\mathcal{N}$ of $\minimax$ and normal vector field $\eta$ on $\ts$, there exists an $\epsilon>0$ such that $\mathcal{N}$ contains a maximal surface on the time-sheet obtained by deforming $\ts$ by $\epsilon\eta$. The minimax surface $\minimax$ is stable if there exists a time-sheet $\tau$ containing $\minimax$ such that $(\minimax,\tau)$ is a stable minimax pair.
\end{definition}

Note that a given surface may be stable on one time-sheet and unstable on another; see figure \ref{fig:instability}. The example showcases another equivalent characterization of stable surfaces which is that a minimax time-sheet \textit{must} contain $\minimax$ (or, in case of multiple degenerate ones, one of them), whereas for an unstable surface $\surf$ one can merely \textit{choose} $\tau$ to pass through $\surf$.

We also note that a minimax surface $\minimax$ that is extremal is clearly stable. Extremality implies that there is no small time-sheet deformation that decreases the area of $\minimax$, and if $\minimax$ is the unique maximal surface on the time-sheet then there always exists a small enough $\epsilon$ for which the deformed time-sheet keeps $\minimax$ maximal on it. If instead $\minimax$ is not the unique maximal surface (sharing the time-sheet with unstable degenerate surfaces), we follow the instructions in figure \ref{fig:instability} to render it so. So extremality implies stability. In the remainder of this section, we will show that the converse holds.

In the following lemma we show how stability is only an issue when there are degeneracies on the time-sheet. However when that does happen, one is always guaranteed to find at least one stable minimax.
\begin{lemma}\label{lem:unique-is-stable}
At least one maximal surface on a minimax time-sheet is stable.
\end{lemma}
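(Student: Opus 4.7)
My strategy is to first establish a sharper unique-max lemma --- that if $\gamma$ is the \emph{unique} maximal surface on a minimax time-sheet $\tau$, then $\gamma$ is already extremal in the spacetime and $(\gamma,\tau)$ is automatically stable --- and then reduce the general multi-maximum situation to this case by a targeted local deformation of $\tau$.

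For the unique-max step, I take any normal vector field $\eta$ on $\tau$ compactly supported in a small tubular neighborhood of $\gamma$, and consider the perturbed time-sheet $\tau_\epsilon := \tau + \epsilon\eta$. Because $\gamma$ is the unique max, there is a gap $\delta>0$ between $|\gamma|=S_+$ and the area of any other local maximum on $\tau$; and since $\tau_\epsilon$ agrees with $\tau$ outside the support of $\eta$, those competing local maxima on $\tau_\epsilon$ remain bounded by $S_+-\delta/2$ for $\epsilon$ sufficiently small. Hence the supremum on $\tau_\epsilon$ is attained at the shifted critical point $\gamma_\epsilon$ near $\gamma$, whose area admits the first-order expansion $|\gamma|+\epsilon\int_\gamma H\cdot\eta+O(\epsilon^2)$, with $H$ the spacetime mean curvature vector of $\gamma$. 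The minimax property forces $\sup_{\tau_\epsilon}\geq S_+$, hence $\int_\gamma H\cdot\eta\geq 0$; swapping $\eta\to-\eta$ yields the reverse inequality, so $\int_\gamma H\cdot\eta=0$ for every admissible $\eta$, whence $H\equiv 0$ and $\gamma$ is extremal. Stability of $(\gamma,\tau)$ is then immediate: for any $\eta$ and any neighborhood $\mathcal{N}$ of $\gamma$, the shifted maximizer $\gamma_\epsilon$ lies inside $\mathcal{N}$ for small $\epsilon$ and remains the global maximum on $\tau_\epsilon$ by the gap argument.

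For the general case, suppose $\tau$ carries several maxima $\{\gamma_1,\ldots,\gamma_k\}$. I first argue by contradiction that at least one $\gamma_i$ is extremal. If instead every $H_i\neq 0$, I pick mutually disjointly supported normal fields $\eta_i$ near $\gamma_i$ (for $i=2,\ldots,k$) with $\int_{\gamma_i}H_i\cdot\eta_i<0$, each avoiding $\gamma_1$. The deformed time-sheet $\tau':=\tau+\epsilon\sum_{i\geq 2}\eta_i$ leaves $\gamma_1$ unchanged while strictly decreasing the area of every other $\gamma_i$, so for small $\epsilon$ the surface $\gamma_1$ becomes the unique max on $\tau'$; since $\sup_{\tau'}\geq S_+$ is saturated by $\gamma_1$, the new $\tau'$ is itself minimax, and the unique-max step forces $\gamma_1$ to be extremal --- a contradiction. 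Hence some $\gamma^*$ among the $\gamma_i$ is extremal, and applying the same area-decreasing deformation to all the other (non-extremal) maxima produces a minimax $\tau^*$ on which $\gamma^*$ is the unique max; by the unique-max step, $(\gamma^*,\tau^*)$ is a stable pair, so $\gamma^*$ is a stable maximal surface on the original $\tau$, as required.

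The main obstacle I anticipate is the degenerate subcase in which $\tau$ carries several mutually extremal maxima: first-order normal deformations cannot shrink an extremal surface, so the isolation procedure above does not separate two coincident extremals. I would address this by exploiting the ``floppiness'' of minimax time-sheets already emphasized in the text: a generic minimax $\tau^*$ chosen to pass through $\gamma^*$ will not contain any other extremal surface of area $S_+$, and by continuity of the area functional remains minimax with $\gamma^*$ as its unique max. A subsidiary concern is ensuring that the gap $\delta$ is strictly positive, i.e.\ that $\gamma$ is isolated in the set of maxima of $\tau$; this can be justified by restricting attention to minimax pairs for which the area Hessian at $\gamma$ is non-degenerate, so that $\gamma$ is automatically an isolated critical point.
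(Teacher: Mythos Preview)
Your approach is genuinely different from the paper's and actually proves more along the way. The paper argues directly from the definition of stability by a short contradiction: if all maxima on $\tau$ were unstable, then arbitrarily small deformations would force the global maximum to jump outside all neighborhoods $\mathcal{N}_i$, yet by continuity of the sup its area would be arbitrarily close to $S_+$; passing to the limit produces an additional maximum on the original $\tau$ outside $\cup_i\mathcal{N}_i$, contradicting the assumption that the $\gamma_i$ exhaust the maxima. No mention of extremality or mean curvature is needed. By contrast, you route the argument through extremality --- your unique-max step essentially establishes the content of the paper's subsequent Corollary (stable $\Rightarrow$ extremal) before the lemma itself, reversing the paper's logical order. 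What this buys you is a stronger intermediate statement; what it costs is length and the need for analytic control (first-variation expansion, non-degenerate Hessian) that the paper's direct argument avoids.

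There is also a real gap in your reduction step. After showing ``at least one $\gamma_i$ is extremal,'' you proceed as if the \emph{other} maxima are automatically non-extremal, so that you can shrink them by first-order normal pushes. You acknowledge the multiply-extremal subcase only at the end, and your proposed fix via ``floppiness'' is not made into an argument. In the paper this obstruction is dealt with separately (Lemma~\ref{not-timelike-related} shows that two maxima on a common timelike minimax time-sheet cannot both be extremal), but that result is downstream of the present lemma. If you wish to keep your route self-contained, you would either need an independent argument ruling out two coexisting extremal maxima on $\tau$, or a concrete construction (not just an appeal to genericity) of a minimax time-sheet through $\gamma^*$ that misses the other extremal maxima.
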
 
\begin{proof}
If the unique maximal surface $\minimax$ is unstable, the time-sheet $\ts$ can be deformed to obtain a new maximal surface $\surf'$ far from $\minimax$. Then the difference 
\be
\delta = |\surf'| - |\minimax|
\ee
between $\minimax$ and the new maximal surface $\surf'$ can be made arbitrarily small for arbitrarily small $\epsilon$, contradicting the non-degeneracy. If there are degeneracies and all the maximal surfaces are unstable, the same argument applies: under an arbitrarily small deformation of $\ts$ all surfaces cease to be maximal, and the new maximal surface $\surf'$, outside $\mathcal{N}$, has area arbitrarily close to the $\minimax_i$.
\end{proof}

\subsubsection*{Stability $\iff$ extremality}

We have shown that at least one stable minimax surface always exists, but is it extremal? To answer this question, we begin with showing that a minimax surface can be deformed in all of its directions without obstructions.

\begin{lemma}\label{lem:no-null-segments}
    A minimax surface (i) is smooth and (ii) does not contain null-related points.
\end{lemma}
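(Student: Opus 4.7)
The plan for both parts is to argue by contradiction: if either property failed, a suitable local deformation of $\gamma^+$ (remaining in $\Gamma_\tau$) would strictly increase its area, contradicting maximality on $\tau$. Throughout, I will use the standing assumption that a smooth, everywhere timelike minimax time-sheet $\tau$ exists, so that $\tau$ carries a genuine Lorentzian induced metric in a neighborhood of the relevant deformation region.

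For part (i), suppose $\gamma^+$ has a kink along some codimension-1 subset $K \subset \gamma^+$. In a tubular neighborhood of $K$, introduce local coordinates on $\tau$ in which $\tau$'s metric is approximately Lorentzian $-dt^2 + dy^2$ and $\gamma^+$ is the graph $t = f(y)$ of a Lipschitz function $f$ whose gradient $\nabla f$ jumps across $K$. The area functional takes the form $\int \sqrt{1 - |\nabla f|^2}\,dy$ to leading order, and since the integrand is strictly concave in $\nabla f$ on the open unit ball, a Jensen-type averaging (equivalently, replacing $f$ near $K$ by a mollification matching the outer boundary data) produces a smooth function $f'$ with strictly larger area integral. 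The corresponding surface $\gamma'$ is obtained by a small modification supported in a neighborhood of $K$, lies in $\tilde{\surfset}_\tau$, and satisfies $|\gamma'| > |\gamma^+|$, contradicting maximality.

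For part (ii), suppose two distinct points $p, q \in \gamma^+$ are null-related in $\M$ via a null geodesic $\alpha$. Consider first the main case where $\alpha$ lies entirely in $\tau$. Because $\gamma^+$ is a limit of acausal codimension-1 slices of $\tau$, it is achronal in the Lorentzian manifold $\tau$. The tangent to $\alpha$ at $p$ is null and hence transverse to $T_p\gamma^+$; if $\alpha$ entered either side of $\gamma^+$ in $\tau$, the usual causality argument in a Lorentzian manifold shows it could not return to $\gamma^+$ at $q$ without producing a timelike curve between two points of $\gamma^+$, violating achronality. Hence $\alpha \subset \gamma^+$, so $\gamma^+$ contains a null segment contributing zero to its area. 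Now deform $\gamma^+$ by pushing $\alpha$ a small distance in a spacelike normal direction inside $\tau$ and smoothly matching it to the rest of $\gamma^+$ away from $\alpha$; the perturbed segment is strictly spacelike with strictly positive length, giving a surface $\gamma' \in \tilde{\surfset}_\tau$ with $|\gamma'| > |\gamma^+|$, a contradiction.

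The main obstacle is the residual case where the ambient null geodesic $\alpha$ from $p$ to $q$ exits $\tau$. Here $\gamma^+$ can be strictly spacelike inside $\tau$ while harboring ambient null-related points, so the in-$\tau$ perturbation above does not directly apply. My plan is to handle this by exploiting the flexibility (``floppiness'') of minimax time-sheets already emphasized in the preceding subsection: one should be able to slightly deform $\tau$ away from $\gamma^+$ so that the deformed time-sheet $\tau'$ contains a curve from $p$ to $q$ that is causally equivalent to $\alpha$, reducing to the previous case and producing a strictly area-increasing deformation of $\gamma^+$ on $\tau'$. Alternatively, one can apply focusing to the null congruence on $\gamma^+$ generated by the null normal aligned with $\dot\alpha$ at $p$: by Raychaudhuri and NEC, the transverse area element is non-increasing along $\alpha$, and comparing its values at $p$ and $q$ (both of which lie in $\gamma^+$) should force a strict area deficit that can be cashed in by a small normal deformation near $q$. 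I expect this second case to be the technically delicate step, since it is the one that genuinely couples the intrinsic geometry of $\gamma^+$ to the ambient causal structure of $\M$.
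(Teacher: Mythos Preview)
Your treatment of part (i) and of the first sub-case of (ii) (when the null relation is realized inside $\tau$, forcing an actual null segment of $\gamma^+$ that can be pushed spacelike) is essentially the paper's argument, with somewhat more analytic detail on why rounding a kink increases area.

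The residual case of (ii), where the null geodesic $\alpha$ leaves $\tau$, is where there is a genuine gap. Your first plan --- deforming $\tau$ so that it contains a causal curve from $p$ to $q$ --- does not work as stated: the ``floppiness'' of minimax time-sheets is only available \emph{away from} $\gamma^+$, but the endpoints $p,q$ of $\alpha$ lie on $\gamma^+$, so any deformation that brings $\alpha$ (or a nearby causal curve) into the time-sheet must modify $\tau$ in neighborhoods of $p$ and $q$ on $\gamma^+$ itself. You then have no guarantee that $\gamma^+$ remains on $\tau'$, let alone maximal there, so the reduction to the previous case collapses.

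Your second plan is closer in spirit but misses the mechanism. Merely observing that the transverse area element is non-increasing along $\alpha$ does not produce a contradiction: it only says the congruence focuses, not that anything is wrong. The paper (following Wall, theorem~14, and Galloway) instead uses that $\gamma^+$ is achronal \emph{in the full spacetime} $\M$ (it lies on a Cauchy slice $\sigma$): shooting the future-directed orthogonal null congruence from $\gamma^+$ near $p$ and the past-directed one from near $q$, achronality of $\gamma^+$ forces an ordering on their expansions where they meet. Focusing then propagates this into a sign constraint on the null expansions at $p$, i.e.\ on the mean curvature of $\gamma^+$ there, and this is what yields the contradiction. The essential input you did not invoke is the spacetime achronality of $\gamma^+$ coming from $\gamma^+=\sigma\cap\tau$; without it, the focusing inequality has nothing to bite against.
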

\begin{proof}
    (i) Let the minimax surface lie on some fully timelike and smooth time-sheet $\ts$, which, as explained at the end of subsection \ref{sec:minimax}, can be always assumed to exist. 
    So at least in the spacelike direction the surface is smooth. The only possibility then is that the surface is kinky as a submanifold on the time-sheet. However, the surface can then be smoothened out in a timelike direction to increase its area, contradicting its maximality. On the other hand, if the time-sheet was not smooth to start with, the maximal surface could inherit a kink. However, in such a case, the minimization over time-sheets step would render this configuration irrelevant, since a nearby time-sheet which resolves the kink would have a smaller-area maximal surface.

    (ii) The proof for this part follows standard arguments (see theorem 14 of \cite{Wall_2014} or claim 1 of \cite{Grado-White:2020wlb}, as well as the older but useful reference \cite{Galloway:1999ny} for details regarding the null congruences). Nonetheless, we quickly go through the argument to show how it gets modified in replacing Cauchy slices with time-sheets. First, we take care of the case where the null segment is itself part of the minimax surface. Let $\bar{\surf}$ be the null portion of $\minimax$. Deform $\minimax$ by applying a small deformation with support on a neighborhood of $\bar{\surf}$, to make it everywhere spacelike. Under this deformation, its area increases, contradicting the maximality of $\minimax$. Second, we consider the case where a null geodesic $\mathpzc{n}$ connects two points $x$ and $y$ on $\minimax$ through spacetime external to $\ts$. By shooting a future directed null congruence from points near $x$ and a past directed null congruence from points near $y$, the achronality condition on $\minimax$ imposes that the expansion of the former is larger than the expansion of the latter. Invoking focusing, this constrains the sign of the expansion at $x$ and a contradiction can be derived involving the mean curvature of $\minimax$ at the point $x$.
    \end{proof}

\begin{corollary}
A stable minimax surface $\minimax$ is extremal.
\end{corollary}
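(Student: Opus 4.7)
The plan is proof by contradiction. Suppose $\minimax$ is stable, witnessed by a minimax pair $(\minimax,\ts)$, but that $\minimax$ fails to be extremal; I will derive a contradiction from the interplay between the stability hypothesis and the defining ``inf over $\ts$'' property of $\ts$.

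By lemma~\ref{lem:no-null-segments} the surface $\minimax$ is smooth and has no null-related points, so it is a bona fide spacelike codimension-two submanifold and its mean curvature vector $H$ is a well-defined section of the normal bundle. Splitting this normal bundle into the timelike direction tangent to $\ts$ and the spacelike direction $n$ normal to $\ts$, maximality of $\minimax$ on $\ts$ kills the first variation under any deformation within $\ts$, so the timelike component of $H$ vanishes. The assumed non-extremality then forces the spacelike component $H_s:=H\cdot n$ to be nonzero on an open subset of $\minimax$.

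Next I would construct a deformation of $\ts$ designed to strictly lower the area of every nearby surface on the deformed time-sheet. Pick a smooth, compactly supported, normal-to-$\ts$ vector field $\eta$ on $\ts$ whose restriction to $\minimax$ is a multiple of $H_s\,n$, with the sign chosen so that the first variation $\delta_\eta|\minimax|>0$, and set $\ts_\epsilon:=\ts-\epsilon\eta$, which remains a valid time-sheet in $\tsset$ for $\epsilon>0$ small. Writing surfaces that lie in $\ts_\epsilon$ and are close to $\minimax$ as graphs $t(x)$ over $\minimax$, a Taylor expansion schematically yields
\be
|\Sigma_{t,\epsilon}|\;\le\;|\minimax|\;-\;c\,\epsilon\;+\;O(\epsilon^2)\;+\;O(\epsilon\|t\|_{C^0}),
\ee
with $c:=\delta_\eta|\minimax|>0$; the leading inequality uses maximality of $\minimax$ on $\ts$ to bound the $\epsilon=0$ piece by $|\minimax|$. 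Fixing a small neighborhood $\mathcal{N}$ of $\minimax$ and then taking $\epsilon$ small, every surface of $\ts_\epsilon$ inside $\mathcal{N}$ has area strictly less than $|\minimax|$.

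The contradiction then drops out from a squeeze. On one hand, because $(\minimax,\ts)$ achieves the minimax, $\sup_{\surf\in\surfset_{\ts_\epsilon}}|\surf|\ge|\minimax|$, so any maximal surface on $\ts_\epsilon$ has area at least $|\minimax|$ and therefore cannot lie in $\mathcal{N}$. On the other hand, stability of $(\minimax,\ts)$, applied to this very choice of $\mathcal{N}$ and $\eta$, produces a maximal surface on $\ts_\epsilon$ \emph{inside} $\mathcal{N}$ for small enough $\epsilon$. These two statements are incompatible, so the non-extremality assumption fails and $\minimax$ is extremal.

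The main technical obstacle I anticipate is uniformity in the Taylor estimate: since $\minimax$ is only assumed to be a weak maximum on $\ts$, there may be nearby surfaces on $\ts$ whose area saturates $|\minimax|$, and a careless expansion would risk having the $O(\epsilon\|t\|_{C^0})$ cross term overwhelm the strictly negative $O(\epsilon)$ piece driven by $H_s$. The remedy is to fix $\mathcal{N}$ first --- so small that $H_s(x,t)$ stays close to $H_s(x,0)$ on it --- and only then choose $\epsilon$, making the cross term subleading.
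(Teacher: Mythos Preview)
Your argument is correct and is essentially the contrapositive of the paper's direct proof: the paper observes that stability yields a continuous family of maximal surfaces on the deformed time-sheets, whose area is minimized at $\epsilon=0$ by the minimax property of $\ts$, so the first variation of the area in the normal-to-$\ts$ direction vanishes. Both proofs combine maximality on $\ts$ (tangent direction) with minimality over time-sheets plus stability (normal direction); your version simply unpacks the second step explicitly by constructing a deformation that would push every nearby surface below $|\minimax|$, which is the negation of the paper's minimality statement.
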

\begin{proof}
Lemma \ref{lem:no-null-segments} ensures that $\minimax$ can be deformed slightly in any direction while remaining achronal. Since it is maximal on the time-sheet $\ts$, it is extremal against deformations tangent to $\ts$. Since $\minimax$ is stable, under small deformations of the time-sheet the maximal surface moves continuously, in a direction not tangent to $\ts$. The maximal area is minimal and therefore extremal with respect to such deformations. Hence the area is extremal against arbitrary deformations.
\end{proof}

Lemma \ref{lem:unique-is-stable} ensures that a stable, and therefore extremal, minimax surface exists.  Now we discuss the case where there are multiple maximal surfaces on a given minimax time-sheet. We will use the following fact about extremal surfaces:

\begin{lemma}\label{not-timelike-related}
  If two minimax surfaces (for the same boundary region) share the same timelike minimax time-sheet, they are not both extremal.
\end{lemma}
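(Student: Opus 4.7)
The plan is to argue by contradiction: assume both $\minimax_1$ and $\minimax_2$ are extremal in $\M$ while being maximal on the common timelike minimax time-sheet $\ts$, and derive an inconsistency via focusing of a null congruence connecting them. The first step exploits the Lorentzian structure of $\ts$: the two surfaces sit inside $\ts$ as codimension-$1$ achronal slices in the same $\ts$-homology class (both homologous within $\ts$ to $\ts \cap D(A)$), so if they are distinct then at least one of them --- say $\minimax_2$ --- must contain points in the $\ts$-timelike future of $\minimax_1$. Because any timelike curve inside $\ts$ is also timelike in $\M$, this immediately gives points of $\minimax_2$ in $I^{+}_{\M}(\minimax_1)$.

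Next I would produce a null geodesic $\mathpzc{n}$ running from some $p\in\minimax_1$ to some $q\in\minimax_2$, with the property that $\mathpzc{n}$ is tangent to $\minimax_2$ at $q$ --- equivalently, its tangent vector $k$ at $q$ is one of $\minimax_2$'s two null normals. The natural route is to take $q\in\minimax_2\cap\partial J^{+}_{\M}(\minimax_1)$: the structure theorem for achronal boundaries then tells us that $\partial J^{+}_{\M}(\minimax_1)$ is, near $q$, a null hypersurface ruled by null generators emanating orthogonally from $\minimax_1$, and the achronality of $\minimax_2$ in $\M$ forces $T_{q}\minimax_2\subset T_{q}\partial J^{+}_{\M}(\minimax_1)$; since $T_{q}\partial J^{+}_{\M}(\minimax_1)$ is exactly the $\M$-orthogonal complement of $k$, this places $k$ in the normal bundle of $\minimax_2$ at $q$, as required. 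The focusing contradiction is then immediate: the expansion $\theta$ of the generator congruence along $\mathpzc{n}$ equals the null-normal expansion of $\minimax_1$ at $p$ (vanishing by extremality of $\minimax_1$) and coincides with the null-normal expansion of $\minimax_2$ at $q$ (vanishing by extremality of $\minimax_2$, using that $k$ is a null normal of $\minimax_2$ there), while the NEC together with the generic condition force $\theta$ to be strictly decreasing along $\mathpzc{n}$ --- impossible over a nontrivial affine interval.

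The step I expect to be the main obstacle is producing the tangential contact point $q$ cleanly. If $\minimax_2$ happens to lie entirely in the open set $I^{+}_{\M}(\minimax_1)$, the intersection $\minimax_2\cap\partial J^{+}_{\M}(\minimax_1)$ can be empty and one has to work a little harder: either flow $\minimax_1$ slightly toward $\minimax_2$ within $\ts$ and use compactness of $\minimax_2$ in the regulated spacetime of section \ref{sec:regulator} to extract a limiting tangential contact, or else argue in the two-sided fashion used in the proof of lemma \ref{lem:no-null-segments}, shooting a future null congruence from $\minimax_1$ and a past null congruence from $\minimax_2$ and locating a common null geodesic that joins them with matched normals. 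Once any such contact configuration is in hand, the focusing step above closes the argument and shows that $\minimax_1$ and $\minimax_2$ cannot both be extremal.
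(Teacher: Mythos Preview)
Your approach is genuinely different from the paper's. The paper gives a global area-comparison argument: assuming both $\minimax_1,\minimax_2$ are extremal, it intersects their entanglement wedges to form $\mathcal{R}=\ew_1\cap\ew_2$, whose bulk boundary $\partial\mathcal{R}$ is a time-sheet homologous to $D(A)$; the edge $\minimax_3=\hor_1\cap\hor_2$ is maximal on $\partial\mathcal{R}$ and, by focusing, has strictly smaller area than both $\minimax_i$ --- contradicting that $|\minimax_i|=S_+(A)$ is the minimax value. The shared timelike time-sheet is used only to ensure that $\minimax_1,\minimax_2$ are (at least partly) timelike-related in $\M$, so that $\minimax_3$ is genuinely distinct from both and the inequality is strict.

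Your local focusing argument has a real gap at the step you treat as routine. The assertion that the expansion of the congruence shot from $\minimax_1$ \emph{coincides} at $q$ with the null-normal expansion of $\minimax_2$ is not correct: even when $k$ is a null normal to $\minimax_2$ at $q$, these are expansions of two different congruences, and tangential contact matches them only to first order, not second. What a touching argument actually delivers is a one-sided \emph{inequality} between the two expansions (the maximum principle for $C^0$ null hypersurfaces, \`a la Galloway), and the direction of that inequality depends on which side of $\partial J^{+}(\minimax_1)$ the surface $\minimax_2$ lies on locally. Relatedly, your justification that achronality of $\minimax_2$ forces $T_q\minimax_2\subset T_q\partial J^{+}(\minimax_1)$ does not hold: a spacelike $\minimax_2$ can cross the null hypersurface $\partial J^{+}(\minimax_1)$ transversally (along a spacelike direction with $g(v,k)\neq 0$) without violating its own achronality, so a generic $q\in\minimax_2\cap\partial J^{+}(\minimax_1)$ need not be a tangential contact at all. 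To make your route work you would need a genuine barrier/first-contact argument producing a one-sided tangency with the correct sign --- precisely the obstacle you flag but do not resolve. The paper's global construction sidesteps all of this.
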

\begin{proof}
Let $\minimax_1$ and $\minimax_2$ be the two minimax surfaces for the boundary region $A$, sharing the same timelike minimax time-sheet $\ts$. Suppose they are both extremal. Then, except on connected components where they might coincide, $\minimax_1$ and $\minimax_2$ must be disjoint (see corollary B.4 of \cite{Grado-White:2024gtx}).

Consider the entanglement wedges $\ew_{1}$ and $\ew_{2}$ of $\minimax_1$ and $\minimax_2$, whose boundaries $\hor_1$ and $\hor_2$ (the entanglement horizons) are the null hypersurfaces generated by orthogonally shooting a null congruence from $\minimax_1$ and $\minimax_2$ in the direction of $D(A)$. Since $\minimax_{1,2}$ are extremal, by focusing, they are the maximal area surfaces on $\hor_1$ and $\hor_2$ respectively. Next, consider the region
\be
    \mathcal{R} := \ew_1 \cap \ew_2,
\ee
obtained by intersecting the entanglement wedges of $\minimax_1$ and $\minimax_2$. The bulk part of the boundary $\partial \mathcal{R}$ of this region will be a null hypersurface made of components of $\hor_1$ and $\hor_2$. Because both $\hor_1$ and $\hor_2$ are homologous to $D(A)$, $\partial \mathcal{R}$ will be as well, by construction. Furthermore, $\partial \mathcal{R}$ will contain the codimension two surface $\minimax_3 = \hor_1 \cap \hor_2$ whose area must be smaller than both $\minimax_1$ and $\minimax_2$ by focusing. The surface $\minimax_3$ will also be maximal on $\partial \mathcal{R}$ since areas of cross sections decrease along $\hor_1$ and $\hor_2$. But this is a contradiction to the minimality of $\minimax_1$ and $\minimax_2$, since $\minimax_3$ is a maximal surface on an hypersurface homologous to $D(A)$, with less area than both $\minimax_1$ and $\minimax_2$.   
\end{proof}

The above result guarantees that, if there are multiple maximal surfaces on a given minimax time-sheet, at most one of them is extremal. On the other hand, by lemma \ref{lem:unique-is-stable}, at least one of them is stable, and thus extremal. Once again this exemplifies the characterization of stability, that a minimax time-sheet \emph{must} pass through an extremal (and therefore stable) surface, i.e.\ the existence of a stable/extremal surface is \emph{necessary} to support the  existence of a minimax time-sheet. We can sum the above results up in the following two theorems.

\begin{theorem}\label{thm:neitherboth}
    A minimax surface is either both extremal and stable or neither, and furthermore at least one stable and extremal minimax surface exists.
\end{theorem}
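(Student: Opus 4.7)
The plan is to assemble the theorem directly from the preceding results, since both halves of the biconditional and the existence claim have essentially been proved in the intermediate lemmas and the discussion around them; the theorem plays the role of a packaging statement.

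For the biconditional, I would handle the two directions separately. The direction extremal $\Rightarrow$ stable is already argued in the paragraph immediately preceding Lemma \ref{lem:unique-is-stable}: if $\minimax$ is the unique maximal surface on $\ts$, then extremality kills the first-order area variation under any normal deformation of $\ts$ and continuity of the area functional ensures a nearby maximal surface persists; if instead $\minimax$ shares its time-sheet with degenerate maximal surfaces, one applies the construction of figure \ref{fig:instability} to bump the time-sheet so that $\minimax$ is the unique maximal surface on the deformed $\ts'$, making $(\minimax,\ts')$ stable. The direction stable $\Rightarrow$ extremal is precisely the corollary following Lemma \ref{lem:no-null-segments}, which combines the smoothness/achronality properties of $\minimax$ from Lemma \ref{lem:no-null-segments} with maximality (giving extremality tangent to $\ts$) and stability (giving extremality transverse to $\ts$, since a continuously varying maximum of a continuously varying functional is itself extremized).

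For existence, I would combine Lemma \ref{lem:unique-is-stable} with the biconditional just established. By assumption, a minimax time-sheet $\ts$ exists (this was stated in subsection \ref{sec:minimax}). Lemma \ref{lem:unique-is-stable} guarantees that at least one maximal surface on $\ts$ is stable, i.e.\ is a stable minimax surface; by the biconditional, it is also extremal. This produces the desired stable and extremal minimax surface.

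Two small points to flag. First, strictly speaking the extremal $\Rightarrow$ stable argument should be stated with care when there are degenerate maximal surfaces: Lemma \ref{not-timelike-related} implies that at most one of the degenerate maximal surfaces on a given minimax time-sheet can be extremal, so the extremal one is genuinely isolated among the extremal configurations and the bumping trick goes through without ambiguity. Second, there is no real obstacle in this proof — all the analytic work has been done upstream. The only thing I would want to double-check while writing it up is that the use of Lemma \ref{lem:no-null-segments} inside the stable $\Rightarrow$ extremal step really does allow deformations in every normal direction (in particular, the timelike one transverse to $\ts$), so that extremality against arbitrary variations can legitimately be concluded.
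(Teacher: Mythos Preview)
Your proposal is correct and matches the paper's approach: the theorem is explicitly presented there as a packaging statement (``We can sum the above results up in the following two theorems''), with no separate proof environment, and your assembly from the extremal $\Rightarrow$ stable discussion, the stable $\Rightarrow$ extremal corollary, and Lemma \ref{lem:unique-is-stable} is exactly the intended logic. Your two flagged points are also in line with the paper's own caveats (the role of Lemma \ref{not-timelike-related} in isolating the extremal surface, and the use of Lemma \ref{lem:no-null-segments} to ensure unobstructed deformations).
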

\begin{theorem}\label{thm:minimax=HRT}
    A stable minimax surface is an HRT surface.
\end{theorem}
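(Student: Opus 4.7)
The plan is to combine the forward direction (that stable minimax implies extremal) with a reverse comparison showing that the minimax area is a lower bound on the areas of all extremal surfaces homologous to $A$.

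First, I would invoke theorem \ref{thm:neitherboth} to know that a stable minimax surface $\minimax$ is extremal. This already places $\minimax$ in the correct competition class: the extremal surfaces spacelike-homologous to $A$ (relative to $\surf^0$). It remains only to check that $|\minimax|$ is the smallest area in that class, which is exactly the defining property of an HRT surface.

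The key step is the following. Let $\tilde{\surf}$ be any other extremal surface spacelike-homologous to $A$. Shoot the pair of null congruences orthogonal to $\tilde{\surf}$ in the two directions pointing toward $D(A)$, and continue them until they reach $\I^{+}\cup\I^{-}\cup\surf^0$ (or focus to a caustic). Their union forms a piecewise null hypersurface $\tilde{\hor}$ bounding a codimension-$0$ region whose boundary on the conformal boundary is exactly $D(A)$; hence $\tilde{\hor}$ is homologous to $D(A)$ relative to $\I^{+}\cup\I^{-}$, i.e.\ it is a valid time-sheet (in $\tsset_A$). Because $\tilde{\surf}$ is extremal, the expansion of each of its orthogonal null congruences vanishes at $\tilde{\surf}$; by focusing, the expansions are non-positive along $\tilde{\hor}$, so cross-sectional areas along $\tilde{\hor}$ are maximized at $\tilde{\surf}$. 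In particular, in the closure $\surfset_{\tilde{\hor}}$ (which includes null sections, cf.\ footnote \ref{achronal-issues-footnote}), the supremum of the area is attained and equals $|\tilde{\surf}|$.

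With that in hand, the minimax property of $\minimax$ immediately gives
\begin{equation}
|\minimax| = \infp_{\ts\in\tsset_A}\sup_{\surf\in\surfset_\ts}|\surf| \;\leq\; \sup_{\surf\in\surfset_{\tilde{\hor}}}|\surf| \;=\; |\tilde{\surf}|.
\end{equation}
Since $\tilde{\surf}$ was an arbitrary extremal surface homologous to $A$, $\minimax$ achieves the minimum area among such surfaces and is therefore an HRT surface.

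The only slightly delicate step is the construction of $\tilde{\hor}$ and the verification that it is an admissible time-sheet: one must check that the null congruences truly reach the allowed boundary pieces $\I^{\pm}\cup\surf^0$, and that places where the congruences caustic or cross can be smoothed (or joined along a spacelike seam, as seams are permitted by the definition of time-sheet) without breaking the homology condition. This is the standard entanglement-horizon construction used implicitly throughout the paper, and the focusing / NEC / generic-condition assumptions laid out at the start of section \ref{sec:max_to_min} are precisely what make it go through.
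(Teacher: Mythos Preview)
Your proof is correct and follows essentially the same approach as the paper: invoke theorem \ref{thm:neitherboth} for extremality, then observe that every extremal surface is maximal on its own entanglement horizon (a valid time-sheet), so the minimization over time-sheets forces $|\minimax|$ to be the least area among extremal surfaces. The paper's version is terser, simply asserting that each $\gamma_i\in\sets{E}_A$ is maximal on its entanglement horizon and then invoking the minimization step, whereas you spell out the focusing argument and the homology check for $\tilde{\hor}$ explicitly.
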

\begin{proof}
Let $A$ be a boundary region. Let $\sets{E}_A$ be the set of extremal surfaces in the correct homology class for $A$. Since the minimax surface $\minimax$ has been proven to be extremal, we have
    \begin{equation}
        \sets{E}_A = \{\gamma_1,\gamma_2,\dots,\minimax\}.
    \end{equation}
    To show that the minimax surface coincides with the HRT surface we have to show that it is the least area surface in $\sets{E}_A$. Note that each one of the $\gamma_i$ lives on a time-sheet (the entanglement horizon) on which it is maximal. So clearly, by the minimization step of minimax, $\minimax$ is the least area surface in $\sets{E}_A$.
\end{proof}

Of course, these theorems do not preclude a region $A$ having multiple, degenerate minimax/HRT surfaces. However, these surfaces must be maximal on different time-sheets. In subsection \ref{sec:nesting} below, we will prove a stronger result: degenerate minimax surfaces must be fully spacelike-separated. The same result is also obtained from maximin, where it is shown that degenerate HRT surfaces share a common Cauchy slice \cite{Wall_2014,Grado-White:2024gtx}.

Unstable minimax surfaces are uninteresting for our purposes, so from now on by \emph{minimax surface} we mean \emph{stable minimax surface}.

\subsection{Relaxed minimax}
\label{sec:relaxation}

While the definition of minimax given above follows in natural analogy to maximin, we now define a modification that, while less symmetric with maximin, will be useful in proving later results. In particular, we will relax the achronality conditions in the maximization step. In the definition of minimax used in previous sections, we considered surfaces in $\surfset_\ts$ that are achronal in the full spacetime $\M$. Here, we instead consider the set of acausal surfaces \emph{within} the time-sheet $\ts$, regarded as a Lorentzian manifold with its induced metric. The maximization step will be performed over the closure of this set, which we call $\newsurfset$. (The closure is taken for the same reasons explained in footnote \ref{achronal-issues-footnote}
; in particular, mere achronality within a time-sheet would, for partly null time-sheets, permit the surface to cross the same generator arbitrarily many times, leading to a divergence in its area.) We call this new minimax prescription \textit{relaxed minimax}, defined as
\be\label{eq:relaxed-minimax}
     \tilde{S}_{+}(A) = \adjustlimits \inf_{\ts \in \tsset} \sup_{\surf \in \newsurfset} |\surf|.
\ee
As noted in subsection \ref{sec:minimax}, the definition of time-sheet implies that $\ts$ intersects $\gamma^0$ (if the latter is non-empty). A general acausal surface on $\ts$ does not necessarily intersect $\gamma^0$, but a maximal one will, so the relaxed-minimax surface does as well.

It is straightforward to see that the set $\newsurfset$ is larger than the original minimax set $\surfset_\ts$: let $\surf\in \surfset_\ts$, then $\surf = \ts \cap \bkslice$ for some Cauchy slice $\bkslice$. The slice separates the spacetime $\mathcal{M}$, so $\surf$ separates the time-sheet $\ts$ and therefore is in $\newsurfset$. The converse is clearly not true; it is easy to construct a time-sheet $\ts$ and a surface $\surf\in \newsurfset$ that does not lie on a Cauchy slice. Since this is a weaker constraint, we have
\be
    \sup_{\surf \in \newsurfset} |\surf| \geq \sup_{\surf \in \surfset_\ts}|\surf|
\ee
and hence
\be\label{eq:weaker-constraint}
    \adjustlimits \inf_{\ts \in \tsset}\sup_{\surf \in \newsurfset} |\surf| \geq \adjustlimits \inf_{\ts \in \tsset}\sup_{\surf \in \surfset_\ts}|\surf|\,.
\ee
So at first glance we might expect the resulting output to be a surface with greater area to the original minimax protocol. Here, however, we show that under usual holographic conditions (where NEC is obeyed), the two formulas coincide. In particular, we show that every minimax/HRT surface is a relaxed-minimax surface 
and vice-versa (so in particular, the two formulas output surfaces with the same area). 
\begin{lemma}\label{lemma:relaxed_equal_area}
The relaxed-minimax surface $\newminimax$ has the same area as the HRT surface $\minimax$,
\be
\tilde{S}_+=S_{+}. 
\ee 
Furthermore, the HRT surface is a relaxed-minimax surface. 
\end{lemma}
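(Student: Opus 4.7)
The plan is to prove both halves of the lemma in one stroke by evaluating the relaxed minimax on the entanglement horizon $\hor$ of the HRT surface $\minimax$, and showing that $\minimax$ itself maximizes the area over $\newsurfset$ when $\ts = \hor$. Since \eqref{eq:weaker-constraint} already gives $\tilde{S}_{+} \geq S_{+} = |\minimax|$, only the matching upper bound must be established, together with the claim that the infimum in \eqref{eq:relaxed-minimax} is realized on $\hor$ by $\minimax$.

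The central step is to argue that, for $\ts = \hor$,
\begin{equation}
\sup_{\surf \in \newsurfset} |\surf| = |\minimax|.
\end{equation}
By theorem \ref{thm:neitherboth}, $\minimax$ is extremal, so the null congruence generating $\hor$ leaves $\minimax$ orthogonally with vanishing expansion. Raychaudhuri combined with NEC then forces the expansion to remain non-positive along each generator, so the transverse area element is monotonically non-increasing as one moves along $\hor$ away from $\minimax$. Hence every cross section of $\hor$ (a surface crossing each null generator at most once) has area at most $|\minimax|$, and $\minimax$ itself saturates this bound. The closure in the definition of $\newsurfset$ does not enlarge the supremum, since limits of such cross sections inherit the area bound by continuity. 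Choosing $\ts = \hor$ in \eqref{eq:relaxed-minimax} therefore yields $\tilde{S}_{+} \leq |\minimax| = S_{+}$, which combined with \eqref{eq:weaker-constraint} gives the asserted equality; simultaneously, $\minimax$ attains the supremum on $\hor$, qualifying it as a relaxed-minimax surface.

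The main obstacle will be making precise the set $\newsurfset$ for the null time-sheet $\hor$, since its induced metric is degenerate rather than Lorentzian and the literal definition of ``acausal within $\ts$'' given before \eqref{eq:relaxed-minimax} does not apply. As anticipated in footnote \ref{achronal-issues-footnote}, the closure prescription is precisely what allows null time-sheets to be included, and it must be verified that in this case the closure reduces to the set of cross sections of the null generators, possibly with partial null segments or kinks. An alternative that sidesteps this subtlety would be to work with a smooth, fully timelike minimax time-sheet through $\minimax$; but then one needs a more delicate comparison between arbitrary surfaces in $\newsurfset$ and $\minimax$ using null congruences in the ambient $\M$, so the entanglement horizon remains the cleanest choice.
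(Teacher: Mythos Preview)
Your proposal is correct and follows essentially the same route as the paper: use \eqref{eq:weaker-constraint} for the lower bound, then evaluate the relaxed minimax on the entanglement horizon $\hor$ and invoke focusing to show $\minimax$ is maximal among surfaces in $\newsurfset$ there, giving the matching upper bound and exhibiting $\minimax$ as a relaxed-minimax surface. Your added remarks on the subtlety of defining $\newsurfset$ on a null time-sheet are well taken and go slightly beyond what the paper spells out, but the core argument is identical.
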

\begin{proof}
From \eqref{eq:weaker-constraint}, one always has that 
\be\label{eq:newsurf-geq-surf}
|\newminimax| \geq |\minimax|.
\ee
Now consider the HRT surface $\minimax$, and consider the minimizing time-sheet formed by its entanglement horizon $\hor$. Any other achronal-in-$\hor$ surface which intersects each generator at most once will have less area than the HRT surface, by focusing. Therefore, the HRT surface is the maximal spacelike codimension-2 surface on $\hor$. Thus, it is a valid candidate maximal surface for the relaxed minimax prescription. Therefore, we must have 
\be\label{eq:newsurf-leq-surf}
    |\newminimax| \leq |\minimax|.
\ee
Combining equations \eqref{eq:newsurf-geq-surf} and \eqref{eq:newsurf-leq-surf}, $|\newminimax| = |\minimax|$. Further, it follows from the above discussion that $\minimax$ is the maximal area surface in $\Gamma_{\hor}$, and has no more area than any other  $\tilde{\surf}\in \newsurfset$ on any other time-sheet. Thus, $\minimax$ is a  relaxed-minimax surface. 
\end{proof}

\begin{figure}
    \centering
    \includegraphics[width = 0.6\textwidth]{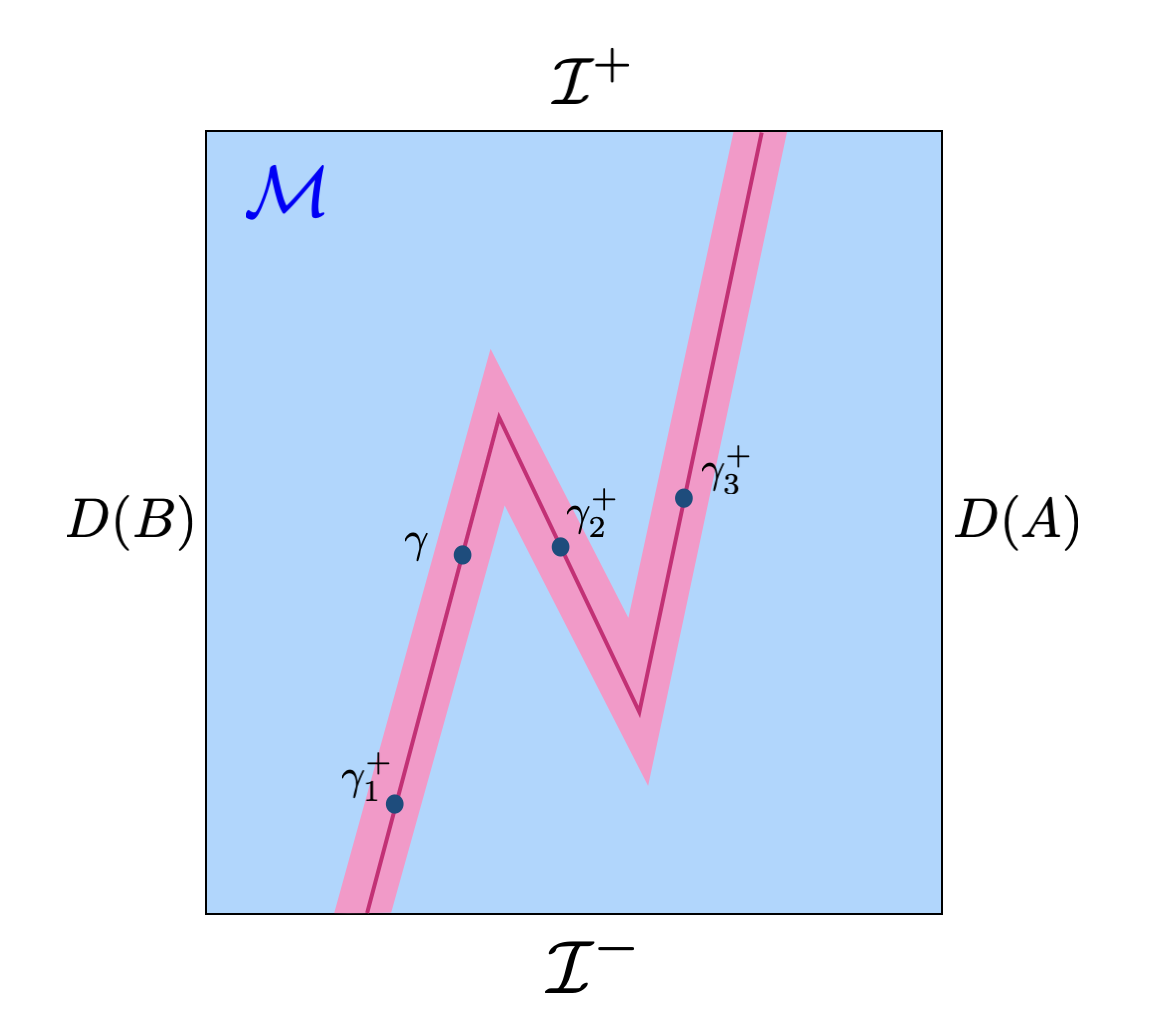}
    \caption{Penrose diagram of a NEC-violating spacetime where $S_{\text{relaxed-minimax}} \neq S_{\text{minimax}}$. The spacetime $\mathcal{M}$ is made of a sphere $\mathsf{S}^{d-2}$ fibered over Minkowski. Outside of the pink region, we choose the area of the sphere $\mathsf{S}^{d-2}$ to be very large, so any time-sheet outside of the pink region will not be a minimizing time-sheet and can be discarded. Time-sheets in this pink region must be made of three segments seamed together. Suppose $\minimax_1,\minimax_2$ and $\minimax_3$ are the minimax surfaces in the bottom, middle and top segments. Then the relaxed-minimax surface is $\minimax_1 \cup \minimax_2 \cup \minimax_3$ since they are achronal through the time-sheet. However, $\minimax_1$ and $\minimax_2$ are not achronal in $\M$ and so it is not a valid minimax surface. The minimax surface will be $\gamma \cup \minimax_2 \cup \minimax_2$ where $\gamma$ is the maximal surface on the bottom segment that is achronal to both $\minimax_2$ and $\minimax_2$, which will have smaller area than $\minimax_1$.}
    \label{fig:counterexample}
\end{figure}

Notice that the NEC was a fundamental assumption for the above result: there exist more general spacetimes where the areas of relaxed-minimax surfaces and minimax surfaces differ; see figure \ref{fig:counterexample} for an example. It is important to stress that lemma \ref{lemma:relaxed_equal_area} does not preclude the existence of relaxed-minimax surfaces that are not minimax/HRT. A priori the set of relaxed minimax surfaces may be larger, including surfaces $\newminimax$ that are non-achronal in the full spacetime. These surfaces would still be extremal, since (i) they arise from a minimaximization procedure, and further (ii) they can be deformed freely without obstructions (lemma \ref{lem:no-null-segments} ensures they can't contain null segments, though they are allowed to have null-related points through $\mathcal{M}$ by definition). However, we now prove that under standard holographic assumptions non-achronal relaxed minimax surfaces cannot exist, thereby establishing, perhaps surprisingly, the equivalence between these two homology constraints.

\begin{theorem}\label{thm:relaxed-achronal}
An extremal relaxed-minimax surface must be achronal.
\end{theorem}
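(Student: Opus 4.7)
The plan is to argue by contradiction, adapting the focusing argument of Lemma \ref{lem:no-null-segments}(ii) to drop its in-$\M$ achronality hypothesis. Assume $\newminimax$ is an extremal relaxed-minimax surface on a time-sheet $\ts$ that contains two distinct points $x, y$ with $y \in I^+(x)$. Since $\newminimax$ is by construction acausal within $\ts$, any timelike curve from $x$ to $y$ must exit $\ts$ and traverse $\M \setminus \ts$, so the causal pathology lives outside the time-sheet. This places us in the same regime in which the proof of Lemma \ref{lem:no-null-segments}(ii) operated, except that the offending pair is now strictly timelike- rather than merely null-related, a strictly stronger condition that should produce a sharper contradiction.

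The first main step is to extract from this timelike violation a null geodesic $\n \subset \M \setminus \ts$ connecting a point $x' \in \newminimax$ near $x$ to a point $y' \in \newminimax$ near $y$, orthogonal to $\newminimax$ at $x'$. Because $I^+(x)$ is open, a whole neighborhood of $y$ in $\newminimax$ lies in $I^+(x)$; moving outward along $\newminimax$ past this neighborhood, continuity of the causal structure forces a crossing of $\partial J^+(x)$, where null-related candidate points $y'$ appear. A limiting argument over the resulting family of null connecting geodesics, centered on $x'$ chosen as a critical point of the arrival time function along $\newminimax$, produces $\n$ with the desired orthogonality. The transversality of the crossing, guaranteed by the strict timelike separation of $x$ and $y$, will fix the expansion signs in the focusing step below.

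Next I would run focusing. Fire the future-directed null congruence orthogonal to $\newminimax$ from a neighborhood of $x'$, having $\n$ as a central generator. Extremality at $x'$ forces $\theta_+(x') + \theta_-(x') = 0$, so one expansion is non-positive; the transversality just established selects $\theta_+(x') \leq 0$. By the Raychaudhuri equation together with NEC and the generic condition, $\theta_+$ strictly decreases along $\n$; a mirror past-directed congruence from a neighborhood of $y'$ produces the reverse bound on $\theta_-$ at $y'$. Combined with extremality at $y'$, these inequalities reproduce the kind of contradiction extracted in the null-related case of Lemma \ref{lem:no-null-segments}(ii), but now strengthened because $y'$ lies in the interior of $J^+(x')$ rather than on its light-cone boundary, so the inequalities become strict with margin.

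The main obstacle is the first step: arranging for the limiting null geodesic $\n$ to sit cleanly in $\M \setminus \ts$ and be simultaneously orthogonal to $\newminimax$ at its endpoints, rather than, say, spiraling back through $\ts$, hitting $\gamma^0$, or escaping to the conformal boundary. A secondary subtlety is that the local maximality-in-$\ts$ condition interacts nontrivially with the in-$\M$ extremality condition: one must verify that the focusing manipulation does not conflict with $\newminimax$ being only acausal within $\ts$. This should be handled by working in a small enough neighborhood of $x'$ and $y'$, where $\ts$ can be taken smooth and timelike and the standard extremal-surface focusing machinery goes through unchanged.
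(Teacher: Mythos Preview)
Your approach has a genuine gap, and it is not only in the first step you flag. The core of Lemma~\ref{lem:no-null-segments}(ii) is that the \emph{achronality of the surface in $\M$} is what fixes the signs of the null expansions at the two endpoints of the connecting geodesic: the future congruence from $x$ cannot cross to the past of the surface near $y$, and vice versa, and it is this crossing constraint that forces $\theta_+(x')\le0$ versus $\theta_-(y')\ge0$ in the right combination to contradict extremality. For a relaxed-minimax surface, achronality in $\M$ is precisely the conclusion you are after, so you cannot invoke it. Your substitute --- ``transversality of the crossing of $\partial J^+(x)$ selects $\theta_+(x')\le 0$'' --- does not do this job: the fact that $\newminimax$ near $y'$ dips into $I^+(x)$ tells you something about how the \emph{light cone of the point $x$} meets $\newminimax$ near $y'$, not about which of the two null expansions of $\newminimax$ at $x'$ is non-positive. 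Extremality gives only $\theta_+ + \theta_- = 0$, and nothing in your setup breaks the symmetry.

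More structurally, your argument never uses the infimum over time-sheets; you work only with extremality and maximality on a fixed $\ts$. The paper's proof is global and of a different character: one sets $J=J^+(\newminimax)\cup J^-(\newminimax)$, shows (using maximality on $\ts$) that $\ts\subset J$, and then takes the component of $\M\setminus J$ adjacent to $D(A)$ as a new homology region, whose bulk boundary $\hat\ts$ is a null time-sheet in $\tsset_A$. Focusing from the extremal $\newminimax$ makes the ridge $\hat\gamma=\hat\ts^+\cap\hat\ts^-$ maximal on $\hat\ts$; the assumed non-achronality forces the future and past light-sheets from different pieces of $\newminimax$ to intersect, so $\hat\gamma\neq\newminimax$ and $|\hat\gamma|<|\newminimax|$. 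That strict inequality contradicts the \emph{minimization} in the relaxed-minimax prescription. The min step appears to be essential input here; a purely local focusing contradiction of the kind you outline does not seem to be available.
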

\begin{proof}
Let $\rmms$ be an extremal relaxed-minimax surface for region $A$ and $\rmmt$ its minimax time-sheet in $\tsset$. Proceeding by contradiction as in the proof of lemma \ref{not-timelike-related}, we now argue that the lack of achronality of $\rmms$ allows us to construct a time-sheet $\nmmt$ on which the maximal surface $\nmms$ has strictly smaller area than $\rmms$, a contradiction from the latter's assumed minimax property.

Define $J:=J^+(\rmms)\cup J^-(\rmms)$. The boundary of $J$ is made up of congruences of null geodesics shot orthogonally from $\rmms$. Since $\rmms$ is extremal, the expansion of the congruence is initially zero, and by focusing it is everywhere non-positive. Therefore no geodesic can reach $\mathcal{N}$, since its expansion would be infinite there. Therefore $J$ does not intersect $\mathcal{N}$.

Next, we show that $J$ contains $\rmmt$. Suppose there is a point $x\in \rmmt\setminus J$. Since $J$ is closed, there is an open neighborhood $\mathcal{O}$ of $x$ outside $J$. There is an achronal surface $\surf_\mathcal{O}\subset\mathcal{O}\cap \rmmt$. Then $\rmms\cup\surf_\mathcal{O}$ is achronal in $\rmmt$ and has larger area than $\rmms$. This contradicts the assumption that $\rmms$ is the maximal achronal surface in $\rmmt$.

\begin{figure}
    \centering
    \includegraphics[width=\linewidth]{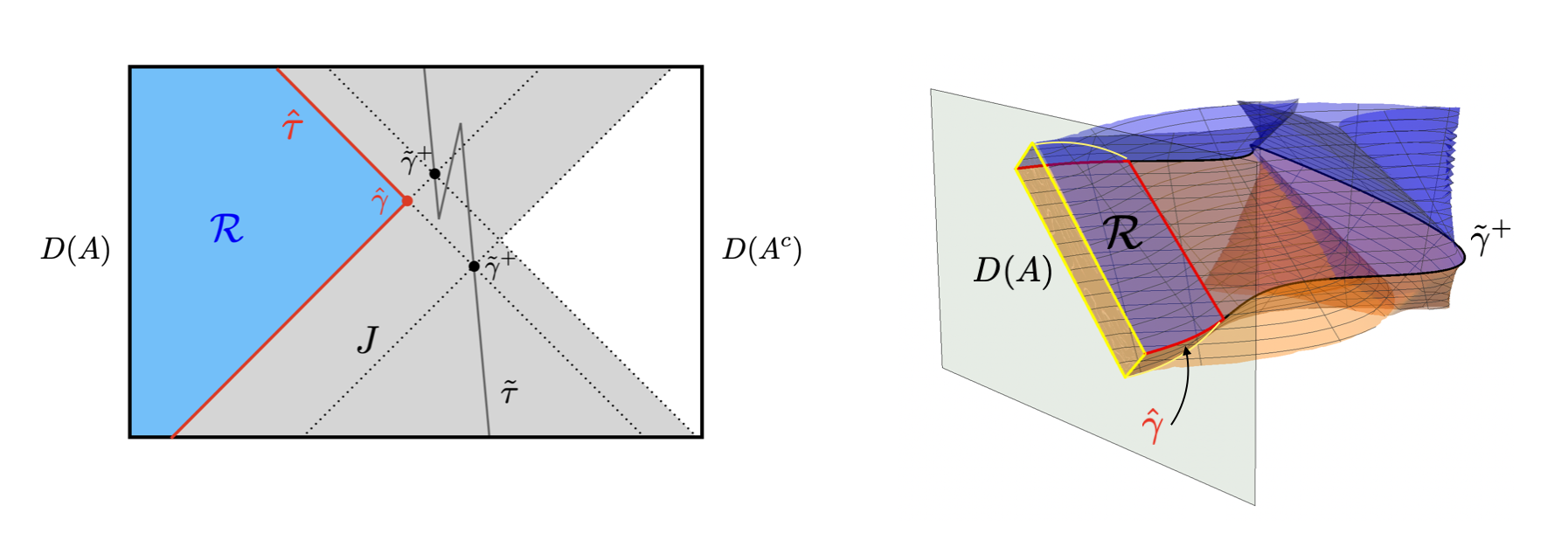}
    \caption{Setup of theorem \ref{thm:relaxed-achronal}. {\bf Left:} We have drawn a cartoon of the proof in the regulated spacetime, by taking a 1 + 1 dimensional cross section. The jagged gray line is  the minimax time-sheet $\rmmt$ which contains two causally related parts of the relaxed minimax surface $\rmms$ (black dots). From $\rmms$ we shoot orthogonal null congruences to define the set $J$, shown as the shaded area in gray. The part of the complement of $J$ that is path connected to $D(A)$ is shown in blue as the region $\mathcal{R}$, whose boundary is the time-sheet $\nmmt$ (red line) containing $\nmms$ (red dot). {\bf Right:} Here we have drawn the setup in the unregulated spacetime, where $\rmms$ meets $A$ (its domain of dependence shown in yellow) at the boundary (green plane). The future and past directed null congruences are shown in blue and orange respectively, and due to the non-achronality of $\rmms$, these sheets self-intersect, enclosing the volume $\mathcal{R}$ which meets the boundary at $\partial D(A)$. So the time-sheet defined as $\nmmt:= \partial \mathcal{R}\setminus\N$ is homologous to $D(A)$. We show in solid red the edge of $\nmmr$, which defines the surface $\mathcal{R}$ composed of some parts of $\rmms$ and a part of the intersection of the lightsheets.}
    \label{fig:intersecting-ew}
\end{figure}

Next, we use $J$ to define a time-sheet. By the homology condition on $\rmmt$, every path connecting $D(A)$ to $D(A^c)$ intersects $\rmmt$. Since $\rmmt\subset J$, every such path intersects $J$. Let $\mathcal{R}$ be the part of the complement of $J$ that is path-connected to $D(A)$. $\mathcal{R}$ touches $\mathcal{N}$ precisely on $D(A)$. Therefore the bulk part of its boundary, which we call $\nmmt$, is homologous to $D(A)$. It is a subset of the boundary of $J$, and is therefore made of null congruences. Hence it is a time-sheet in $\tsset$. Figure \ref{fig:intersecting-ew} shows the setup built so far.

Next, we find the maximal surface on $\nmmt$. Let $\nmmt^\pm$ be the parts of $\nmmt$ that bound $\mathcal{R}$ to the future and past respectively, and define $\nmms:=\nmmt^+\cap\nmmt^-$. This is a spacelike, achronal surface. Every point on $\nmmt$ is connected by a null geodesic to a point on $\nmms$, and, by focusing, the generators of $\nmmt$ have non-positive expansion. Therefore $\nmms$ is maximal on $\nmmt$.

Finally, we show that $|\nmms|<|\rmms|$, which is contradicts the assumption that $\rmms$ is minimax. First, we note that we cannot have $\nmms=\rmms$, since the former is achronal while the latter is not. A part of $\nmms$ that is disjoint from $\rmms$ lies both on a future-directed null congruence starting on $\rmms$ and on a past-directed null congruence starting on a different part of $\rmms$. By focusing, the area of the former is less than or equal to the area of each of the latter, and therefore less than their sum. On the other hand, if there is no part of $\nmms$ that is disjoint from $\rmms$, i.e.\ if $\nmms\subset\rmms$, then $\rmms\setminus\nmms$ is non-empty. Either way, we conclude that $|\nmms|<|\rmms|$.
\end{proof}

\begin{corollary}
    A relaxed-minimax surface is a minimax surface.
\end{corollary}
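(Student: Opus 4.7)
The plan is to show that a relaxed-minimax surface $\rmms$, arising on a relaxed-minimax time-sheet $\rmmt$, is itself a minimax surface in the sense of \eqref{Splus}, and that $\rmmt$ plays the role of a minimax time-sheet. Two ingredients are already in place: the discussion preceding theorem \ref{thm:relaxed-achronal} argues that $\rmms$ is extremal (it is maximal on $\rmmt$ and, by lemma \ref{lem:no-null-segments}, admits unobstructed small deformations in every direction), while theorem \ref{thm:relaxed-achronal} itself establishes that an extremal $\rmms$ is in fact achronal throughout $\mathcal{M}$, not merely inside $\rmmt$.

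The central step is to promote $\rmms$ from an element of the enlarged set $\Omega_{\rmmt}$ into the original set $\surfset_{\rmmt}$; concretely, I would exhibit a bulk Cauchy slice $\sigma$ of $\mathcal{M}$ satisfying $\sigma \cap \rmmt = \rmms$. Because $\rmms$ is an acausal codimension-$2$ submanifold of the globally hyperbolic spacetime $\mathcal{M}$, standard extension results guarantee that $\rmms$ can be completed to some acausal Cauchy slice. The delicate point is to arrange that this slice does not spuriously recross $\rmmt$ away from $\rmms$. To handle this I would use the fact that $\rmms$ partitions $\rmmt$ into a past half and a future half, together with the freedom to bend $\sigma$ in the timelike (or null) directions transverse to $\rmmt$ near $\rmms$, so that $\sigma$ lies in the causal future of the past half of $\rmmt$ and in the causal past of the future half.

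Once $\rmms \in \surfset_{\rmmt}$ is secured, the conclusion follows from the inequalities already assembled in lemma \ref{lemma:relaxed_equal_area}. Since $\surfset_{\rmmt} \subseteq \Omega_{\rmmt}$, maximality of $\rmms$ over the larger set implies maximality over the smaller, giving
\be
\sup_{\surf \in \surfset_{\rmmt}} |\surf| \;=\; |\rmms| \;=\; \tilde{S}_+ \;=\; S_+\,,
\ee
where the second equality uses that $\rmmt$ attains the relaxed-minimax infimum and the third is lemma \ref{lemma:relaxed_equal_area}. Thus $\rmmt$ achieves the infimum in \eqref{Splus}, making it a minimax time-sheet, and $\rmms$ is a maximal surface on it, hence a minimax surface. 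Stability, and therefore compatibility with the refined notion of minimax surface adopted after theorem \ref{thm:minimax=HRT}, is automatic from extremality via the corollary to lemma \ref{lem:no-null-segments}.

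The main obstacle will be the slice-extension step: producing a Cauchy slice whose intersection with a possibly non-smooth, piecewise timelike-or-null $\rmmt$ is exactly $\rmms$. For a smooth, everywhere timelike $\rmmt$ this is essentially routine, but additional care is needed at seams of $\rmmt$ and along any null portions, where one must rule out spurious re-intersections of $\sigma$ with $\rmmt$ far from $\rmms$. Once this geometric construction is handled cleanly, the remainder of the argument is simply a rearrangement of facts already established in lemma \ref{lemma:relaxed_equal_area} and theorem \ref{thm:relaxed-achronal}.
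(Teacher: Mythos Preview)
Your approach is correct and matches the paper's implicit argument: the corollary is stated there with no proof, as an immediate consequence of theorem~\ref{thm:relaxed-achronal} together with lemma~\ref{lemma:relaxed_equal_area}, along exactly the lines you spell out.

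One simplification is worth noting: the slice-extension step you flag as the ``main obstacle'' is in fact automatic and requires no explicit construction. Once theorem~\ref{thm:relaxed-achronal} gives that $\rmms$ is achronal in $\mathcal{M}$, take \emph{any} Cauchy slice $\sigma$ containing $\rmms$. Then $\sigma\cap\rmmt$ is acausal in $\mathcal{M}$ (being a subset of the acausal $\sigma$) and hence acausal in $\rmmt$, so $\sigma\cap\rmmt\in\Omega_{\rmmt}$. Since $\rmms\subseteq\sigma\cap\rmmt$, any strict excess would produce an element of $\Omega_{\rmmt}$ with area larger than $|\rmms|$, contradicting maximality of $\rmms$ in $\Omega_{\rmmt}$. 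Hence $\sigma\cap\rmmt=\rmms$, so $\rmms\in\surfset_{\rmmt}$ without any bending of $\sigma$ or special handling of seams and null portions. (This is the same mechanism used in the second paragraph of the proof of theorem~\ref{thm:relaxed-achronal}.) The rest of your chain of inequalities then goes through as written. A minor point: for the stability claim at the end, the relevant reference is the observation preceding lemma~\ref{lem:unique-is-stable} (extremal $\Rightarrow$ stable), or equivalently theorem~\ref{thm:neitherboth}, rather than the corollary to lemma~\ref{lem:no-null-segments}, which runs in the opposite direction.
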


Hence the relaxed and original minimax prescriptions coincide. Henceforth, for ease of notation, we will use $\minimax$ to refer to minimax or relaxed minimax surfaces interchangeably. Note that once again, the NEC and other holographic assumptions were crucial in establishing this. To best appreciate this equivalence, consider rewriting the relaxed minimax prescription as an extremization procedure analogous to HRT. For HRT, one extremizes over codimension-2 spacelike surfaces that are spacelike homologous to the boundary region. For relaxed minimax, one drops the spacelike homology constraint and extremizes simply over spacelike surfaces that are homologous to the boundary region \emph{through a larger collection of codimension-1 slices}. In general these are different constraints, but as we have seen they are one and the same in the holographic context, a remarkable statement.\footnote{We defer the discussion of interesting consequences of this equivalence to the discussion in section \ref{sec:discussion}.}

\section{Properties of minimax surfaces}\label{sec:properties}

In this section, we present various properties of minimax surfaces, some of which are novel results while others are new minimax-based proofs of known results.

\subsection{The entanglement wedge is the smallest minimax homology region} 

\begin{figure}
    \centering
    \includegraphics[width=0.6\textwidth ]{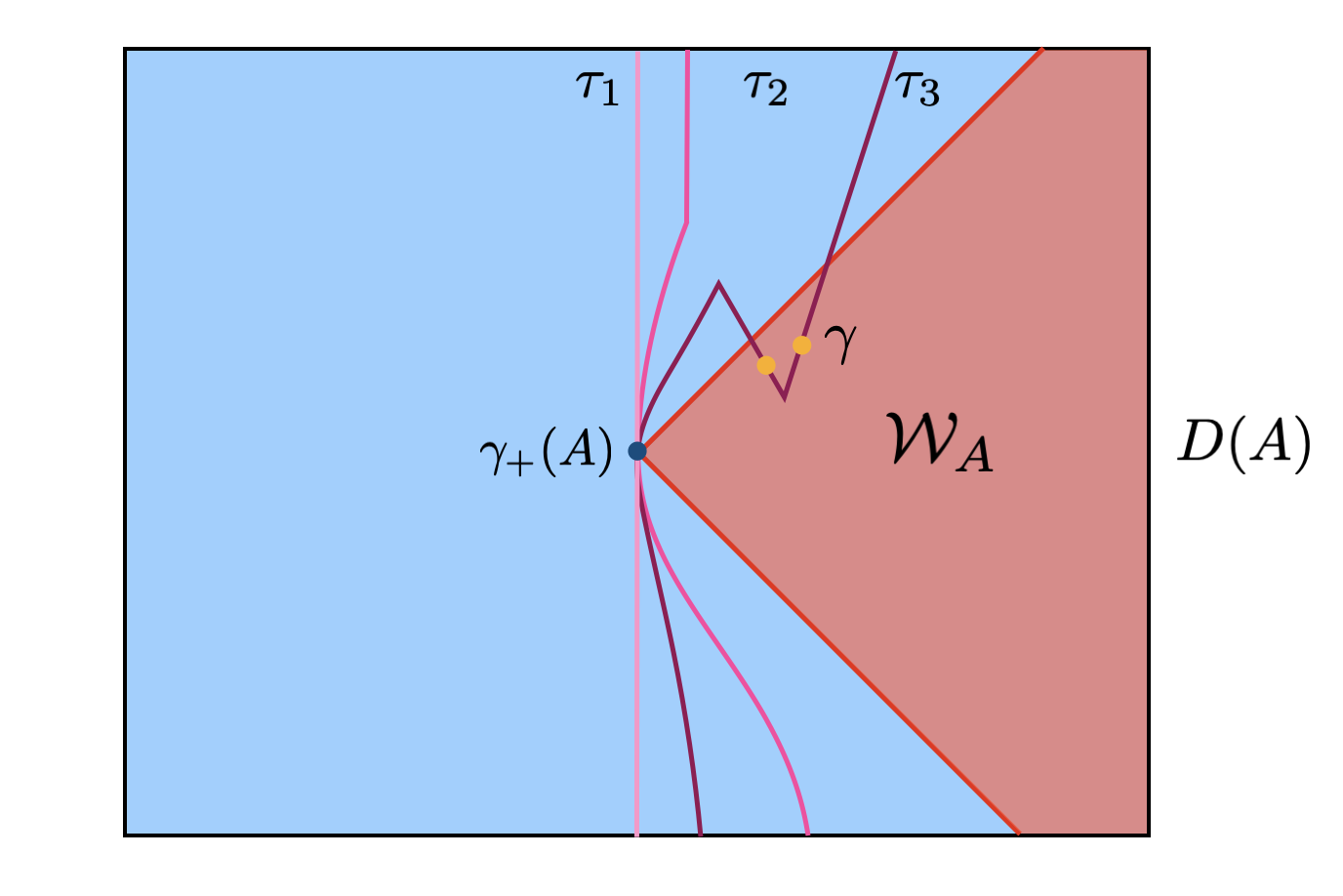}
    \caption{Figure illustrating the proof of theorem \ref{thm:ew_smallest} through the use of a Penrose diagram in 1+1 dimensions. The minimax surface $\minimax(A)$ is shown as a blue dot, its entanglement horizon $\hor(A)$ is depicted by the pair of null red lines, and its entanglement wedge $\ew_A$ is the shaded red region bordered by $\hor(A)$. We also show three different types of time-sheets containing $\minimax(A)$ in various shades of pink. $\tau_1$ is smooth and timelike while $\tau_2$ is piecewise-timelike with a seam; their respective homology regions $R_1(A)$ and $R_2(A)$ both contain $\ew_A$. On the other hand, the third time-sheet, $\tau_3$, has tent-like seams which allow it to enter $\ew_A$ --- however, in doing so, it admits an additional surface $\gamma$ (shown in yellow) achronal to $\minimax(A)$, which leads to contradiction between $\minimax(A)$ being minimax and $\tau_3$ being a minimizing time-sheet.
   }
    \label{fig:ew-smallest-proof}
\end{figure}

In this section, we prove a theorem that constrains the types of minimax time-sheets that can arise in the minimax prescription. As discussed in subsection \ref{sec:minimax}, the set of minimax time-sheets is quite large: far from $\minimax$ the time-sheet can be deformed quite a bit while keeping $\minimax$ maximal on it. In the set $\tsset_A$ of time-sheets homologous to $D(A)$ there exist ``kinky'' time-sheets that contain $\minimax$ and penetrate the entanglement wedge $\ew_A$ (for example, they may feature tent-shaped spacelike seams that allow the time-sheet to change direction in time, see figure \ref{fig:ew-smallest-proof} for an example of such a time-sheet). Below we show that these time-sheets are never in the minimax set. In other words, we show that the entanglement wedge is the smallest minimax homology region, namely the homology region bounded by a minimax time-sheet that is contained in all the others. This result is particularly suggestive in the context of bulk reconstruction, as it reinforces the natural expectation that the reconstructible bulk region should be, in some sense, the smallest one (a sharp example where this expectation has appeared before is in the presence of degenerate HRT surfaces for the same boundary region, in which case the true entanglement wedge is believed to be the one contained in all the others \cite{Grado-White:2024gtx}).

\begin{theorem}\label{thm:ew_smallest}
Given a boundary region $A$ and minimax surface $\minimax(A)$, the entanglement wedge $\ew_A$ is the smallest spacetime homology region among homology regions bounded by minimax time-sheets.
\end{theorem}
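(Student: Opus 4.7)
The natural approach is a proof by contradiction: assume some minimax time-sheet $\tau$ bounds a spacetime homology region $R(A)$ with $\ew_A \not\subset R(A)$, and then construct a codimension-2 surface on $\tau$ whose area strictly exceeds $|\minimax(A)|$, violating the minimax property of $\tau$. As a preliminary step, one can observe that $\ew_A \not\subset R(A)$ forces $\tau$ to penetrate $\text{int}(\ew_A)$: since $\text{int}(\ew_A)$ is connected and accumulates on $A \subset \partial R(A)$, points of $\text{int}(\ew_A)$ near $A$ lie in $R(A)$, so any point of $\ew_A \setminus R(A)$ can only be reached from $A$ along a path that crosses $\tau$ transversally within $\ew_A$.

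The core step is to produce an extra spacelike codimension-2 surface $\gamma \subset \tau \cap \text{int}(\ew_A)$, disjoint from $\minimax(A)$, such that $\minimax(A) \cup \gamma$ is a legal element of the maximization set. I would work inside the relaxed-minimax reformulation of subsection \ref{sec:relaxation}, so it suffices to show that $\minimax(A) \cup \gamma$ is achronal within $\tau$. For the tent-like intrusion depicted in figure \ref{fig:ew-smallest-proof}, $\gamma$ is naturally realized by the spacelike seam where the time-orientation of $\tau$ reverses; it lies in $\text{int}(\ew_A)$ and is separated from $\minimax(A)$ along $\tau$ by intervening timelike portions, so the union is achronal within $\tau$. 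Once this is in hand, $|\minimax(A) \cup \gamma| = |\minimax(A)| + |\gamma| > |\minimax(A)|$ contradicts $\sup_{\surf \in \newsurfset}|\surf| = |\minimax(A)|$ on a minimax time-sheet. Combined with the already-established fact (from \cite{Headrick:2022nbe}, recalled in subsection \ref{sec:minimax}) that $\hor(A)$ is itself a minimax time-sheet bounding $\ew_A$, this identifies $\ew_A$ as the smallest minimax homology region.

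The main obstacle is rigorously producing $\gamma$ for arbitrary minimax time-sheets, not just the explicitly seamed tent examples. A smooth, everywhere-timelike $\tau$ could in principle dip briefly into $\ew_A$ without any visible spacelike seam, so that the tent-top identification of $\gamma$ does not literally apply. To overcome this, I would combine focusing on $\hor(A)$, the extremality of $\minimax(A)$, and the acausal-in-$\tau$ structure of a maximal slice of $\tau$, to argue that whenever a time-sheet enters $\ew_A$ a maximal achronal-in-$\tau$ surface must include a positive-area contribution from $\tau \cap \text{int}(\ew_A)$ disjoint from $\minimax(A)$; if necessary, one could first slightly perturb $\tau$ to expose a seam within the minimax family, using the ``floppiness'' of minimax time-sheets noted in subsection \ref{sec:minimax}, and then pass to the limit.
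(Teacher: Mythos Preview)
Your overall strategy---assume some minimax time-sheet $\tau$ enters $\text{int}(\ew_A)$ and derive a contradiction by exhibiting a surface on $\tau$ with area strictly larger than $|\minimax(A)|$---is exactly what the paper does. The difference is that you route the argument through the relaxed-minimax formulation and then spend most of your effort worrying about how to locate the extra piece $\gamma$ (seams versus smooth $\tau$, perturbation and focusing as fallback). The paper bypasses all of this with a single observation you are missing.

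The key point is that every point of $\text{int}(\ew_A)$ is \emph{spacelike-separated from $\minimax(A)$ in the full spacetime}, simply because $\ew_A$ is the causal domain of the spacelike homology region and $\minimax(A)$ sits on its edge. Hence, if $\tau$ intersects $\text{int}(\ew_A)$ at some point $p$, there exists a bulk Cauchy slice $\sigma$ containing both $\minimax(A)$ and $p$. Then $\sigma\cap\tau$ is, by definition, an element of the \emph{original} maximization set $\Gamma_\tau$; it contains all of $\minimax(A)$ (since $\minimax(A)\subset\tau$ and $\minimax(A)\subset\sigma$) together with an additional nonempty component $\gamma$ near $p$. This immediately gives $|\sigma\cap\tau|\ge|\minimax(A)|+|\gamma|>|\minimax(A)|$, contradicting maximality of $\minimax(A)$ on $\tau$.

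This argument is indifferent to whether $\tau$ is seamed or smooth and timelike, so the obstacle you identify in your last paragraph is illusory. By switching to relaxed minimax you traded the automatic spacetime-achronality of $\sigma\cap\tau$ for the burden of checking achronality-within-$\tau$ of a hand-built disconnected surface, and then had to invent machinery (focusing, floppiness, limits) to handle cases that the Cauchy-slice argument already covers uniformly.
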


\begin{proof}
To reach a contradiction suppose that $\ew_A$ was not the smallest homology region bounded by a minimizing time-sheet. Then, there must be a minimax time-sheet $\ts$ that intersects the interior of $\ew_A$ (which can happen in situations such as  depicted by $\tau_3$ in  figure \ref{fig:ew-smallest-proof}).  This means there is a bulk Cauchy slice $\sigma$ containing both $\minimax(A)$ and an extra (spatially null-homologous) surface $\gamma$ in the part of $\ts$ inside the interior of $\ew_A$.  The resulting achronal multi-component surface $\minimax(A) \cup \gamma = \sigma \cap \tau \in \Gamma_{\ts}$ then has manifestly larger area than $\minimax(A)$, which precludes $\minimax(A)$ from being minimax, contradicting our starting assumption.  Hence $\ts$ could have not been a minimizing time-sheet. 
\end{proof}

If there are degenerate surfaces for the same boundary region, the above proof applies independently to each surface, ensuring that for each one, there cannot be minimax time-sheets that penetrate the surface's entanglement wedge. Then, the smallest homology region will correspond to the smallest entanglement wedge.

\subsection{The causal information surface has larger area than the minimax surface}

In this subsection we give a minimax-based proof of a known result, concerning the causal holographic information $\chi(A)$ of a boundary region $A$, which is defined as the area of the codimension-2 spacelike rim of the causal wedge --- the bulk region that can be causally reached from $D(A)$  \cite{Hubeny:2012wa}. This surface is known as the causal information surface, and denoted by $\Xi_A$.

The causal information surface satisfies various global properties summarized in \cite{Hubeny:2013gba}, two important ones being that it always lies closer to the boundary than the HRT surface \cite{Hubeny:2012wa, Headrick:2014cta, Wall_2014} (a fact crucial for establishing compatibility of holographic entanglement with field theory causality) and that it is the minimal area codimension-2 surface lying on the boundary of the causal wedge.  

The causal holographic information has been argued to represent some sort of coarse-grained entropy \cite{Hubeny:2012wa, Kelly:2013aja}. Therefore, it is natural to expect that generally $\chi(A) \geq S(A)$. This is clear for static states: both the RT surface and the causal information surface lie on the same time-reflection symmetric slice, so the inequality follows from the global minimality of the RT surface on the slice. For HRT, general arguments in favor of the bound were made in \cite{Hubeny:2012wa}, and a proof using the maximin prescription was given in \cite{Wall_2014}. The minimax formulation allows for a very simple (and slightly more direct) proof, which we now present.

\begin{theorem}\label{lem:causal-info-surface-thm}
    Given a boundary region $A$, the causal holographic information $\chi(A)$ upper-bounds the holographic entanglement entropy $S(A)$.
\end{theorem}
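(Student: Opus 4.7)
The plan is to exhibit a specific time-sheet $\tau \in \tsset_A$ whose maximal achronal cross-section has area $|\Xi_A|$, and then apply the minimax formula. The natural candidate is $\tau := \partial C(A) \cap \M$, the bulk portion of the boundary of the causal wedge of $A$. This is a piecewise null hypersurface comprising two null pieces, generated by ingoing null congruences from $\partial D(A)$ into the bulk, that meet along the causal information surface $\Xi_A$. It is manifestly homologous to $D(A)$ relative to $\I^+ \cup \I^-$, so it lies in $\tsset_A$.

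The heart of the argument is to show that $\sup_{\surf \in \surfset_\tau} |\surf| = |\Xi_A|$, i.e.\ that $\Xi_A$ attains the maximum cross-sectional area on $\tau$. This should follow from the focusing theorem applied to the null generators of the two null pieces of $\tau$: the NEC guarantees non-increasing expansion along each generator, and the piecewise null structure, with $\Xi_A$ serving as the ``waist'' where the two null sheets join from the outside, naturally places $\Xi_A$ as the extremal cross-section. The appropriate Raychaudhuri bound on each null piece then yields the area of any cross-section of $\tau$ being bounded above by $|\Xi_A|$.

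Given this, the result follows at once from the minimax formula:
\begin{equation}
S(A) \;=\; \inf_{\tau' \in \tsset_A}\sup_{\surf \in \surfset_{\tau'}}|\surf| \;\leq\; \sup_{\surf \in \surfset_\tau}|\surf| \;=\; |\Xi_A| \;=\; \chi(A).
\end{equation}
The main obstacle will be making the focusing step rigorous: the orientation of the null generators and the initial conditions at $\Xi_A$ must be analyzed carefully to confirm that $\Xi_A$ is a \emph{maximum} of cross-sectional areas, and one must deal carefully with the corner where the two null pieces meet. If $\Xi_A$ itself is not realizable as $\sigma \cap \tau$ for any acausal Cauchy slice $\sigma$ of $\M$, then one can fall back on the relaxed-minimax formulation of lemma \ref{lemma:relaxed_equal_area} (which coincides with the original minimax under the NEC), where $\Xi_A$ is naturally identified as the maximizer since it separates the two null pieces of $\tau$ from within $\tau$.
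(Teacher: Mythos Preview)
Your overall strategy---exhibit a null time-sheet on which $\Xi_A$ is the maximal achronal cross-section and then invoke the minimax formula---is exactly the right one, and it is also what the paper does. However, you have chosen the wrong time-sheet, and this is a genuine error, not just a technicality.

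On $\tau = \partial C(A) \cap \M$, the causal information surface $\Xi_A$ is the \emph{minimal} cross-section, not the maximal one. The null generators of the causal wedge boundary run between $\Xi_A$ and the conformal boundary (they terminate on $\partial D(A)$), where cross-sectional areas diverge. Focusing, applied in the correct direction, tells you that the expansion of these generators is positive as you move from $\Xi_A$ toward the boundary: the congruence must expand to reach $\N$. Hence every other cross-section of $\partial C(A)$ has area at least $|\Xi_A|$, and $\sup_{\surf\in\surfset_\tau}|\surf|$ is infinite (or at least not $|\Xi_A|$). Your minimax inequality then degenerates to the useless $S(A)\le\infty$. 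You flagged precisely this orientation issue as ``the main obstacle,'' and indeed it is fatal for this choice of $\tau$.

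The fix, which the paper carries out, is to use the \emph{other} pair of null sheets through $\Xi_A$: from $\Xi_A$ shoot the future- and past-directed null congruences \emph{away} from $D(A)$, i.e.\ take the portions of $\partial I^{\pm}[D(A)]$ lying outside the causal wedge (part of the boundary of the causal shadow). This hypersurface is still homologous to $D(A)$ relative to $\I^+\cup\I^-$. Now the area-theorem direction works for you: the generators of $\partial I^{\pm}[D(A)]$ must expand toward the boundary set $D(A)$ from which they emanate, hence they contract as one moves away from $\Xi_A$ along this new $\tau$, and $\Xi_A$ is genuinely maximal on it. The rest of your argument then goes through verbatim.
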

\begin{proof}
    The proof is constructive. To show that $\chi(A) \geq S(A)$ we will define a time-sheet which: (i) contains the causal information surface $\Xi_A$, (ii) is spacetime-homologous to $D(A)$, and (iii) on which $\Xi_A$ is maximal. Given such a time-sheet, the result will follow since $\Xi_A$ will be a (not necessarily minimal) candidate minimax surface.
    
    The most straightforward guess for this special time-sheet would be the boundary of the causal wedge, however this can be quickly ruled out since $\Xi_A$ is minimal, not maximal, on it. To fix this, consider the boundaries of the future and past bulk domains of influences of $D(A)$ (whose intersection forms the causal information surface) and keep only the portions of these two boundaries that are not part of the causal wedge, i.e.\ the two null sheets that start from $\Xi_A$ and continue to the future and to the past away from $D(A)$ (in other words, part of the boundary of the causal shadow \cite{Headrick:2014cta}). This null hypersurface is homologous to $D(A)$, and further, $\Xi_A$ is maximal on it: by focusing (in this case used analogously to the area theorem), the null congruences through $\Xi_A$ must expand towards the boundary in order to provide generators of $\partial I^\pm[D(A)]$, a set defined from $\N$. Since $\minimax_A$ is obtained by minimizing over all time-sheets, $|\Xi_A| \geq |\minimax_A|$ and the theorem follows. See figure \ref{fig:causal-surface-thm} to see a visual of the construction.
\end{proof}

\begin{figure}
    \centering
    \includegraphics[width=0.8\textwidth]{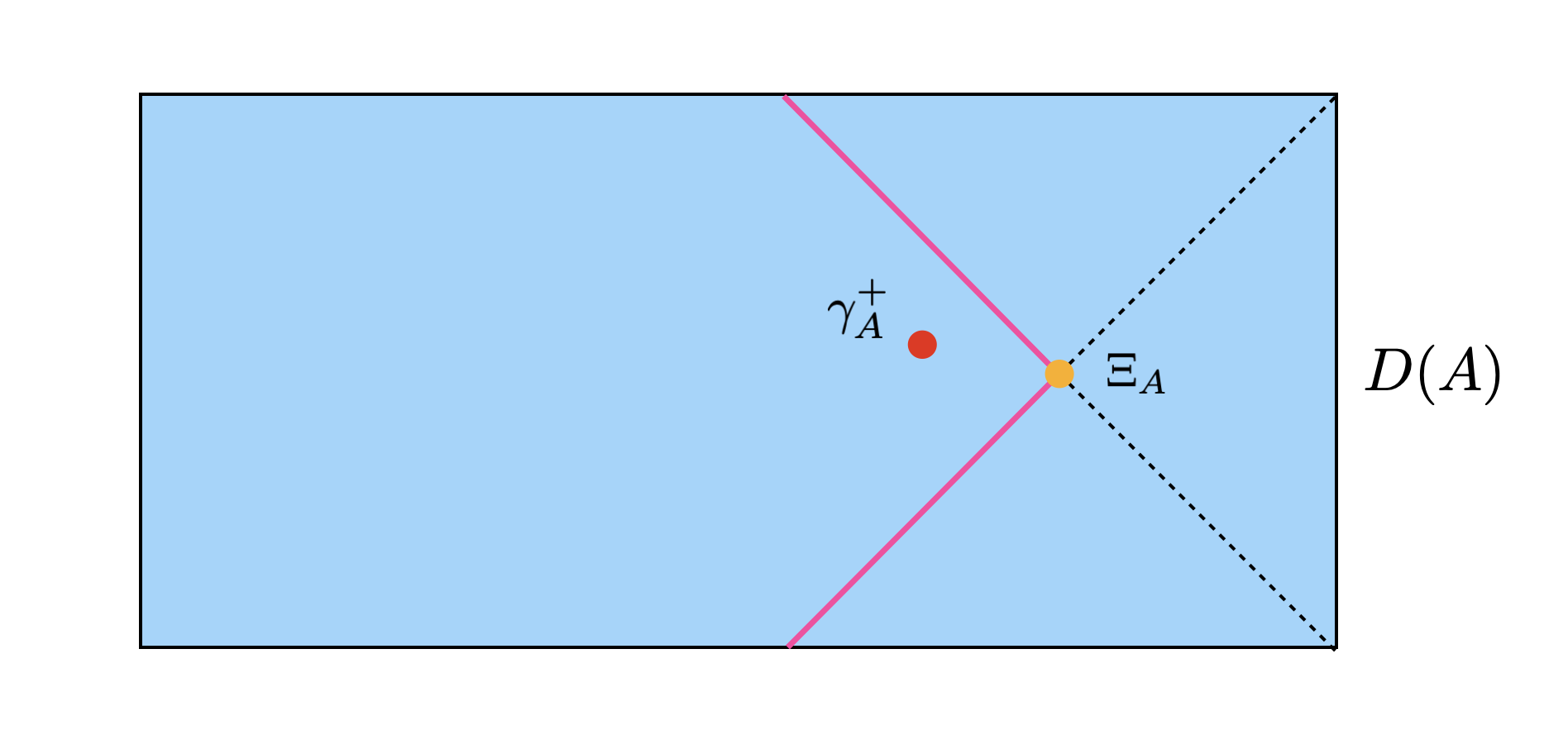}
    \caption{Schematic illustration of the construction in the proof of lemma \ref{lem:causal-info-surface-thm}. The causal information surface $\Xi_A$ is shown in yellow, while the minimax surface $\minimax$ is shown in red. The black dotted lines indicate the boundary of the causal wedge, and the pink solid lines form the time-sheet on which $\Xi_A$ is maximal.}
    \label{fig:causal-surface-thm}
\end{figure}

\subsection{Entanglement wedge nesting \& causality of minimax surfaces} \label{sec:nesting}

We showed in subsection \ref{sec:stability} that two minimax surfaces for the same boundary region cannot lie on the same time-sheet. We will now strengthen this statement by showing that they must be fully spacelike-related. We will show this as a consequence of entanglement wedge nesting.
 
\begin{theorem}
Given boundary regions $A,B$ with $D(B)\subseteq D(A)$, we have $\ew_B \subseteq \ew_A$.
\end{theorem}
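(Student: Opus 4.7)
The plan is to perform surgery on minimax homology regions and then invoke theorem \ref{thm:ew_smallest}. Take $R(A) = \ew_A$, bounded by $\hor_A$, and let $R(B)$ be any minimax homology region for $B$, bounded by some minimax time-sheet $\ts_B$. I would form the candidate regions
\begin{equation*}
R_\cap := \ew_A \cap R(B), \qquad R_\cup := \ew_A \cup R(B),
\end{equation*}
whose bulk-boundary hypersurfaces $\ts_\cap := \partial R_\cap \setminus \N$ and $\ts_\cup := \partial R_\cup \setminus \N$ are each assembled from pieces of $\hor_A$ and $\ts_B$ meeting along the intersection locus $\hor_A \cap \ts_B$.

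First I would verify the homology classes: since $D(B) \subseteq D(A)$, one has $\partial R_\cap \cap \N = D(B)$ and $\partial R_\cup \cap \N = D(A)$, so $\ts_\cap \in \tsset_B$ and $\ts_\cup \in \tsset_A$ are valid time-sheets. The minimax prescription then immediately gives the lower bounds $\sup_{\surf \in \Omega_{\ts_\cap}}|\surf| \geq S(B)$ and $\sup_{\surf \in \Omega_{\ts_\cup}}|\surf| \geq S(A)$, where I work in the relaxed-minimax formulation of subsection \ref{sec:relaxation}.

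The crux, which I expect to be the main obstacle, is the matching submodularity upper bound
\begin{equation*}
\sup_{\surf \in \Omega_{\ts_\cap}}|\surf| + \sup_{\surf \in \Omega_{\ts_\cup}}|\surf| \leq S(A) + S(B).
\end{equation*}
I would establish this by an exchange argument: as point sets, $\ts_\cap \cup \ts_\cup$ and $\hor_A \cup \ts_B$ agree after cutting along the common seam $\hor_A \cap \ts_B$. Given maximizers $\surf_\cap, \surf_\cup$ on the two surgeried time-sheets, I would split each along this seam and regroup the pieces into candidate surfaces $\tilde\surf_A$ on $\hor_A$ and $\tilde\surf_B$ on $\ts_B$ of equal total area. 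The regrouped surfaces lie in $\Omega_{\hor_A}$ and $\Omega_{\ts_B}$, i.e.\ are achronal within their respective time-sheets though not necessarily in the ambient spacetime, which is precisely why passing to the relaxed formulation (lemma \ref{lemma:relaxed_equal_area}, theorem \ref{thm:relaxed-achronal}) is essential. Focusing along the null generators of $\hor_A$ bounds $|\tilde\surf_A| \leq |\minimax_A| = S(A)$, and the minimax property of $\ts_B$ bounds $|\tilde\surf_B| \leq |\minimax_B| = S(B)$, yielding the desired inequality.

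Combining the two bounds forces equality, so $\ts_\cap$ is itself a minimax time-sheet for $B$ and $R_\cap \subseteq \ew_A$ is a minimax homology region for $B$. Applying theorem \ref{thm:ew_smallest} then yields $\ew_B \subseteq R_\cap \subseteq \ew_A$, as desired. The bookkeeping in the exchange step is the delicate part: the regrouped surfaces may suffer codimension-3 mismatches along $\hor_A \cap \ts_B$ that must be closed by zero-area fillers (or handled via a continuity/approximation argument), and achronality within each time-sheet must be verified with particular care where the null seams of $\hor_A$ meet $\ts_B$.
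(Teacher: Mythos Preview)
Your route is genuinely different from the paper's. The paper argues directly by focusing, in the style of lemma~\ref{not-timelike-related} and theorem~\ref{thm:relaxed-achronal}: it forms $J=J^+(\minimax_A)\cup J^-(\minimax_A)\cup J^+(\minimax_B)\cup J^-(\minimax_B)$, takes the component $\mathcal{R}$ of the complement connected to $D(B)$, and shows that the null time-sheet $\hat\ts=\partial\mathcal{R}\setminus\N$ carries a maximal surface $\hat\gamma$ with $|\hat\gamma|<|\minimax_B|$ unless $\mathcal{R}=\ew_B$, reaching a contradiction. No cut-and-paste of surfaces is needed; focusing along the generators of $\hat\ts$ does all the work.

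Your submodularity-plus-theorem~\ref{thm:ew_smallest} strategy is attractive, but the exchange step has a real gap that the relaxed formulation does not close. The issue is achronality \emph{within the time-sheet}, which relaxed minimax still demands. Take a null generator $g$ of $\hor_A$. Since $R(B)$ need not be causally convex (and even if you set $R(B)=\ew_B$, the complement $\ew_B^c$ is not), the generator $g$ can meet $\hor_A\cap R(B)$ in one segment and $\hor_A\cap R(B)^c$ in two (or more) disjoint segments. Achronality of $\surf_\cup$ in $\ts_\cup$ then does \emph{not} force $\surf_\cup$ to hit $g$ at most once: the two $\ew_B^c$-segments of $g$ are generically not joined by any causal curve in $\ts_\cup$, since the only available bridge lies in $\ts_B\cap\ew_A^c$ and there is no reason for those seam points to lie on a common generator of $\ts_B$. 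Consequently the regrouped $\tilde\surf_A$ can intersect $g$ several times, so $\tilde\surf_A\notin\Omega_{\hor_A}$ and the bound $|\tilde\surf_A|\le S(A)$ fails. (By contrast, the ``zero-area filler'' worry you flag is harmless: $\Omega_\ts$ allows disconnected acausal sets, as used in the proof of theorem~\ref{thm:relaxed-achronal}, so no closing-up is required.)

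In short, the submodularity inequality $\sup_{\Omega_{\ts_\cap}}+\sup_{\Omega_{\ts_\cup}}\le S(A)+S(B)$ is morally a two-time-sheet cooperating statement for the specific pair $(\hor_A,\ts_B)$, and you would need extra input---either causal-convexity control on how generators of $\hor_A$ meet $\ts_B$, or an argument that a \emph{maximal} surface on $\ts_\cup$ never exploits such multi-crossings---to make it go through. The paper's focusing construction sidesteps this entirely.
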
 

\begin{proof}
The proof follows a similar strategy employed in theorem \ref{thm:relaxed-achronal}. Let $\minimax_A$ and $\minimax_B$ be the minimax surfaces for the two boundary subregions $A$ and $B$, with minimax time-sheets $\ts_A\in \tsset_A$ and $\ts_B\in \tsset_B$. Define $J_A := J^+(\minimax_A) \cup J^{-}(\minimax_A)$, $J_B:= J^+(\minimax_B) \cup J^{-}(\minimax_B)$ and $J := J_A \cup J_B$. 

By theorem \ref{thm:ew_smallest}, $\ts_A \subset J_A$ and $\ts_B \subset J_B$, so $J$ contains both of them. By the homology conditions on $\ts_A$ and $\ts_B$, every path connecting $D(B)$ with $D(B^c)$ intersects both $\ts_A$ and $\ts_{B}$ since $D(B) \subseteq D(A)$. So, every such path will also intersect $J$. Let $\mathcal{R}$ be the part of the complement of $J$ that is path connected to $D(B)$. $\mathcal{R}$ touches $\N$ precisely on $D(B)$, and the bulk part of its boundary forms a time-sheet homologous to $D(B)$ which we will call $\hat{\ts}$. Entanglement wedge nesting is the statement $\mathcal{R} = \ew_B$, so we assume this is not the case for the sake of contradiction.

Call $\hat{\surf}$ the edge of $\hat{\ts}$; this is a spacelike achronal surface. Since $\hat{\ts}$ is made of null congruences from $\minimax_A$ and $\minimax_B$ which are extremal, the generators of $\hat{\ts}$ have non-positive expansion making $\hat{\surf}$ maximal on $\hat{\ts}$. 

We now show that $|\hat{\surf}| < |\minimax_B|$, contradicting the minimality of $\minimax_B$ as a minimax surface. Note that $\hat{\surf}$ can't fully coincide with $\minimax_B$ by assumption. Then, the surface $\hat{\surf}$ is made of portions of $\minimax_B$ whenever $\minimax_B$ is spacelike separated with $\minimax_A$, and a portion that lies both on a future-directed null congruence starting on one of the two minimax surfaces and a past-directed null congruence starting on the other one whenever $\minimax_A$ and $\minimax_B$ are causally related. By focusing we arrive at the desired contradiction $|\hat{\surf}| < |\minimax_B|$. So it must be that $\mathcal{R} = \ew_B$, and hence $\ew_B \subseteq \ew_A$.
\end{proof}

The case where $D(A)$ and $D(B)$ coincide corresponds to a strengthening of lemma \ref{not-timelike-related}. 

\begin{corollary}
\label{cor:spacelike_related}
In case of multiple degenerate minimax surfaces $\minimax_i(A)$ for the same boundary region $A$, the $\minimax_i(A)$ must all be fully spacelike separated from one another (and be maximal on different minimax time-sheets). 
\end{corollary}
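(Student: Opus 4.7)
The plan is to adapt the proof of the entanglement wedge nesting theorem above to the degenerate case where both ``regions'' coincide with $A$, but with two distinct choices of minimax surface. Let $\minimax_1, \minimax_2$ be two degenerate minimax surfaces for $A$ with minimax time-sheets $\ts_1,\ts_2$, and suppose for contradiction that they are causally related at some pair of points. Mirroring the EWN construction, I would define $J_i := J^+(\minimax_i) \cup J^-(\minimax_i)$ and $J := J_1\cup J_2$, then take $\nmmr$ to be the component of the complement of $J$ path-connected to $D(A)$, with bulk boundary $\nmmt$ and edge $\nmms$. By theorem \ref{thm:ew_smallest} each $\ts_i$ lies inside its corresponding $J_i$, so the homology argument from the EWN proof goes through verbatim: $\nmmt$ is a time-sheet for $A$, generated by null congruences of non-positive expansion, and $\nmms$ is maximal on it.

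The key step is a focusing argument to conclude $|\nmms| < |\minimax_1| = |\minimax_2|$. Wherever $\minimax_1$ and $\minimax_2$ are spacelike separated, $\nmms$ can only coincide with portions of one of them, contributing at most their local area. Wherever they are causally related, by the hypothesis on a nonempty set, $\nmms$ lies on a focused null congruence emanating from one surface and terminating on the other; by focusing, its area on this portion is strictly less than either source's contribution. Thus $|\nmms|<|\minimax_1|$. But $\nmms$ is a valid maximal surface on the time-sheet $\nmmt \in \tsset$, so its area must be at least the minimax value $|\minimax_1|$, a contradiction. This forces $\minimax_1$ and $\minimax_2$ to be fully spacelike separated. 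The parenthetical clause of the corollary --- that degenerate minimax surfaces live on distinct minimax time-sheets --- then follows immediately from lemma \ref{not-timelike-related}, since both surfaces are extremal by theorem \ref{thm:neitherboth}.

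The main obstacle is the same subtle edge case as in theorem \ref{thm:relaxed-achronal}: one must carefully handle the possibility that $\minimax_1$ and $\minimax_2$ partially coincide, so that the region of strict focusing could in principle shrink. Following the final paragraph of that theorem, one would verify that even with partial coincidence there remains either a nonempty focused portion of $\nmms$ giving the strict inequality, or else $\nmms \subsetneq \minimax_1 \cup \minimax_2$ by a positive-measure piece, which again yields strict inequality. A secondary subtlety, present already in the EWN proof, is to ensure that $\nmmr$ indeed meets $\N$ exactly on $D(A)$ and nowhere else; this uses that $\minimax_1, \minimax_2$ are both spacelike-homologous to $A$ relative to $\surf^0$, so their future and past null congruences cannot reach $\N\setminus D(A)$, as expansion would diverge there.
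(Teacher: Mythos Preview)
Your proposal is correct and follows essentially the same approach as the paper: the corollary is stated immediately after the entanglement wedge nesting theorem as the special case $D(A)=D(B)$, and your write-up is precisely the specialization of that proof to two distinct minimax surfaces for the same region, including the focusing contradiction $|\nmms|<|\minimax_1|$. Your invocation of lemma \ref{not-timelike-related} for the parenthetical clause, and your flagging of the partial-coincidence edge case (handled as in theorem \ref{thm:relaxed-achronal}), are both appropriate and match what the paper relies on implicitly.
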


\subsection{Cooperating pairs of time-sheets}\label{sec:coop}

In this section we introduce and prove a very interesting property of time-sheets, which we will see to be a special case of a more general conjecture discussed in more detail in section \ref{sec:conjecture}.

Consider two crossing boundary subsystems $\mathscr{J}_1 = AB$ and $\mathscr{J}_2 = BC$ (if they do not cross, the following statements are trivially true). Furthermore we will initially assume that the minimax surfaces $\minimax_1$ and $\minimax_2$, and correspondingly their minimax time-sheets $\ts_1$ and $\ts_2$, are each connected; we will relegate the more general case involving multiple components until after the proof of theorem \ref{thm:coop-2party}. Since $\mathscr{J}_1$ and $\mathscr{J}_2$ cross $\ts_1$ and $\ts_2$ will intersect --- cutting each other in what we call \textit{partial time-sheets}. Due to this intersection, $\minimax_1$ and $\minimax_2$ will also be cut into \textit{partial minimax surfaces}.

\begin{figure}
    \centering
    \includegraphics[width=0.9\linewidth]{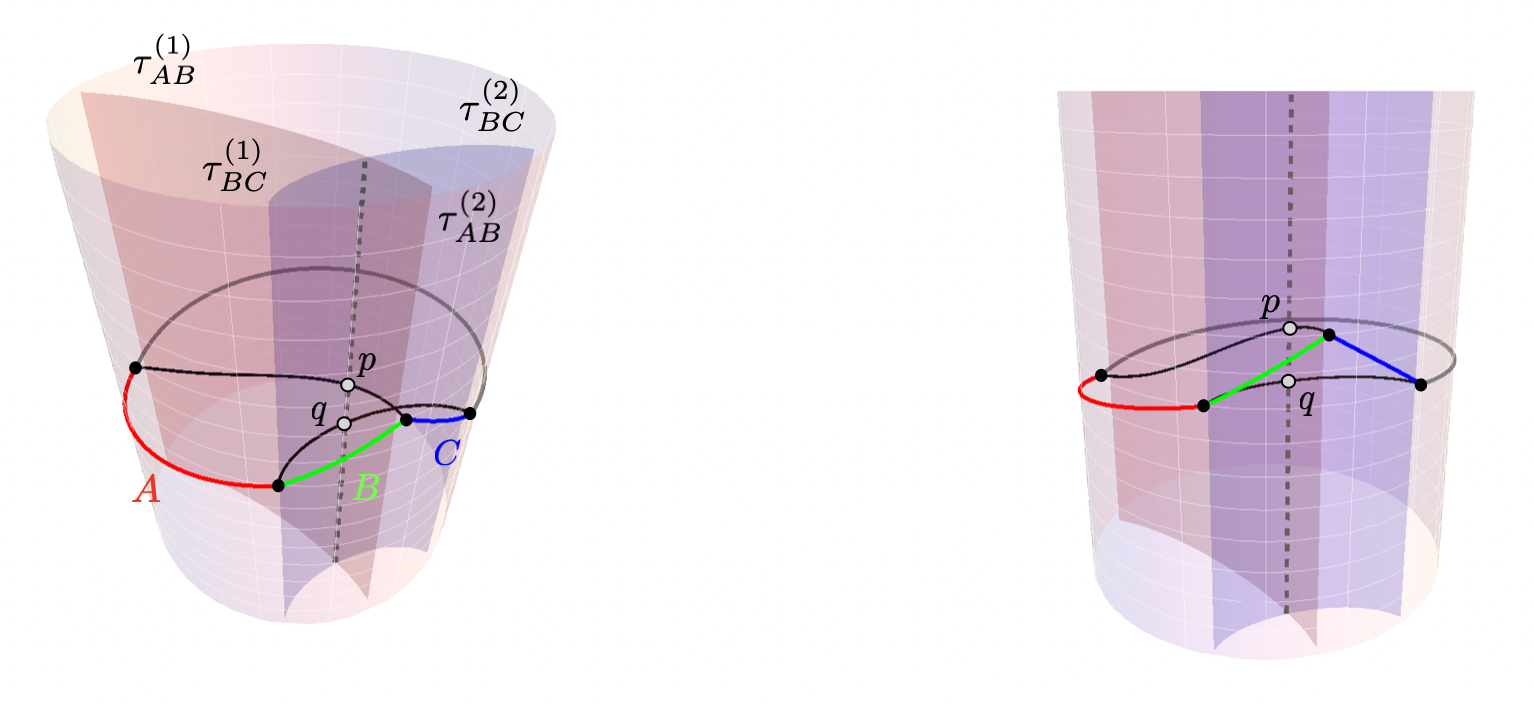}
    \caption{Two viewpoints of a configuration $(\ts_{AB},\ts_{BC})$ of minimax time-sheets for the crossing regions $AB$ and $BC$. We show the construction in the unregulated spacetime for clarity. {\bf Left:} The regions $A$ (red), $B$ (green), $C$ (blue) and the purifier (gray) are relatively boosted on the boundary, but within a common boundary Cauchy slice. In black we show the two minimax surfaces $\minimax_{AB}$ and $\minimax_{BC}$. The two time-sheets intersect along the dashed gray line forming the partial time-sheets $\ts_{AB}^{(1)},\, \ts_{AB}^{(2)},\, \ts_{BC}^{(1)}$ and $\ts_{BC}^{(2)}$. {\bf Right:} With this viewpoint we make it clearer that the two minimax surfaces do not intersect and do not lie on the same bulk slice. The intersection of the time-sheets induces a partitioning of the minimax surfaces into four parts: $\minimax_{AB}$ is split into two parts that meet the intersection seam at the locus $p$. The surface $\minimax_{BC}$ is cut similarly but through the locus $q$. Theorem \ref{thm:coop-2party} states that there exists a pair $(\ts_{AB},\ts_{BC})$ for which these four partial surfaces are all the maximal achronal surfaces on the respective partial time-sheets. The maximization problem on each partial time-sheet has fixed boundary condition on the boundary, but it is free to end on the intersection seam wherever and however. In the example shown in the figure, while not immediately apparent, the configuration is not cooperating since the minimax surface for $AB$ does not meet the intersection seam orthogonally (a necessary condition for cooperation) --- since the time-sheets have been constructed in a naive way by extending the minimax surfaces upward and downward in time statically.} 
    \label{fig:coop-2-party}
\end{figure}

The property we show here is that there exists a pair of minimax time-sheets $(\ts_1, \ts_2)$ for which the maximal achronal surface on each partial time-sheet is its corresponding partial minimax surface. Note that this property is highly non-trivial: the maximization problem on a partial time-sheet has a boundary condition that depends on the intersection seam $\ts_1 \cap \ts_2$ of both time-sheets, so in general one needs $\ts_1$ and $\ts_2$ to work together in order for the property to hold. For this reason, we say that when this property holds, the pair of time-sheets $(\ts_1, \ts_2)$ is \textit{cooperating}.\footnote{ 
We postpone a more formal definition to the general case presented in definition \ref{def:cooperating}.}

To examine whether we have sufficient freedom in choosing the time-sheets so as to ensure the cooperating property, let us specify the relevant constructs further.  Let $\confset$ be the set of all pairs $(\ts_1, \ts_2)$ of minimax time-sheets. A specific pair $\conf \in \confset$ will be called a \textit{configuration}, and we denote the set of cooperating configurations by $\coopset \subset \confset$.  See figure \ref{fig:coop-2-party} for a drawing of the setup and a visual explanation of the cooperating property.
Note that while $\confset$ is certainly smaller than $\tsset_{\mathscr{J}_1}\times\tsset_{\mathscr{J}_2}$ (the set of \emph{all} pairs of time-sheets as opposed to only the minimax ones), the choice of minimax time-sheet for a given minimax surface $\minimax$ is still highly non-unique: away from $\minimax$, the time-sheet can be deformed by some amount while remaining a minimax time-sheet.  Since there is an infinite family of suitable deformations, i.e.\ the set $\confset$ is infinite, it seems at least \emph{plausible} that one retains enough wiggle room to find a cooperating configuration. A necessary condition for that to happen is for the minimax surfaces to meet the intersection seam of the two time-sheets orthogonally.

While \emph{locally} the intersection of the time-sheet is constrained by the orthogonality condition, \emph{globally} a variety of complications could arise. However, we will further show that cooperating configurations are particularly simple in that they partition the spacetime into four connected codimension-0 volumes (which we call \emph{cells}) that extend all the way to the conformal boundary. We call such a configuration to be \textit{minimally intersecting}. Intuitively, one can view this property as saying that a cooperating pair of time-sheets, as viewed from the top (i.e.\ suppressing the time direction) behaves similarly to pairs of RT surfaces on a slice in the static context (i.e.\ they do not weave back and forth intersecting each other multiple times, e.g.\ as in figure \ref{fig:multiple-intersections}). The set of minimally intersecting configurations will be denoted by $\mintset \subset \confset$.

We now present the proof that a cooperating pair of time-sheets exists.
\begin{theorem}
\label{thm:coop-2party}
For a crossing pair of boundary subsystems $\mathscr{J}_1,\mathscr{J}_2$ admitting connected minimax time-sheets, there exists a pair $(\ts_1, \ts_2)$ of cooperating time-sheets. 
\end{theorem}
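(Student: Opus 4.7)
\begin{sproof}
My plan is to construct a cooperating pair by starting from any choice of minimax time-sheets, analyzing the partial maximization problems induced by their seam, and using the relaxed-minimax characterization to force the cooperating structure, with the floppiness of minimax time-sheets supplying any residual geometric alignment.

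First, I would choose any connected minimax time-sheets $\ts_1,\ts_2$ for $\mathscr{J}_1,\mathscr{J}_2$. Using entanglement wedge nesting applied to $B := \mathscr{J}_1 \cap \mathscr{J}_2$ together with the crossing hypothesis and Theorem \ref{thm:ew_smallest}, I would argue that the seam $\seam := \ts_1 \cap \ts_2$ is a non-empty codimension-2 submanifold that partitions each $\ts_i$ into partial time-sheets $\pts{i}{1},\pts{i}{2}$, and that each $\minimax_i$ is genuinely cut by $\seam$ into partial minimax surfaces $\pminimax{i}{1},\pminimax{i}{2}$, meeting along a locus $P_i := \minimax_i \cap \seam$.

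Next, I would invoke the relaxed-minimax characterization (Lemma \ref{lemma:relaxed_equal_area} together with Theorem \ref{thm:relaxed-achronal}): $\minimax_i$ is a maximum achronal-in-$\ts_i$ surface anchored to $\partial \mathscr{J}_i$. The central step is a variational analysis on the partial time-sheets of $\ts_1$ (and symmetrically of $\ts_2$). For any achronal surfaces $\alpha^{(j)} \subset \pts{1}{j}$ anchored to $\partial \mathscr{J}_1 \cap \pts{1}{j}$ and sharing a common free endpoint locus $P^\ast \subset \seam$, the union $\alpha^{(1)} \cup \alpha^{(2)}$ is achronal in $\ts_1$, and therefore has area at most $|\minimax_1| = |\pminimax{1}{1}| + |\pminimax{1}{2}|$. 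Optimizing over $\alpha^{(j)}$ and $P^\ast$, the maximum is attained exactly when $\alpha^{(j)} = \pminimax{1}{j}$ and $P^\ast = P_1$, with orthogonality of $\minimax_1$ to $\seam$ emerging as the first-order condition on $P^\ast$.

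If the initial time-sheets produce a seam incompatible with orthogonality, I would deform $\ts_i$ in a neighborhood of $\seam$ kept away from $\minimax_i$, exploiting the floppiness of minimax time-sheets discussed in Section \ref{sec:minimax} to realign the crossing so that $\minimax_i$ meets the new seam orthogonally along $P_i$. A continuity argument on the (infinite-dimensional) space of minimax time-sheet pairs then upgrades this local adjustment to a globally cooperating pair. The principal obstacle is making the variational argument genuinely two-sided: the joined surface $\alpha^{(1)} \cup \alpha^{(2)}$ enters the relaxed-minimax bound only when its two pieces share a seam endpoint, so converting any cooperation failure into a strict violation of $|\minimax_1|$ requires matched adjustments on both sides of $\seam$. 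This is where the assumed connectedness of the minimax surfaces (and hence of $\seam \cap \minimax_i$) is used essentially, since it ensures that the variational comparison can be carried out coherently along the entire seam locus.
\end{sproof}
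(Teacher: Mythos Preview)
Your variational bound is correct as far as it goes, but it does not deliver the cooperating property. You show that for any pair $\alpha^{(1)},\alpha^{(2)}$ of achronal surfaces on $\pts{1}{1},\pts{1}{2}$ that \emph{share a common endpoint} $P^\ast$ on the seam, one has $|\alpha^{(1)}|+|\alpha^{(2)}|\le|\minimax_1|$. But cooperation requires that each $\pminimax{1}{j}$ is maximal among \emph{all} achronal surfaces on $\pts{1}{j}$ with free boundary on $\seam$, with no matching constraint on the other side. Nothing in your bound excludes a surface $\psurf{1}{1}$ on $\pts{1}{1}$ ending at some $q\ne P_1$ with $|\psurf{1}{1}|>|\pminimax{1}{1}|$; your inequality is perfectly consistent with this provided the partner surface through $q$ on $\pts{1}{2}$ is correspondingly smaller than $\pminimax{1}{2}$. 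You flag exactly this as ``the principal obstacle,'' but the appeal to connectedness does not touch it: connectedness of $\minimax_1$ only tells you the seam locus $P_1$ is a single point, not that no competing surface on one side can beat the partial minimax there.

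The paper closes this gap by a genuinely different mechanism. Rather than fixing a configuration and trying to squeeze cooperation out of the relaxed-minimax bound, it introduces the functional $\sum_{i,\alpha}\max_{\psurf{i}{\alpha}\subset\pts{i}{\alpha}}|\psurf{i}{\alpha}|$ on the space $\confset$ of minimax time-sheet pairs and takes its minimizer $\conf^\star$. If at $\conf^\star$ some partial maximal surface $\psurf{1}{1}$ ends at $q\ne P_1$ on $\seam$, one deforms $\ts_2$ in a small neighborhood of $q$ toward $\psurf{1}{1}$; this excises an order-$\epsilon$ piece of $\psurf{1}{1}$ while changing the other three partial maxima by at most order $\epsilon\delta$, strictly decreasing the functional and contradicting minimality. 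The key point is that the extra minimization step gives you a variational principle whose first-order condition is precisely that the partial maxima join into $\minimax_i$. Your ``continuity argument on the space of minimax time-sheet pairs'' is gesturing at something like this, but without specifying what is being optimized there is no way to extract the conclusion.
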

\begin{sproof}\renewcommand{\qedsymbol}{}
Consider the maximal surfaces $\psurf{i}{\alpha}$\footnote{Note that the maximal surfaces are generically \emph{not} the partial minimax surfaces, which we will denote as  $\pminimax{i}{\alpha}$; while $\pminimax{1}{1}$ and $\pminimax{1}{2}$ are  connected and achronal,  $\psurf{1}{1}$ need not be connected with or achronal to $\psurf{1}{2}$, being individually maximized on the respective partial time-sheets.} on the corresponding partial time-sheets $\ts_{i}^{(\alpha)}$, with $i = 1,2$ and $\alpha$ labeling the individual parts of each $\tau_i$. We then minimize the total sum of the areas of the maximal surfaces $\sum |\psurf{i}{\alpha}|$ over configurations $\conf\in\confset$, i.e.\ over all pairs $(\ts_1, \ts_2)$ of minimax time-sheets. For the configuration $\conf^{\star}$ that achieves this minimum, we prove that all the maximal surfaces for a given time-sheet must join, i.e.\ that 
\begin{equation}\label{eq:connected}
    \bigcup_\alpha \psurf{i}{\alpha} \text{ is connected,} \quad( i = 1,2)\,.
\end{equation}
Since these surfaces lie on minimax time-sheets, and since they join smoothly, it follows that they must form the two relaxed minimax surfaces $\minimax_1$ and $\minimax_2$ (note that the maximal segments are individually achronal on their partial time-sheets, so since they join, their union will be achronal on the full time-sheet). We prove  \eqref{eq:connected} by contradiction. If the maximal surfaces were disconnected, we could find a deformation of the time-sheets (within $\confset$) such that the total sum of areas decreases, contradicting the aforementioned minimality. 
\end{sproof}
\begin{proof}
The set $\confset$ of configurations will be made of pairs $(\ts_1, \ts_2)$ of minimax time-sheets for the boundary regions $\mathscr{J}_1$ and $\mathscr{J}_2$. Let $\pts{i}{\alpha}$ be the partial time-sheets formed from the intersection of a configuration $\conf \in \confset$. On each partial time-sheet $\pts{i}{\alpha}$, consider the maximal achronal-in-$\pts{i}{\alpha}$ surface $\psurf{i}{\alpha}$. Then, consider the total sum of the areas of these maximal surfaces, i.e.
\be
\sum_{\alpha,i} \max_{\psurf{i}{\alpha} \subset \ts_{i}^{(\alpha)}} |\gamma_{i}^{(\alpha)}|= \max_{\psurf{i}{\alpha} \subset \ts_{i}^{(\alpha)}} \sum_{\alpha,i}  |\gamma_{i}^{(\alpha)}|,
\ee
and finally minimize this over all configurations $t \in \confset$:
\be\label{eq:minimax_lemma}
\min_{\confset} \sum_{\alpha,i} \max_{\psurf{i}{\alpha} \subset \ts_{i}^{(\alpha)}} |\gamma_{i}^{(\alpha)}|  = \min_{\confset}  \max_{\psurf{i}{\alpha} \subset \ts_{i}^{(\alpha)}} \sum_{\alpha,i}  |\gamma_{i}^{(\alpha)}|.
\ee
Denote the minimizing pair by $\conf^{\star} = (\ts_1^{\star}, \ts_{2}^{\star})$\footnote{
As for minimax itself,
we work under the assumption that a solution exists.}. We want to show that $\conf^{\star}$ is such that
\be\label{eq:join-smoohtly}
\bigcup_\alpha \psurf{1}{\alpha} \text{ is connected, \ \ and }
\bigcup_\alpha \psurf{2}{\alpha} \text{ is connected.}
\ee
Since each $\psurf{i}{\alpha}$ is maximal and achronal in $\pts{i}{\alpha}$, if \eqref{eq:join-smoohtly} holds, it implies that the $\cup_\alpha \psurf{i}{\alpha}$ are the maximal surfaces achronal on $\ts_1^{\star}$ and $\ts_2^{\star}$ respectively, and since $\ts_1^{\star}$ and $\ts_2^{\star}$ are minimax time-sheets, it further follows that the $\cup_\alpha \psurf{i}{\alpha}$ are the (relaxed) minimax surfaces, i.e.
\begin{equation}
\bigcup_\alpha \psurf{1}{\alpha} = \bigcup_\alpha \gamma_1^{+(\alpha)} = \minimax_1\,,\qquad
\bigcup_\alpha \psurf{2}{\alpha} = \bigcup_\alpha \gamma_2^{+(\alpha)} = \minimax_2\,.
\end{equation}
To show that $\conf^{\star}$ is such that \eqref{eq:join-smoohtly} holds, we proceed by contradiction: we assume the partial maximal surfaces to be disconnected and show that that contradicts minimality of $\conf^\star$. Since we will need to take care of a variety of special cases, we split the proof into small sections for improved readability.

\paragraph{Case I: $\conf^{\star}\in \mintset$, and generic intersections:}
We begin with assuming that  $\conf^{\star}\in \mintset$, i.e.\ that $\conf^{\star}$ is minimally intersecting. As a consequence, the two time-sheets intersect forming four partial time-sheets. Let $\mathpzc{i} := \ts^{\star}_1 \cap \ts^{\star}_2$ be the intersection locus of $\ts^{\star}_1$ with $\ts^{\star}_2$. In general, $\mathpzc{i}$ may be timelike, spacelike or null (or a piecewise combination of the three). To show \eqref{eq:join-smoohtly}, assume, for contradiction, that the two pairs of maximal surfaces do not join (i.e.\ they intersect $\mathpzc{i}$ at different locations). For the moment, we are further going to assume that all of the loci of intersections are different (even between maximal surfaces for different time-sheets). We want to show that if so, then there exists a deformation of $\ts_1^{\star}$ and $\ts_2^{\star}$ (in the space $\confset$) that would decrease $\max \sum |\psurf{i}{\alpha}|$, contradicting the minimality in \eqref{eq:minimax_lemma}. Assume for now that the maximal surfaces $\psurf{i}{\alpha}$ are unique.\footnote{Note that there may be other equal-area minimax surfaces on the full time-sheet. These however will be unstable (or non-achronal in the full spacetime). However, there will be a unique (stable, extremal, achronal) HRT surface by the arguments in subsection \ref{sec:stability}. The set $\confset$ is the set of configurations containing the $\mathscr{J}_i$ HRT surfaces, which are then maximal on their respective time-sheets.
So when deforming the time-sheet, we do so in a neighborhood of any non-HRT minimax surface, while keeping the HRT surface fixed.}

Firstly, note that since the $\psurf{i}{\alpha}$ are maximal, they must intersect $\mathpzc{i}$ orthogonally. So the $\psurf{i}{\alpha}$ must end on a timelike portion of $\mathpzc{i}$ (since if they ended on a spacelike or null portion, one could always add positive area to the surface, contradicting maximality).
Since we assumed for contradiction that each $\bigcup_\alpha \psurf{i}{\alpha}$ is disconnected (for fixed $i = 1,2$), at most one partial maximal surface on a given time-sheet can intersect the minimax surface at $\mathpzc{i}$ (and at most two partial maximal surfaces can intersect with either minimax surface). Without loss of generality, assume $\psurf{1}{1}$ does not intersect with either $\minimax_1$ or $\minimax_2$, namely $\psurf{1}{1}\cap\mathpzc{i}\cap\minimax_1 = \emptyset$ and $\psurf{1}{1}\cap\mathpzc{i}\cap\minimax_2 = \emptyset$. Then, consider a deformation of $\ts_2^{\star}$ in a $\delta$-sized
neighborhood $\sets{U}_\delta$ of $\psurf{1}{1} \cap \mathpzc{i}$, such that $\ts_2^{\star}$ moves towards $\psurf{1}{1}$ by an amount of order $\epsilon$, i.e.\
\begin{equation}\label{eq:deformation}
    \ts_2^{\star} \to \tilde{\ts}_2 = \ts_2^{\star} + \epsilon \, \eta
\end{equation}
where $\eta$ is a normal vector field with support on $\sets{U}_\delta$. Because $\psurf{1}{1}\cap\mathpzc{i}\cap\minimax_i = \emptyset$, this neighborhood $\sets{U}_\delta$ can be chosen such that it also does not contain $\minimax_i$, and so by stability, $\epsilon$ can be chosen such that $\minimax_2$ remains maximal. Therefore, the deformed configuration remains in $\confset$.

We now consider the areas of the new maximal segments $\psurf{i}{\alpha}$ following the deformation \eqref{eq:deformation}. The surface $\psurf{1}{1}$ will decrease in area by order $\epsilon$ by construction, while the surfaces $\psurf{1}{2}, \psurf{2}{1}$ and $\psurf{2}{2}$ will all be unchanged since we are only deforming $\ts_2^{\star}$ in the neighbourhood $\sets{U}_{\delta}$.\footnote{Note, however, that though $\ts_1^{\star}$ is not deformed, the intersection $\mathpzc{i}$ is. As such, the maximal partial surfaces might again want to jump either outside (in the case of $\gamma_1^{(1)}$) or inside (in the case of $\gamma_2^{(1)}$, $\gamma_1^{(2)}$ and $\gamma_2^{(2)}$) the deformation to maximize their area. However, we can always choose an $\epsilon$ small enough such that this does not occur.} All together then, the total area of the four maximal surfaces decreases
\be
\max \sum_i |\psurf{i}{\alpha}| \to \max \sum_i |\psurf{i}{\alpha}| - O(\epsilon), 
\ee
in contradiction to the minimality of \eqref{eq:minimax_lemma}. 

\begin{figure}
    \centering
    \includegraphics[width=0.9\linewidth]{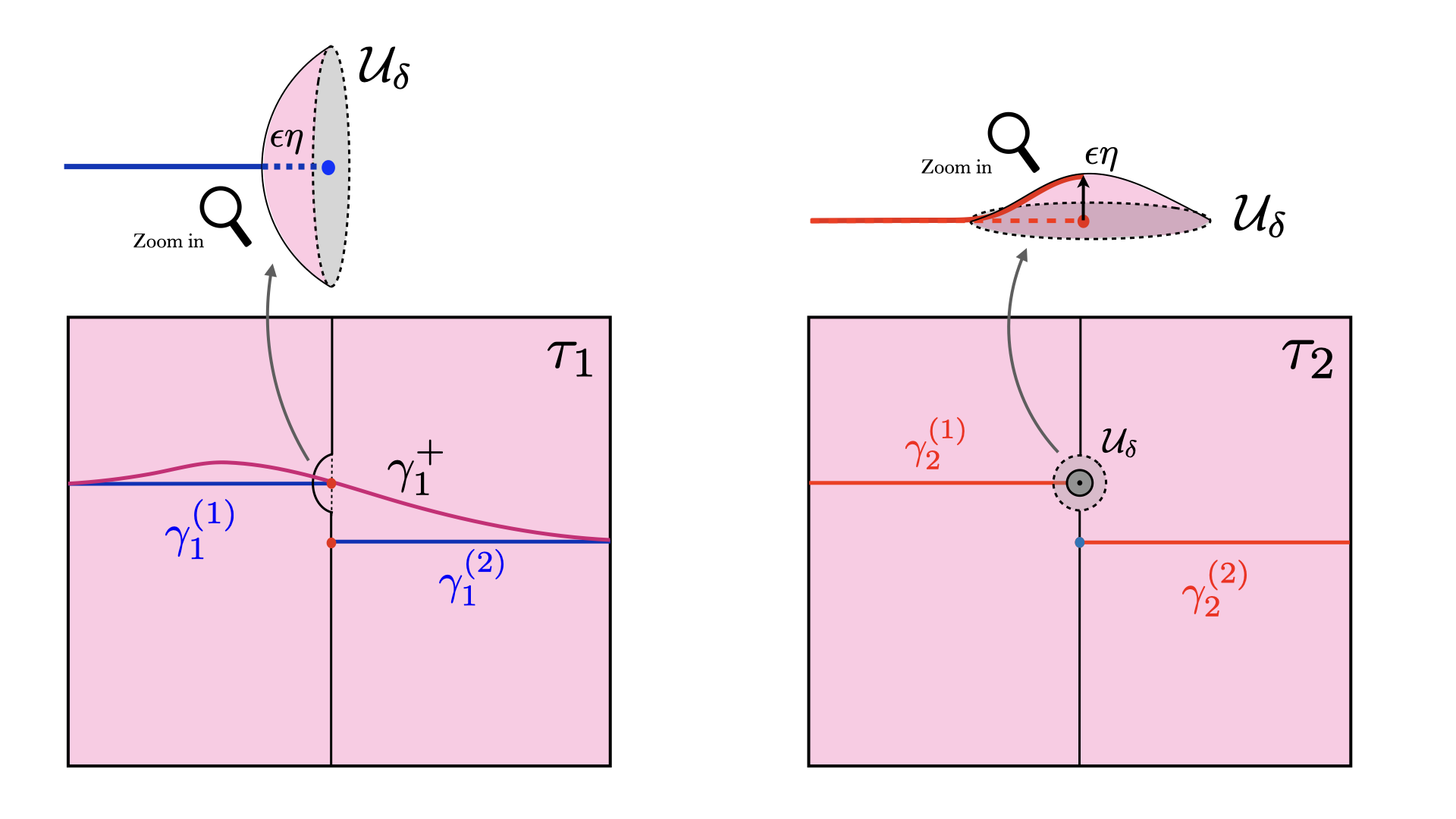}
    \caption{Cartoon showing the deformation used to reach a contradiction in theorem \ref{thm:coop-2party} for the case where $\gamma_1^{(1)}$, $\gamma_2^{(1)}$, and $\minimax_1$ all intersect at the intersection seam. {\bf Left:} We show time-sheet $\ts_1$, containing the minimax surface $\minimax_1$ (purple). The two maximal partial surfaces $\psurf{1}{1}$ and $\psurf{1}{2}$ (blue) do not intersect at the intersection seam, however $\psurf{1}{1}$ intersects $\minimax_1$. {\bf Right:} We show time-sheet $\ts_2$ and the two partial maximal surfaces $\psurf{2}{1}$ and $\psurf{2}{2}$ (red). Because $\psurf{1}{1}$ and $\psurf{2}{1}$ intersect, a deformation of $\ts_2$ towards $\psurf{1}{1}$ in a normal direction will increase the area of  $\psurf{2}{1}$ (shown in the zoomed in area in the top right), but decrease the area of $\psurf{1}{1}$ by a greater amount (shown in the zoomed in area in the top left). Altogether the deformation decreases the total area of the configuration.}
    \label{fig:cooperating-thm}
\end{figure}

\paragraph{Case II: $\conf^{\star}\in \mintset$, worst case scenario of intersections:}
Above we assumed that no two maximal partial surfaces intersected $\mathpzc{i}$ at the same point. Without loss of generality, consider now the worst case scenario in which both $\psurf{1}{1} \cap \psurf{2}{1}$ and $\psurf{1}{2} \cap \psurf{2}{2}$ are non-empty, with $\psurf{1}{1} \cap \psurf{2}{1} \cap \minimax_1$ non-empty at $\mathpzc{i}$.\footnote{At most one minimax surface can intersect either $\psurf{1}{1} \cap \psurf{2}{1}$ or $\psurf{1}{2} \cap \psurf{2}{2}$.} There exists deformations of $\ts_2^{\star}$ within $\confset$ towards $\psurf{1}{1}$, but they will inevitably increase the area of $\psurf{2}{1}$ (since the deformation \eqref{eq:deformation} would push it in a spacelike direction). However, while on $\psurf{1}{1}$ the deformation has the effect of excising a portion of the surface with area of order $\epsilon$, on $\psurf{2}{1}$ 
it merely deforms a  $\delta-$sized neighborhood portion of it under the normal field $\epsilon \eta$, hence increasing the area of $\psurf{2}{1}$ by an amount of order $\epsilon\delta$. Thus, $\epsilon$ can be taken small enough such that there is still a deformation that decreases the total area in \eqref{eq:minimax_lemma}, and we arrive at the same contradiction. See figure \ref{fig:cooperating-thm} for a cartoon of the deformations involved in this step of the proof.

\paragraph{Case III: $\conf^{\star}\in \mintset$, degeneracies on the partial time-sheets:}
We now take care of the case where the maximal surfaces $\psurf{i}{\alpha}$ are not unique on their respective partial time-sheets. For example, suppose there are two maximal surfaces $\psurf{1}{1}$ and $\bar{\gamma}_1^{(1)}$on $\pts{1}{1}$. We assume, again for contradiction, that neither $\psurf{1}{1}$ nor $\bar{\gamma}_1^{(1)}$ intersect $\psurf{1}{2}$. In the worst case configuration of partial surfaces (i.e.\ when $\minimax_2\cap\psurf{1}{2}\cap\psurf{2}{1}$ is non-empty) we may want to deform $\ts_2^{\star}$ to decrease the area of $\psurf{1}{1}$. However, if we perform the deformation \eqref{eq:deformation} around only one, say $\psurf{1}{1}$, the new maximal surface on $\pts{1}{1}$ would simply be $\bar{\gamma}_1^{(1)}$, so our deformation would not decrease the area of \eqref{eq:minimax_lemma}. However, we can simultaneously perform the deformation in neighborhoods of both $\psurf{1}{1}$ and $\bar{\gamma}_1^{(1)}$ (note this deformation can be done, as $\psurf{1}{1}\cap\psurf{1}{2}$ and $\bar{\gamma}_1^{(1)}\cap\psurf{1}{2}$ are both empty). Thus, there will still exist a deformation that decreases the area, yielding a contradiction, as above. At least one maximal partial surface therefore needs to smoothly join with $\psurf{1}{2}$. This concludes the minimally intersecting part of the proof, showing that when $\conf^{\star}\in \mintset$, $\conf^{\star}\in \coopset$.

\begin{figure}
    \centering
    \includegraphics[width=0.45\textwidth]{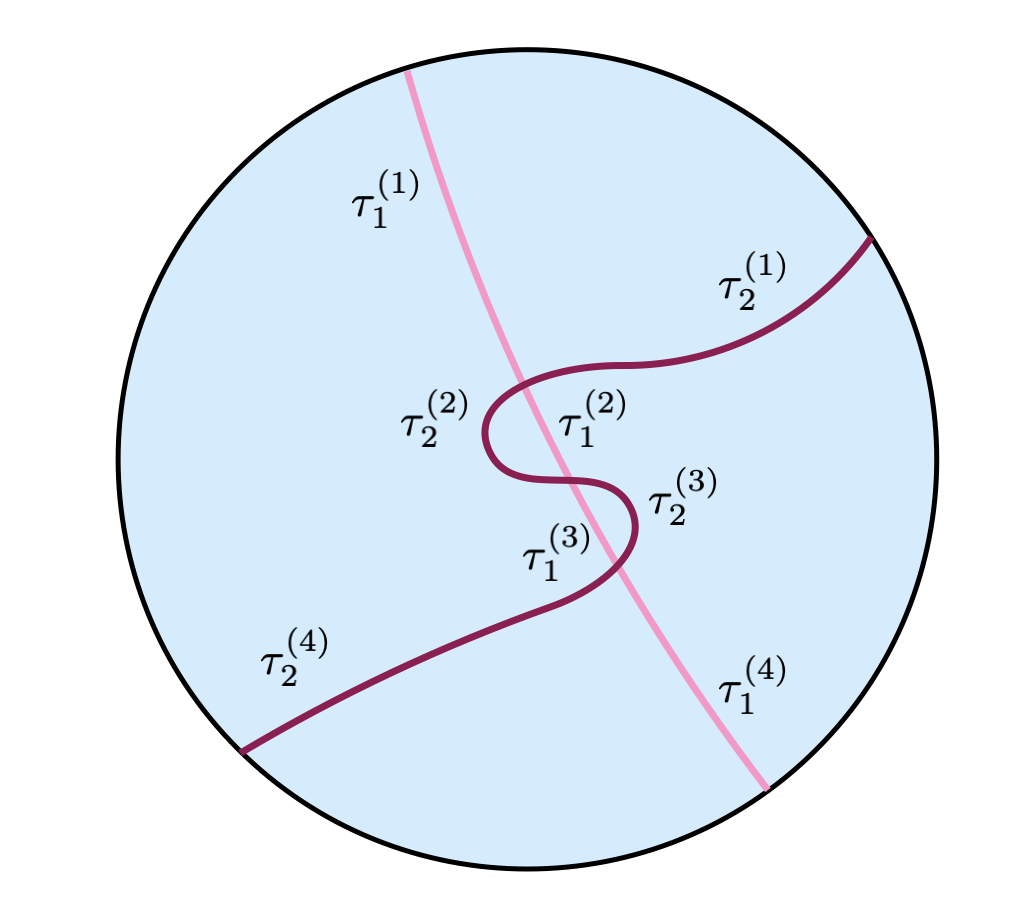}
    \caption{Example of a non-minimally intersecting configuration of time-sheets, where one of the two time-sheets weaves back and forth to intersect the other multiple times. Theorem \ref{thm:min-int} ensures that this configuration is never cooperating.}
    \label{fig:multiple-intersections}
\end{figure}

\paragraph{Case IV: $\conf^{\star} \notin \mintset$:}
Finally, suppose now that $\conf^{\star}\notin \mintset$. This could manifest in a multitude of topologically different configurations, see figure \ref{fig:multiple-intersections} for an example. In particular, for any two connected components of  $\conf^{\star}$, there may be multiple connected components of the seam $\mathpzc{i}_j$. However, note that the seam $\mathpzc{i}_j$ will not self-intersect, as this could only occur if the time-sheet self-intersects (which is not allowed by the homology condition).
As a consequence of this simple structure, around any given component of the seam, $\mathpzc{i}_j$, we can consider as above whether the partial surfaces $\psurf{i}{j}$, $\psurf{i}{j+1}$ connect across it. Note, however, that the arguments above did not make any reference to the structure of the time-sheets away from a neighborhood of $\mathpzc{i}_j$. Thus, by arguments identical to those given above, the partial surfaces $\psurf{i}{j}$, $\psurf{i}{j+1}$ must join across $\mathpzc{i}_j$. 
\end{proof}
In the above proof we have worked under the assumption that $\ts_1$ and $\ts_2$ are single connected components. More generally one could have regions $\mathscr{J}_1$ and $\mathscr{J}_2$ at least one of which has minimax surface which consists of multiple connected components. Such a situation needs to be considered already
in the proof of SSA. However, since no single surface can have closely lying or nearly intersecting components, one would only ever have pairwise intersections of time-sheets without them coming dangerously close to one another. Thus, while we have not spelled out the details of the proof for multiple connected components, we suspect that it should go through by applying the same procedure to each connected component individually.

We are now going to prove a theorem that holds for a general pair $\mathscr{J}_1,\mathscr{J}_2$, whose time-sheets are not necessarilty connected, and that ensures that any pair of connected components of a cooperating configuration is minimally intersecting. This defines for us a more general notion of a \emph{minimally intersecting configuration}, that applies even for disconnected time-sheets. We will see a further  generalization of this definition, for arbitrarily many regions, in section \ref{sec:conjecture}.

\begin{theorem}\label{thm:min-int}
Given a crossing pair of boundary regions $\mathscr{J}_1,\mathscr{J}_2$, any cooperating time-sheet configuration $\conf^{\star} \in \coopset$ is minimally intersecting, i.e.\ $\conf^{\star}\in \mintset$.
\end{theorem}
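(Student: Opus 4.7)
I would proceed by contradiction: assume $\conf^{\star} = (\tau_1, \tau_2) \in \coopset$ yet $\conf^{\star} \notin \mintset$. The failure of minimal intersection means that either the seam $\mathpzc{i} := \tau_1 \cap \tau_2$ has more essential connected components than is topologically required by the crossing of $\mathscr{J}_1$ and $\mathscr{J}_2$, or the spacetime partition contains cells that do not extend to the conformal boundary $\N$. In either scenario, there is some ``extra'' structure beyond the minimal four-cell decomposition, which I will exploit to derive a contradiction.

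First, I would isolate an offending feature: a pair of adjacent seam components $\mathpzc{i}_j, \mathpzc{i}_{j+1}$ bounding an ``internal'' partial time-sheet $\pts{1}{\mathrm{int}}$ of $\tau_1$ whose boundary touches only these seams and possibly $\I^{\pm}$, but not the entangling surface $\partial \mathscr{J}_1$. Such an internal piece must exist whenever $\tau_1$ is cut by $\mathpzc{i}$ into more than two pieces, or whenever a bubble cell appears. By the joining property established in Case IV of the proof of theorem \ref{thm:coop-2party}, cooperation forces the partial maximal surfaces to match smoothly across every seam component, so the connected HRT surface $\minimax_1$ must cross every component of $\mathpzc{i}$ orthogonally, and its restriction to $\pts{1}{\mathrm{int}}$ is a piece anchored on both $\mathpzc{i}_j$ and $\mathpzc{i}_{j+1}$.

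Next, I would construct a local deformation $\tau_1 \to \tilde{\tau}_1$ supported in a tubular neighborhood of $\pts{1}{\mathrm{int}}$ that pushes this internal segment across $\tau_2$, merging $\mathpzc{i}_j$ and $\mathpzc{i}_{j+1}$ and thereby reducing the seam count by at least one. Since $\pts{1}{\mathrm{int}}$ is disjoint from $\N$, this deformation is unobstructed at the conformal boundary and can be arranged to preserve the relative homology class of $\tau_1$, so that $\tilde{\tau}_1 \in \tsset_{\mathscr{J}_1}$. Then, using the null congruences orthogonal to $\minimax_1$ at $\mathpzc{i}_j, \mathpzc{i}_{j+1}$ whose expansions are controlled by focusing (as in lemma \ref{lem:no-null-segments} and the concluding argument of theorem \ref{thm:relaxed-achronal}), I would show that the maximal achronal surface on $\tilde{\tau}_1$ has area strictly less than $|\minimax_1|$, contradicting the minimax property of $\tau_1$ as a minimizer over $\tsset_{\mathscr{J}_1}$.

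The principal obstacle will be making the deformation-and-focusing step fully rigorous. One must ensure $\tilde{\tau}_1$ remains piecewise timelike or null with the correct relative homology, which is delicate when $\mathpzc{i}_j$ and $\mathpzc{i}_{j+1}$ have mixed causal character, and one must carefully quantify the area decrease by tracking the null congruences that map the restriction of $\minimax_1$ onto the deformed time-sheet. The subcase in which minimal intersection fails only via bubble cells that do not reach $\N$, rather than via extra seam components on the boundary-touching piece, should yield to an analogous local deformation applied to the offending bubble.
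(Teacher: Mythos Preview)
Your setup---proceeding by contradiction and locating an internal cell bounded only by pieces of $\tau_1$ and $\tau_2$---matches the paper's, but the mechanism you propose for deriving the contradiction has a genuine gap. You want to push the internal piece $\pts{1}{\mathrm{int}}$ across $\tau_2$ and then invoke \emph{focusing} along null congruences from $\minimax_1$ to argue that the maximal surface on $\tilde\tau_1$ has strictly smaller area. But focusing controls areas of surfaces lying on null hypersurfaces orthogonal to an extremal surface; once you have replaced part of $\tau_1$ by a generic piece of $\tau_2$, there is no reason the new piece sits on any such congruence, and the portion of $\minimax_1$ that lived on $\pts{1}{\mathrm{int}}$ is simply gone from $\tilde\tau_1$. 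So the projection-by-focusing picture does not apply, and you are left with no quantitative handle on $\max_{\tilde\tau_1}|\gamma|$.

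The paper's argument bypasses focusing entirely and uses the cooperating hypothesis directly, which is the key idea you are missing. Given the internal cell bounded by partial time-sheets $\pts{1}{\alpha}$ and $\pts{2}{\beta}$, assume without loss of generality $|\gamma_1^{+(\alpha)}| > |\gamma_2^{+(\beta)}|$ and define $\tilde\tau_1$ by swapping $\pts{1}{\alpha}$ for $\pts{2}{\beta}$ inside $\tau_1$; this stays in the homology class. Because the configuration is cooperating, you \emph{know} the maximal achronal surface on every partial time-sheet: on the unchanged pieces of $\tilde\tau_1$ it is the corresponding piece of $\minimax_1$, and on the inserted piece $\pts{2}{\beta}$ it is $\gamma_2^{+(\beta)}$. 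Hence the (possibly disconnected) maximal surface on $\tilde\tau_1$ has area at most $|\minimax_1| - |\gamma_1^{+(\alpha)}| + |\gamma_2^{+(\beta)}| < |\minimax_1|$, and the connected maximal surface is no larger. This contradicts $\tau_1$ being minimax. The equal-area case is handled by rounding the corners of the swapped time-sheet. No deformation, no focusing, no delicate causal bookkeeping---cooperation itself supplies the area bound.
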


\begin{proof}
    Consider a cooperating configuration $\conf^{\star} = (\ts_1, \ts_2)$ that is not minimally intersecting. Then, there exists a pairwise cell that does not reach the conformal boundary, meaning that there is a cell bounded by exactly two partial time-sheets $\ts_1^{(\alpha)}$ and $\ts_2^{(\beta)}$ (for some fixed $\alpha$ and $\beta$).
    We consider the case where we have one such cell, but the proof extends trivially to the case were we have multiple ones (see figure \ref{fig:multiple-intersections}). We will now show that a non-minimally intersecting cooperating configuration is inconsistent globally.
    
    Suppose that $|\gamma_1^{+(\alpha)}| > |\gamma_2^{+(\beta)}|$. Then, consider the time-sheet $\tilde{\ts}_1$ made by taking $\ts_1$ and replacing $\ts_1^{(\alpha)}$ with $\ts_2^{(\beta)}$. This time-sheet will be in the same homology class as $\ts_1$. Note however that it is not a minimax time-sheet since we have deformed it near the minimax surface.\footnote{Since the configuration is supposed to be cooperating, the two minimax surfaces $\minimax_1$ and $\minimax_2$ must be contained in $\ts_1^{(\alpha)}$ and $\ts_2^{(\beta)}$ respectively, for if not there would be  partial maximal achronal surfaces on  $\ts_1^{(\alpha)}$ and $\ts_2^{(\beta)}$ that are not the corresponding partial minimax surfaces.}
    Since the configuration was cooperating, the maximal (disconnected) surface $\gamma$ on $\tilde{\ts}$ will consist of $\minimax_1$ except on $\ts_2^{(\beta)}$, where it will contain the maximal segment $\gamma_2^{+(\beta)}$, more explicitly
    \begin{equation}
        \gamma := (\minimax_1 \setminus \gamma_1^{+(\alpha)}) \cup \gamma_2^{+(\beta)}.
    \end{equation}
    This surface will have less area than $\minimax_1$ since $|\gamma_1^{+(\alpha)}| > |\gamma_2^{+(\beta)}|$. Now, consider the maximal connected surface $\tilde{\gamma}^{+}$ on $\tilde{\ts}$. This surface will have less area than the maximal disconnected surface $\gamma$ since being connected is a stricter constraint. All in all, we have
    \begin{equation}\label{eq:min-int-ineq}
        |\tilde{\gamma}^{+}| \leq |\gamma| < |\minimax_1|.
    \end{equation}
    So, we have found a new minimal maximal surface contradicting the fact that $\ts_1$ was a minimax time-sheet. If $|\gamma_1^{+(\alpha)}| < |\gamma_2^{+(\beta)}|$, we repeat the same argument but for the other time-sheet instead. Finally, consider the case $|\gamma_1^{+(\alpha)}| =|\gamma^{+(\beta)}|$. We repeat the same exact logic as above, reaching the final inequality $|\tilde{\gamma}^+| \leq |\gamma| = |\minimax_1|$. If $|\tilde{\gamma}^+| < |\minimax_1|$, we arrive at the same contradiction as above. If $|\tilde{\gamma}^+| = |\minimax_1|$, we have an additional minimax surface, but one with sharp corners where the time-sheet $\ts_1^{(\alpha)}$ is sewn together with $\ts_2^{(\beta)}$. Thus, we can always form a new surface with smaller area by rounding off these corners, again contradicting the minimality of $\minimax_1$. 
\end{proof}

\section{The  cooperating conjecture}\label{sec:conjecture}

Having introduced the minimax prescription, we can now explore some of its interesting consequences and applications. Contrary to other covariant constructions, here the full spacetime plays a central role. The homology condition, which is usually enforced through a spatial slice, here takes the form of a bulk homology. We begin in subsection  \ref{sec:graph-model} by presenting the construction of a spacetime graph model for time-dependent states, and explain how a generalization of the cooperating property presented in \ref{sec:coop} is sufficient for this graph to be equivalent to the RT graph. In subsections \ref{sec:favor} and \ref{sec:against} we explore the validity of this cooperating conjecture --- showing that it's consistent with boundary causality and with the NEC for a class of explicit examples, and comment on some challenging scenarios. We end in \ref{sec:geometric-proof} by providing additional intuition for the cooperating property, explaining how all of the holographic entropy inequalities are geometrically proved with it. The geometric picture makes it clear how minimax resolves the conceptual issues regarding the HRT surfaces not lying on a common bulk slice.

\subsection{A graph model for time-dependent states}\label{sec:graph-model}

\begin{figure}
    \centering
    \includegraphics[width=\textwidth]{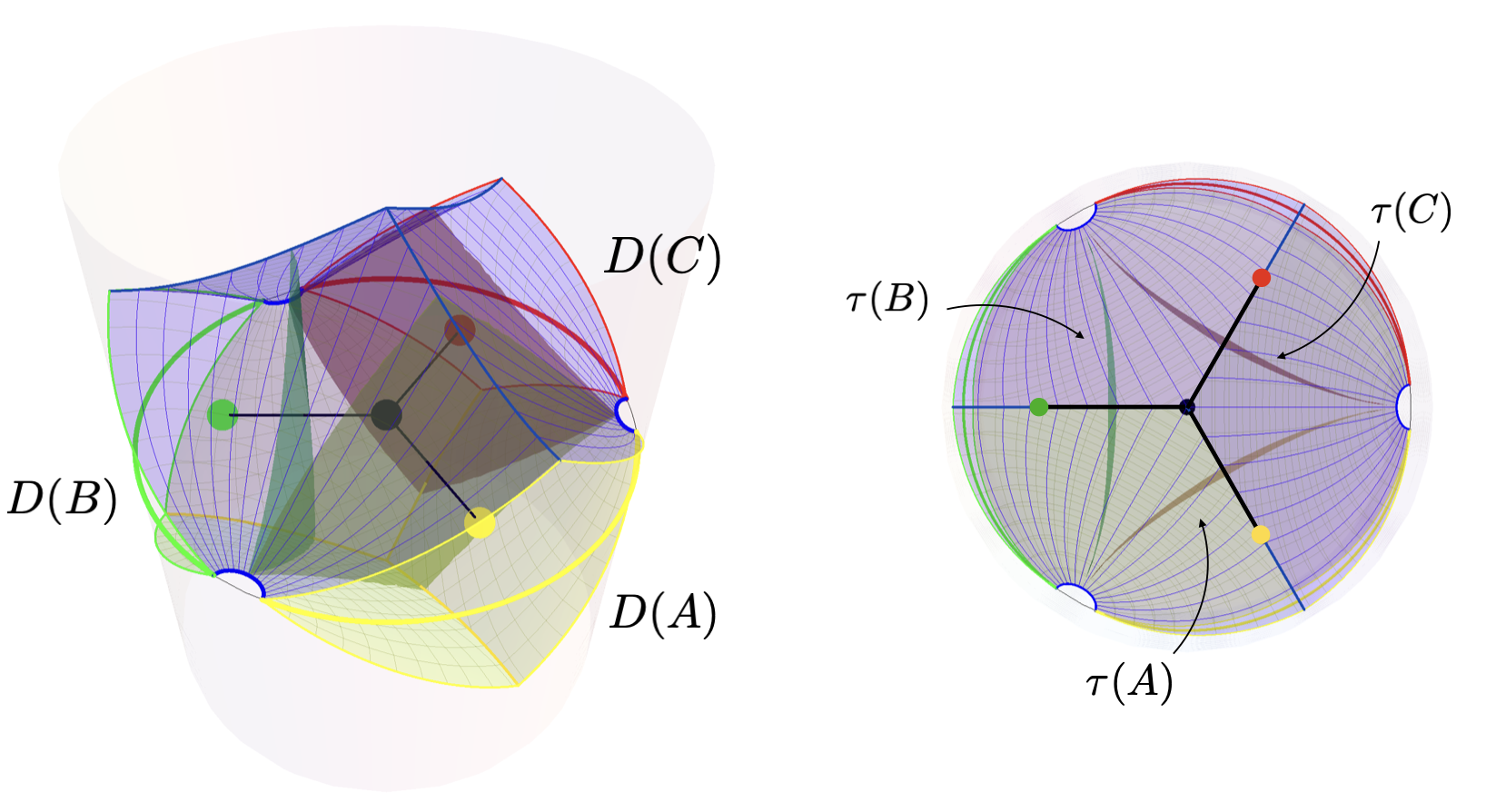}
    \caption{Spacetime graph inside the regulated spacetime for three boundary regions (with their domains of dependence on the boundary shown in red, green and yellow). {\bf Left:} We show the full spacetime, the three time-sheets are shown with the respective colors and they end on the joint HRT surface $\hat{\gamma}_{\text{HRT}}(ABC)$ and on $\mathcal{I}$ (shown by the ``tent'' of null congruences shot from $\hat{\gamma}_{\text{HRT}}(ABC)$). In this simple example the time-sheets do not intersect, and there are only four spacetime cells formed by the union and intersections of the various homology regions: $R(A)$ (yellow node), $R(B)$ (green node), $R(C)$ (red node) and $R^c(A)\cap R^c(B) \cap R^c(C)$ (black node). {\bf Right:} Same figure but shown from the top for additional clarity.}
    \label{fig:spacetime-graph}
\end{figure}

As presented in section \ref{sec:intro}, the minimax prescription suggests a way to construct a spacetime graph. Fix a boundary Cauchy slice with $\mathsf{N}$ boundary regions specified. Let $\conf$ be a set of minimax time-sheets for all elementary and composite regions. The time-sheets of $\conf$ partition the bulk into connected codimension-0 cells. We associate to each cell a vertex in the graph. If two cells share a common partial time-sheet, we connect their vertices with an edge $e$, which is assigned a weight $w(e)$ equal to the area of $\minimax(e)$, the part of the minimax surface lying on the shared partial time-sheet $\ts(e)$. Cells adjacent to the boundary are called boundary vertices; each one is associated with one elementary region. Let $G(\conf)$ be the resulting weighted graph. See figure \ref{fig:spacetime-graph} for the construction of a spacetime graph for a simple example with non-intersecting time-sheets.

Now that we have defined a spacetime graph, we can compute minimal cuts on it. A \textit{cut} $U$ of $G(\conf)$ is a subset of its vertices. The \textit{cut of edges} $\mathscr{C}(U)$ is the set of all edges with one end in $U$ and the other in $U^{c}$; its total weight is the sum of their weights:
\begin{equation}
\|\mathscr{C}(U)\| := \sum_{e \in \mathscr{C}(U)} w(e).
\end{equation} 
A cut $U$ is said to be \emph{homologous} to a composite region $\mathscr{J}$, denoted $U \sim \mathscr{J}$, if 
the boundary vertices in $U$ are precisely the boundary vertices associated to $\mathscr{J}$. One can interpret $\mathscr{C}(U)$ and $U$ as the graph-theoretic analogs of a surface and its corresponding homology region respectively. We call the minimal weight for a given region its graph entropy $S_{G(\conf)}(\mathscr{J})$:
\begin{equation}\label{eq:discrete-entropy}
S_{G(\conf)}(\mathscr{J}):=    \min_{U \sim \mathscr{J}} \| \mathscr{C}(U)\|\,.
\end{equation}

Given that the graph $G(\conf)$ is constructed out of minimax time-sheets and surfaces, each of which computes the HRT entropy of a region $\mathscr{J}$, one might expect that the graph entropy equals the HRT entropy. Indeed, the time-sheet for $\ts(\mathscr{J})$ in the configuration $\conf$ corresponds tautologically to a particular cut $U_{\ts(\mathscr{J})}\sim\mathscr{J}$, which simply includes all vertices corresponding to cells in the homology volume for $\ts(\mathscr{J})$. This however only implies a bound,
\be\label{Sgraph}
S_{G(\conf)}(\mathscr{J})\le S(\mathscr{J})\,,
\ee
as the graph could include a shortcut: a cut $U\sim \mathscr{J}$ smaller than $U_{\ts(\mathscr{J})}$.

What we will now show is that the cooperating condition on $\conf$ forbids shortcuts, ensuring equality in \eqref{Sgraph}. We already defined the cooperating property for pairs of time-sheets in section \ref{sec:coop}. We will now define it for general sets of time-sheets.

\begin{definition}[Cooperating time-sheet configuration]\label{def:cooperating}
A configuration of minimax time-sheets is said to be cooperating if the maximal achronal surface on each partial time-sheet is the corresponding partial minimax surface.
\end{definition}

\begin{theorem}\label{thm:cooperating}
If the time-sheet configuration $\conf$ is cooperating, then for any boundary region $\mathscr{J}$,
\be\label{Sgraphequal}
S_{G(\conf)}(\mathscr{J})= S(\mathscr{J})\,.
\ee
\end{theorem}
\begin{proof}
Following \cite{Bao:2015bfa}, we construct a  mapping from graph cuts homologous to $\mathscr{J}$ to time-sheets homologous to $D(\mathscr{J})$. Let $U \sim \mathscr{J}$. For each vertex $u \in U$ there is an associated spacetime cell $R(u)$. Thus, to the entire cut $U$, there is an associated spacetime volume
\begin{equation}
    R(U) := \bigcup_{u \in U} R(u)\,.
\end{equation}
The corresponding time-sheet is then defined as the bulk part of its boundary $\partial R(U)$,
\begin{equation}\label{eq:J-cut-time-sheet}
    \ts(U) := \bigcup_{(u,u') \in \mathscr{C}(U)} \left( R(u) \cap R(u') \right) = \bigcup_{e \in \mathscr{C}(U)}  \ts(e)\,,
\end{equation}
which is by construction in the correct homology class for $\mathscr{J}$. Since the weight of an edge has been defined as the area of the partial minimax surface on the partial time-sheet, the total weight of the cut $U$ quals the area of a (possibly disconnected) surface made of various portions of minimax surfaces along the time-sheet $\ts(U)$:
\be
\|\mathscr{C}(U)\|=\sum_{e \in \mathscr{C}(U)}|\minimax(e)|\,.
\ee

This is where the cooperating property enters the proof: if each partial minimax surface $\minimax(e)$ is also the maximal segment on the corresponding partial time-sheet, then the total weight $\|\mathscr{C}(U)\|$ is greater than or equal to that of the maximal achronal surface on $\ts(U)$, since it is being maximized subject to a weaker constraint (each partial surface being achronal, versus their union being achronal):
\be
\|\mathscr{C}(U)\| \ge \max_{\surf\in\ts(U)}|\surf|\,,
\ee
where the maximum is over achronal surfaces in $\ts(U)$. Since this is true for all cuts $U\sim\mathscr{J}$, the 
graph entropy cannot be smaller than the 
HRT area:
\be\label{coopineq}
    S_{G(\conf)}(\mathscr{J})\ge S(\mathscr{J})\,.
    \ee
Given \eqref{Sgraph}, this implies \eqref{Sgraphequal}.   
\end{proof}

\subsection*{The cooperating conjecture}

\begin{figure}
    \centering
    \includegraphics[width=\linewidth]{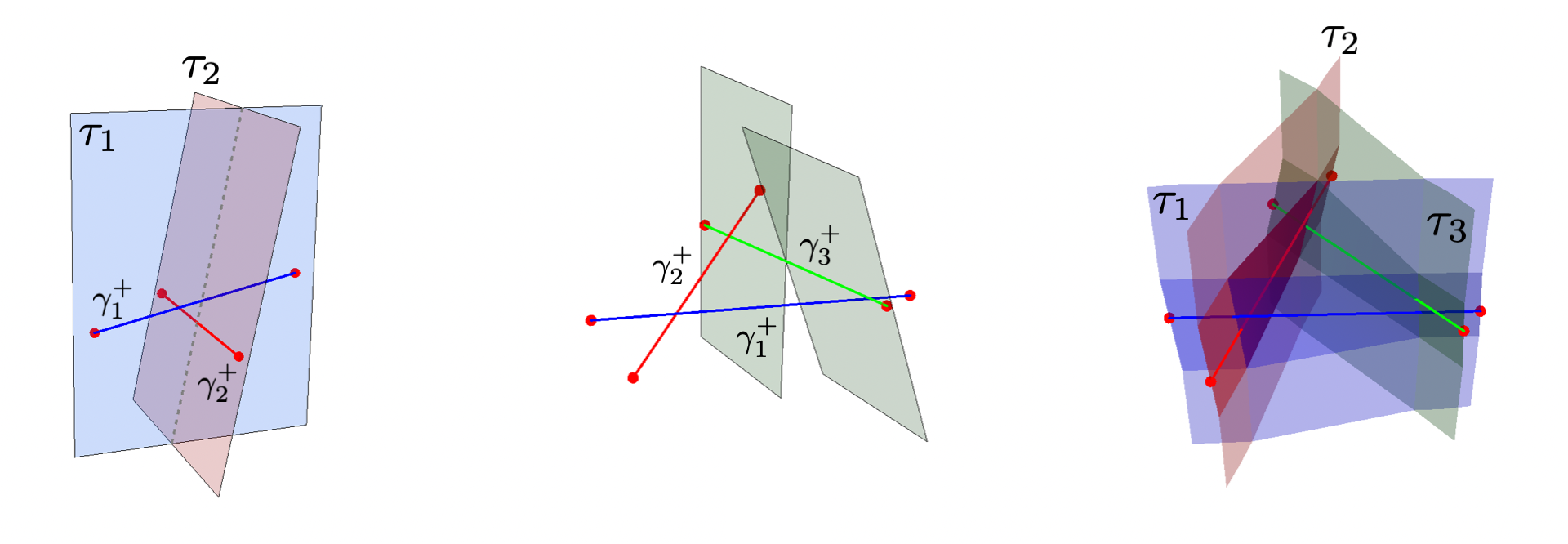}
    \caption{Toy example in Minkowski spacetime illustrating the jump in complexity from two to three intersecting time-sheets. {\bf Left:} Two boosted minimax surfaces $\minimax_1$ (blue) and $\minimax_2$ (red) which do not intersect. To make the configuration cooperating one imposes orthogonality between the surfaces and the intersection locus (dashed gray line). Two planar time-sheets are enough to meet this constraint. {\bf Middle:} Three boosted minimax surfaces $\minimax_1$ (blue), $\minimax_2$ (red) and $\minimax_3$ (green). The orthogonality constraint now gives six boosts for only three intersections: planar time-sheets (here we show only one of them in green) cannot be made to work. {\bf Right:} The configuration can be made cooperating but one needs more degrees of freedom, allowing the time-sheets to bend. (In fact, here we can achieve a cooperating configuration by piecewise-planar sheets; we indicate the pieces by change in the shading.)}
    \label{fig:coop-in-R3}
\end{figure}

Having motivated the need for a generalization of theorem \ref{thm:coop-2party}, we now proceed to describe it in more detail. The notation below will be the same as that introduced in subsection \ref{sec:coop}, in particular $\confset$ is the set of configurations $\conf$, $\coopset$ is the set of cooperating configurations and $\mintset$ is the set of minimally intersecting configurations. However, their definitions must be adapted to accomodate for arbitrarily many time-sheets. Let $\mathscr{R} \subset \pwset$ be a (not necessarily complete) collection of $\textsf{r} \coloneq |\mathscr{R}|$ boundary regions.\footnote{Just like in the 2-party case, it is only interesting to consider the case where there are crossings. Here, however, we do not make further assumptions about the regions (in particular, they can be composed of disconnected components).} Denote with $\tsset_{\mathscr{I}}^{\,+}$ the set of minimax time-sheets for some region $\mathscr{I}$, then we define
\begin{equation}\label{eq:configuration}
  \confset_{\mathscr{R}} := \prod_{\mathscr{I} \in \mathscr{R}}\tsset_{\mathscr{I}}^{\,+} = \{(\ts_{\mathscr{I}_1}, \ts_{\mathscr{I}_2},\dots, \ts_{\mathscr{I}_{\textsf{r} }}) \ \ |\ \   \ts_{\mathscr{I}_i} \in \tsset_{\mathscr{I}_i}^{\,+} \text{ for every } i \in [\textsf{r} ]\}
\end{equation}
to be the set of configurations $\conf$ for the regions specified by $\mathscr{R}$ expressed as $\textsf{r}-$tuples of minimax time-sheets. As an example, choose $\mathscr{R} = \{\{1,2\},\{2,3\},\{3,4\}\}$ corresponding to choosing regions $AB$, $BC$ and $CD$ on the boundary. Then, a configuration will be given by a triplet $(\ts_{AB}, \ts_{BC},\ts_{CD}) \in \tsset_{AB}^{\,+}\times \tsset_{BC}^{\,+}\times\tsset_{CD}^{\,+}$. See figure \ref{fig:coop-3-party} for a drawing of this example.

A key distinction with the setup from subsection \ref{sec:coop} is that, in this case, we typically have partial time-sheets that do not reach the asymptotic boundary, and are instead bounded by the intersection with two (or possibly more) other time-sheets. This leads to a non-trivial increase in complexity for cooperation.

To better appreciate this increase in complexity, let us consider a simple example in Minkowski space $\mathbb{R}^{2,1}$. First consider the case of only two intersecting time-sheets, each containing one HRT surface (which will be straight segment). Recall that a necessary condition for the partial minimax surfaces to be maximal on their respective time-sheets is that the surfaces meet each intersection orthogonally. Orthogonality is imposed by requiring that the tangent vectors on the HRT surfaces be perpendicular to the tangent vectors at the intersection loci. This constraint leads us to two boosts for two time-sheets: so a cooperating configuration can always be achieved 
by choosing simple planes for the two time-sheets. Now let us repeat the same problem with three pairwise-intersecting time-sheets.
Each intersection will give us a pair of boosts, so in total we will have six boosts for only three time-sheets. This is an overconstrained problem if we restrict to using only planes, so we need to allow for the time-sheets to bend. See figure \ref{fig:coop-in-R3} for a visual illustrating this example.

To make matters worse, a general configuration $\conf \in \confset_{\mathscr{R}}$ may have time-sheets which intersect in complicated ways. In fact, two different configurations for the same boundary regions may differ in terms of numbers of time-sheet intersections, type of intersections (timelike, spacelike or null) and even codimension of intersections. Of course, configurations will all be similar in a neighborhood of the minimax surfaces $\minimax_{\mathscr{I}_i}$, since the time-sheets are constrained to contain its minimax surface (with the minimax surface maximal on it). To simplify this large variability, we want to find a similar definition for the set $\mintset$ of minimally intersecting configurations introduced in subsection \ref{sec:coop}.
\begin{definition}[Minimally intersecting time-sheets]
A configuration is said to be minimally intersecting if for every pair of time-sheets in the configuration, any two connected components partition the bulk into cells that extend to the conformal boundary.
\end{definition}
Again, the intuition behind this definition is to have a configuration that looks and behaves analogously to RT surfaces on a bulk time slice, which may intersect each other (depending on the boundary region), but  will do so at most once for any pair of connected components (as a consequence of minimality). We note that the set $\mintset_{\mathscr{R}}$ is never empty, as the set of entanglement horizons for the regions in $\mathscr{R}$ is always a minimally intersecting configuration. This is because once a set of null generators crosses another set, they cannot weave back to intersect multiple times. However, although a configuration comprising entanglement horizons lies in $\mintset$, it does not lie in $\coopset$, because any surface $\minimax_{\mathscr{J}}$ which crosses a horizon $\ts_{\mathscr{I}}$ (with $\mathscr{I} \ne \mathscr{J}$) fails orthogonality.

While minimally intersecting configurations share some properties with collections of RT surfaces, to draw complete analogy to the static case and to allow for the construction of the spacetime graph model we need the (much stronger) cooperating constraint specified in the definition \ref{def:cooperating}. Our main conjecture is a generalization of theorem \ref{thm:coop-2party} for an arbitrary configuration:

\begin{conjecture}[Cooperating conjecture]\label{con:cooperating-conjecture}
Any choice $\mathscr{R}$ of boundary regions admits a cooperating configuration of time-sheets $\mathfrak{t} \in \mathfrak{T}_{\mathscr{R}}$.
Moreover, any cooperating configuration is minimally intersecting, 
\begin{equation}
\emptyset \ne \coopset_\mathscr{R}\subset\mintset_\mathscr{R} \ .
\end{equation}
\end{conjecture}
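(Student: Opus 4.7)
The overall plan is to extend the two-party arguments of Theorem~\ref{thm:coop-2party} and Theorem~\ref{thm:min-int} to the multi-region setting via a single global variational principle. For the existence claim $\coopset_\mathscr{R}\neq\emptyset$, I would define the functional
\be
\Phi(\conf) \;:=\; \sum_{\mathscr{I}\in\mathscr{R}}\,\sum_\alpha\,\max_{\psurf{\mathscr{I}}{\alpha}\subset\pts{\mathscr{I}}{\alpha}}\,|\psurf{\mathscr{I}}{\alpha}|,
\ee
with the sum running over all partial time-sheets induced by the configuration $\conf\in\confset_\mathscr{R}$, and consider a minimizer $\conf^\star$ (whose existence I would assume, as the authors do for minimax itself). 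The target is to show that at $\conf^\star$ each union $\bigcup_\alpha\psurf{\mathscr{I}}{\alpha}$ is connected, so that by Lemma~\ref{lemma:relaxed_equal_area} it equals the relaxed minimax surface $\minimax_\mathscr{I}$, which is exactly the statement that $\conf^\star\in\coopset_\mathscr{R}$.

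The connectivity at $\conf^\star$ would be argued by contradiction, adapting the four-case analysis from the proof of Theorem~\ref{thm:coop-2party}. Suppose two adjacent partial maximal surfaces $\psurf{\mathscr{I}}{\alpha},\psurf{\mathscr{I}}{\alpha'}$ fail to meet at some seam $\mathpzc{i}\subset\ts_\mathscr{I}\cap\ts_\mathscr{J}$. I would produce a normal deformation $\eta$ of $\ts_\mathscr{J}$ supported in a small neighborhood of the endpoint of $\psurf{\mathscr{I}}{\alpha}$ on $\mathpzc{i}$, chosen to avoid every global minimax surface, so that by stability (Theorem~\ref{thm:neitherboth}) the perturbed $\ts_\mathscr{J}+\epsilon\eta$ remains in $\confset_\mathscr{R}$ for $\epsilon$ small. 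Exactly as in the two-party proof, this strictly decreases $|\psurf{\mathscr{I}}{\alpha}|$ at first order in $\epsilon$, while neighboring partial maxima change by at most $O(\epsilon\delta)$ with $\delta$ the width of the supporting neighborhood. Choosing $\epsilon\ll\delta$ yields a net decrease in $\Phi$, contradicting the minimality of $\conf^\star$. The degenerate sub-cases (multiple maxima on a partial sheet; worst-case colocations of seam crossings) would be handled by simultaneous deformations in disjoint neighborhoods, just as in Cases II and III of Theorem~\ref{thm:coop-2party}.

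For the second part $\coopset_\mathscr{R}\subset\mintset_\mathscr{R}$, the paper's definition of minimally intersecting is pairwise, so the surgery argument of Theorem~\ref{thm:min-int} transplants almost verbatim. A non-minimally-intersecting pair $(\ts_\mathscr{I},\ts_\mathscr{J})$ in a cooperating configuration necessarily carries a ``bubble'' cell bounded by partial sheets $\ts_\mathscr{I}^{(\alpha)},\ts_\mathscr{J}^{(\beta)}$ not reaching $\N$; replacing whichever piece carries the larger partial minimax area by the other produces a new time-sheet in the correct homology class whose maximal achronal surface is strictly smaller (after smoothing the induced sharp corners, which can only decrease the area further). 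This contradicts the minimax property of $\ts_\mathscr{I}$ or $\ts_\mathscr{J}$, forcing $\coopset_\mathscr{R}\subset\mintset_\mathscr{R}$.

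The principal obstacle is in Step 2. For $\mathsf{N}\ge 3$ a partial time-sheet can be bounded entirely by other time-sheets with no asymptotic anchor, and triple (or higher) intersections generically occur at loci of codimension $\ge 3$. A single normal deformation of $\ts_\mathscr{J}$ near one seam typically perturbs several neighboring seams, so a local surgery intended to restore the connectivity of $\bigcup_\alpha\psurf{\mathscr{I}}{\alpha}$ may disrupt the maximality of some other partial minimax surface, pushing the deformed configuration outside $\confset_\mathscr{R}$. The toy computation of figure~\ref{fig:coop-in-R3} is precisely a warning that the orthogonality constraints at all seams are generically over-determined once the number of time-sheets exceeds two, and the boosted-HRT example of subsection~\ref{sec:against} suggests that this over-determination may be a genuine obstruction rather than merely an artefact of my choice of deformation. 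A secondary worry is the existence of a minimizer of $\Phi$ in the first place: the space $\confset_\mathscr{R}$ has no obvious compactness, and a minimizing sequence could degenerate to configurations with singular or vanishing seams. These difficulties are, I expect, exactly why the authors state this as a conjecture.
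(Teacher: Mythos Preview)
Your analysis is essentially correct, and indeed aligns with the paper's own assessment: this statement is a \emph{conjecture}, and the paper does not supply a proof of the existence part $\coopset_\mathscr{R}\neq\emptyset$. The paper explicitly says ``it is a conjecture and therefore we do not, at the moment, have a general proof,'' and then offers only numerical and analytic evidence (the boosted-H examples of \S\ref{sec:favor}) together with the potential counterexample of \S\ref{sec:against}. So there is no ``paper's own proof'' to compare against for the first clause.

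For the inclusion $\coopset_\mathscr{R}\subset\mintset_\mathscr{R}$, your argument matches the paper exactly: the paper notes that ``the proof of theorem~\ref{thm:min-int} regarding minimal intersecting configurations can be easily recycled for the more general case'' by applying the pairwise surgery to any offending pair of connected components. Your transplant of that argument is correct.

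On the existence side, your global functional $\Phi$ is a natural strengthening of the strategy in Theorem~\ref{thm:coop-2party}, and is arguably more promising than the pairwise approach the paper dismisses (``if $(\ts_1^*,\ts_2^*)$ is a cooperating pair and $(\ts_2^\dagger,\ts_3^\dagger)$ is also cooperating, it is not at all clear whether $\ts_2^*$ and $\ts_2^\dagger$ are compatible''). Your single minimization over all of $\confset_\mathscr{R}$ sidesteps that compatibility issue by construction. However, the obstruction you yourself flag is genuine and is precisely the one the paper highlights: for three or more sheets, a partial time-sheet can be bounded by multiple seams simultaneously (cf.\ figure~\ref{fig:coop-3-party}), so a local deformation of one seam designed to reduce $|\psurf{\mathscr{I}}{\alpha}|$ can unavoidably perturb adjacent partial maxima on \emph{other} sheets by comparable amounts, and the $O(\epsilon)$ versus $O(\epsilon\delta)$ hierarchy that drove the two-party proof no longer cleanly separates. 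The over-determination of orthogonality constraints illustrated in figure~\ref{fig:coop-in-R3} and the frustrated ``small triangle'' configuration of \S\ref{sec:against} are exactly the kinds of obstructions you anticipate. Your proposal is thus an honest and accurate account of where the argument stands; it is not a proof, but neither does the paper claim one.
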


\subsection*{Comments}

The first comment is the most obvious: it is a conjecture and therefore we do not, at the moment, have a general proof. It is however possible to test the conjecture for a class of configurations, which we review in the remainder of the paper. In particular, we focus on configurations for which cooperation naively appears difficult to achieve, choosing examples with nearby pairs of intersecting (or nearly intersecting) HRT surfaces which are highly boosted relative to one another. The most natural choice of time-sheets for these configurations leads to seams that move towards or away from each other, in a manner seemingly in tension with the local maximality condition for cooperation. We find that in the first class of configurations studied below, a more clever choice of time-sheets resolves this tension. In the second class, however, we are unable to find a suitiably clever choice, leaving cooperation still apparantly difficult to achieve. Due to these challenging configurations, the validity of the conjecture is in question. However, we expect that it may still hold true, either with some mild modifications or through a more complex construction of cooperating time-sheets.

On the other hand,
the proof of theorem \ref{thm:min-int} regarding minimal intersecting configurations can be easily recycled for the more general case. One simply applies the same argument individually to any pair of connected time-sheets that intersect, forming bounded cells that do not extend to the conformal boundary. This guarantees that if a cooperating configuration exists, it will be minimally intersecting. The minimally intersecting property is necessary for the spacetime graph model to reduce in the static case to the same topology as the RT graph model, so this is a useful (though not strictly necessary\footnote{For the purpose of reproducing a given holographic entropy vector using the more compact representation in terms of a graph model, the graph is actually not unique:  one can easily add extra vertices and edges which don't participate in any min-cut and therefore preserve the entropy vector.  We suspect that a time-sheet configuration which is optionally non-minimally-intersecting would merely correspond to such a reducible graph.}) result.

To prove the cooperating conjecture, one might be tempted to apply theorem \ref{thm:coop-2party} to all possible pairings of time-sheets in the configuration. This of course can be done, but it will not get us far into a proof. In a general configuration there are partial time-sheets that do not extend to the conformal boundary and their shape cannot be fixed by decoupling the multiple constraints, i.e.\ if $(\ts_1^*, \ts_2^*)$ is a cooperating pair and $(\ts_2^\dagger, \ts_3^\dagger)$ is also cooperating, it is not at all clear whether $\ts_2^*$ and $\ts_2^{\dagger}$ are compatible solutions.
See figure \ref{fig:coop-3-party} for an illustration.
    
Finally, we remark that the conjecture is a dynamical statement. Therefore, a proof should make use of standard holographic assumptions such as the null energy condition (NEC), the Einstein equations, and AdS boundary conditions.

\begin{figure}
    \centering
    \includegraphics[width=0.9\linewidth]{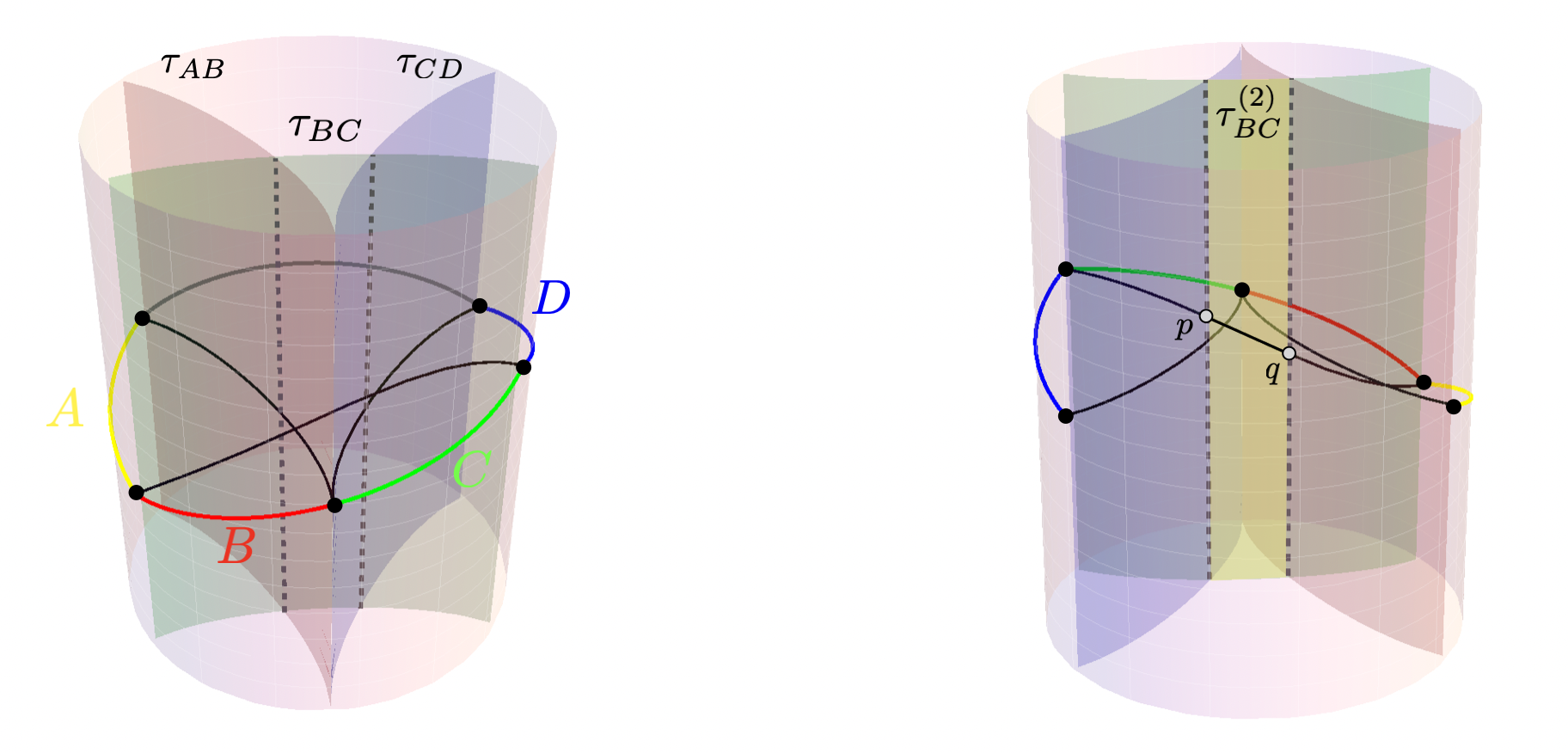}
    \caption{Two viewpoints of a configuration $(\ts_{AB}, \ts_{BC}, \ts_{CD})$ of minimax time-sheets for a collection of 3 boundary regions $AB$, $BC$ and $CD$. We have showed the construction in the unregulated spacetime for clarity. {\bf Left:} The regions $A$ (yellow), $B$ (red), $C$ (green) and $D$ (blue) are boosted on the boundary and they lay on a common Cauchy slice (we show in gray the purifier region). In black we show the three minimax surfaces $\minimax_{AB}$, $\minimax_{BC}$ and $\minimax_{CD}$. The time-sheets intersect twice (shown as dashed gray lines), one between $\ts_{AB}$ and $\ts_{BC}$ and the other between $\ts_{BC}$ and $\ts_{CD}$. {\bf Right:} Same configuration but as viewed from the back. Contrary to the 2-party scenario shown in figure \ref{fig:coop-2-party}, now there is a partial time-sheet $\ts_{BC}^{(2)}$ (shaded in yellow) which does not extend to the conformal boundary, and it is bounded by the two interection seams. On it there is a partial minimax surface going from $p$ to $q$. Therefore, on $\ts_{BC}^{(2)}$, the maximization problem of finding the maximal achronal segment imposes 
    boundary conditions on both endpoints. Indeed, 
    the configuration shown is not cooperating, since the $pq$-portion  on $\ts_{BC}^{(2)}$ does not meet the intersection seams orthogonally.}
    \label{fig:coop-3-party}
\end{figure}

\subsection{Evidence in favor of the conjecture}\label{sec:favor}

The only scenario where we have full control is one where the configuration is made of exactly two connected time-sheets, which has been already discussed in subsection \ref{sec:coop}. There, one can prove the conjecture in full generality. In the absence of a proof for the general case, we can test the conjecture by considering setups with more than two intersecting time-sheets of varying degree of complexity. To this end, in the remainder of this section, we restrict to $2+1$ dimensions (where computations are tractable) and give strong evidence that the conjecture holds for a large family of configurations and spacetimes. 

The family can be best understood by starting with its simplest member. This consists of three HRT surfaces in 2+1 dimensions, where the first one intersects the other two orthogonally, and those two are boosted relative to each other. For concreteness, we take the first HRT surface to lie in the $t=0$ slice and be associated to exactly half of the boundary. We call this configuration the \emph{boosted H}, 
as pictorially suggestive from the configuration, cf.\  the left panel of figure \ref{fig:spine}.
The six endpoints of the three HRT surfaces must be mututally spacelike in order to be placed on a common boundary Cauchy slice. As a result, there is a limit to how much the two HRT surfaces can be boosted, as we will cover in more detail below.\footnote{
    Note that this does not guarantee that the three HRT surfaces themselves can be placed on a common \emph{bulk} Cauchy slice, and even when they can, this Cauchy slice need not be a maximin slice, in which case we would not achieve automatic reduction to an RT-type scenario.  This makes the present family a non-trivial testing ground for our minimax construction.
} 
The objective is to show that this configuration can be made cooperating. We will not provide an explicit construction for the three time-sheets, but will instead 
describe them by  their intersections, represented by timelike curves. 
These two intersections will generate precisely seven partial surfaces: six that connect a boundary endpoint to an intersection seam of the time-sheets, and one middle segment along the static spine bounded by the two intersection seams. Here, one can give strong evidence in support for the conjecture by explicitly showing local maximality of these partial surfaces. 

A necessary condition for the partial minimax surfaces to be maximal is for them to meet the intersection seam orthogonally. So given a point and a tangent vector, we are instructed to find an acceleration for the two seams. A natural choice for the intersection seams is to chose them to be timelike geodesics, since they naturally converge in AdS. We will show that this ansatz suffices.

Because of the converging property of timelike geodesics in AdS, proving the conjecture for the boosted H allows for a generalization of the setup to an infinite family of configurations, where the first HRT surface (which we call the \emph{spine}) is once again chosen to lie in the $t=0$ slice and be associated to exactly half of the boundary, while arbitrarily many HRT surfaces (which we call the \textit{ribs}) intersect the spine orthogonally and are relatively boosted with one another, cf.\  the right panel of figure \ref{fig:spine}. Though we have convinced ourselves of the above intuition, in the computations below we are going to restrict to the case of two ribs without performing any explicit computation for the general case.

In the following subsections, we test the cooperating conjecture for the boosted H family in two bulk spacetimes: (1) pure AdS$_3$, and (2) asymptotically AdS$_3$ with spherically symmetric matter that satisfies the null energy condition.

We will show that local maximality is guaranteed only when (i) boundary causality holds and (ii) the bulk energy-momentum tensor satisfies the NEC. We interpret this to mean that, whatever GR theorem this cooperating property corresponds to, it seems to be highly sensitive to the dynamical properties that we know the covariant holographic entanglement entropy
obeys. We consider this a non-trivial evidence in support of the conjecture.

\begin{figure}
    \centering
    \includegraphics[width=0.9\linewidth]{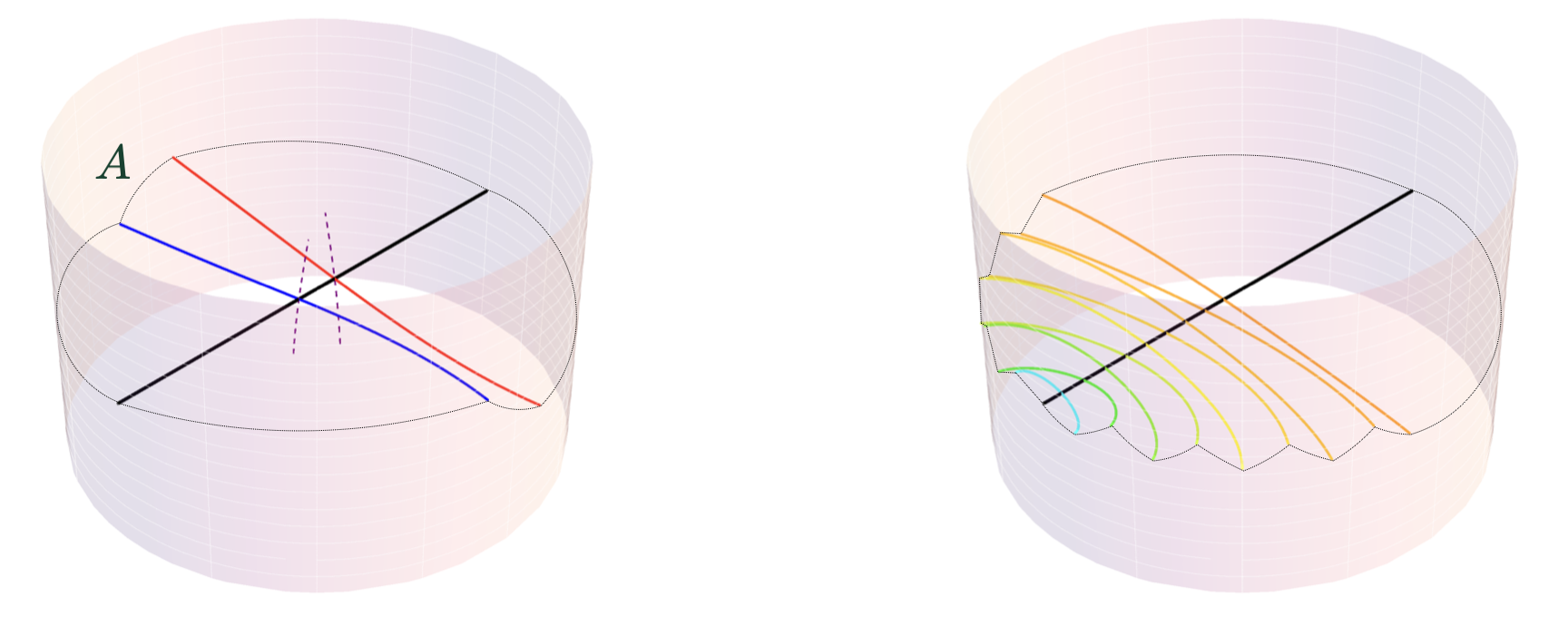}
    \caption{{\bf Left:} The boosted H configuration in global AdS$_3$. In black we depict the static RT surface that covers half the boundary, while the blue and red surfaces represent the other two HRT surfaces boosted relative to each other. The amount of boost is constrained by boundary causality, all the boundary regions must lie on a common boundary Cauchy slice (shown as the dashed line). For nearby ribs, this amounts to enforcing that the region $A$ (and its opposite counterpart) is spacelike. The dashed purple lines indicate our ansatz for the intersection seams of the three time-sheets (not shown). The intersections, being timelike geodesics, converge toward one another allowing the configuration to cooperate. {\bf Right:} A generalization of the boosted H configuration in global AdS$_3$ for nine ribs shown in various colors.}
    \label{fig:spine}
\end{figure}

\subsubsection{Pure AdS$_3$}
The setup is AdS$_3$ in global coordinates $(t, \theta, \phi)$ with line element
\be
\dd s^2 = \frac{1}{\cos^2{\theta}}\left(-\dd t^2 + \dd \theta^2 + \sin^2{\theta}\,\dd\phi^2\right).
\ee
The first HRT surface (the spine) will be taken to be static, anchored at the $\phi = 0$ and $\phi = \pi$ points on the boundary. The other two (the ribs) will be composed of nearby HRT surfaces for boundary regions with an opening angle $\Delta \phi \in (0,\pi)$ centered at $\phi = 0$ relatively boosted with one another. 

More precisely, the configuration is constructed as follows -- we refer the reader to appendix \ref{app:geodesic_seams} for a detailed computation of the spacelike and timelike geodesics. Begin by defining a static rib with opening angle $\Delta \phi$. We now boost the boundary region with some boost parameter $\chi$, so the region's endpoints have time separation $\Delta t = \Delta \phi \tanh{\chi}$. The second rib is constructed in a similar manner, by enlarging the region to have an initial opening angle $\Delta\phi' = \Delta\phi + \delta \phi$. We boost the boundary region oppositely by $-\chi$, obtaining $\Delta t' = -\Delta\phi' \tanh{\chi}$. The boost parameter $\chi$ cannot be arbitrarily large as it is constrained by boundary causality; the six endpoints must be mutually spacelike in order to be placed on a common boundary Cauchy slice, see figure \ref{fig:spine}. We denote the maximum boost allowed by causality  with $\chi_{\text{c}}$, 
and it reads\footnote{If the two ribs are near each other, which is the scenario we are considering, this constraint is enforced by demanding the region bounded by the ribs' endpoints (labeled $A$ in figure \ref{fig:spine}) 
 to be spacelike.}
\be\label{eq:causality-bound-AdS3}
\chi_{\text{c}} = \frac{1}{2}\log\left(\frac{\Delta\phi'}{\Delta\phi}\right).
\ee
We now need to define the timelike intersection seams of the time-sheets. As mentioned, we will show that timelike geodesic seams, shot orthogonally from the intersection point of the ribs with the spine, will suffice to make the configuration cooperating. By finding the tangent vectors to the spine and one of the two ribs at the intersection point $(0,\,\theta_0,\,0)$, the orthogonal timelike vector is found to be  
\be\label{eq:ortho-vector}
\left(\frac{1+\cos{\Delta\phi}}{1+\cos{\Delta t}}\tan\frac{\Delta \phi}{2},\ 0 ,\ \tan\frac{\Delta t}{2}\right).
\ee
From this, one solves for the timelike geodesic $\gamma_{\perp}$ to quadratic order in the affine parameter $\lambda$, obtaining a parameterization $( t(\lambda),\theta(\lambda), \phi(\lambda))$ which reads
\be
\gamma_{\perp}^{\mu}(\lambda) = \left(\frac{1+\cos{\Delta\phi}}{1+\cos{\Delta t}}\tan\frac{\Delta \phi}{2}\,\lambda, \,  \theta_0 - \left(\frac{(t_1^2 - \phi_1^2) \tan{\theta_0}}{2}\right)\,\lambda^2, \, \tan\frac{\Delta t}{2}\lambda \right).
\ee
with $t_1$ and $\phi_1$ are the velocities at $\lambda = 0$ given in \eqref{eq:ortho-vector}.

\begin{figure}
    \centering
    \includegraphics[width=\linewidth]{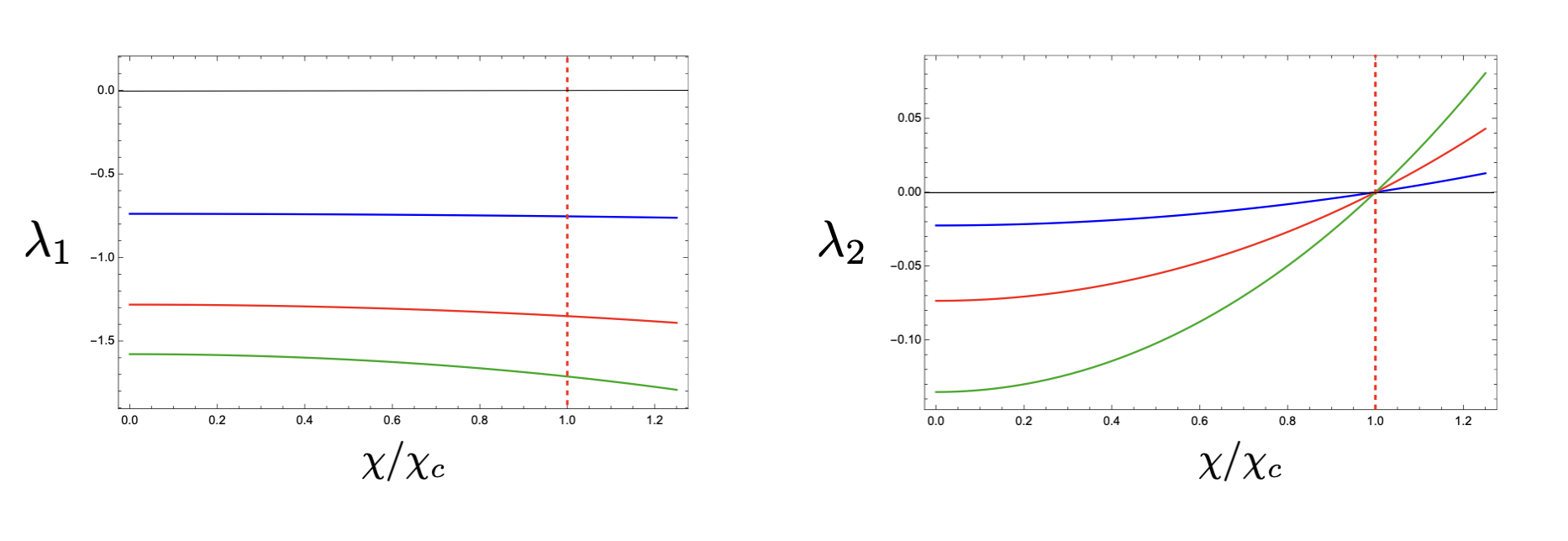}
    \caption{Plots of the two eigenvalues $\{\lambda_1,\lambda_2\}$ of the Hessian $\mathbf{H}(\Delta(0,0))$ as a function of the normalized boost $\chi/\chi_{c}$ for the boosted H configuration with parameters $\Delta\phi_0 = \pi/2$ (blue), $\Delta\phi_0 = \pi/3$ (red), $\Delta\phi_0 = \pi/4$ (green)  and $\delta\phi = \pi/5$ for all of them. The first eigenvalue is never positive, while the second one changes sign precisely at $\chi = \chi_{c}$. }
    \label{fig:AdS3-coop-plots}
\end{figure}

We now move to the issue of cooperation. We need to check that all seven partial HRT surfaces are locally maximal with respect to varying the two endpoints along the timelike seams. To do that we define the geodesic distance $\Delta$ between two points $(t_1, \theta_1 ,\phi_1)$ and $(t_2, \theta_2 ,\phi_2)$ in AdS$_3$,
\be\label{eq:distance-fn}
\Delta := \cos{(t_1 - t_2)}\sec{\theta_1}\sec{\theta_2} -\cos{(\phi_1-\phi_2)}\tan{\theta_1}\tan{\theta_2}.
\ee
We begin by considering the partial surface on the spine with one endpoint on the boundary located at $(0,\theta_\infty, 0)$ (with $\theta_\infty$ some cutoff radius) and the other endpoint on the intersection with the first timelike seam at $(0,\theta_0,0)$. To show maximality, we compute the distance $\Delta(\lambda)$ between the fixed boundary endpoint and any other point on the timelike geodesic $\gamma_{\perp}^{\mu}(\lambda)$, and look at the sign of $\Delta''(0)$. We obtain
\be
    \Delta''(0) = \sec{\theta_\infty}\cos^2\frac{\Delta \phi}{2} \sec^2\frac{\Delta t}{2}\sqrt{\frac{1 +\cos{\Delta t}}{\cos{\Delta t}-\cos{\Delta \phi}}} \left[\sqrt{\frac{1 +\cos{\Delta \phi}}{1+\cos{\Delta t}}}-1\right],
\ee
which is negative whenever $\Delta \phi \geq \Delta t$, i.e.\ whenever the region $(\Delta\phi, \Delta t)$ is spacelike. This is true for any value of $\chi$, so this partial surface (and by symmetry also the opposite one along the spine) poses no threat for cooperation. We now consider one of the four boundary anchored partial surfaces of the ribs. The only difference is the boundary endpoint now located at $\left(\frac{\Delta t}{2}, \theta_\infty, \frac{\Delta \phi}{2}\right)$, otherwise the computation is exactly the same, leading to
\be
    \Delta''(0) = \cos^2{\frac{\Delta \phi}{2}}\sec{\frac{\Delta t}{2}}\left[\cos{\frac{\Delta \phi}{2}}\sec{\frac{\Delta t}{2}}\tan{\theta_\infty}\sqrt{\frac{1 +\cos{\Delta\phi}}{\cos{\Delta t}-\cos{\Delta \phi}}} - \sec{\theta_\infty}\sqrt{\frac{1 +\cos{\Delta t}}{\cos{\Delta t}-\cos{\Delta \phi}}} \ \right]
\ee
which is negative whenever $\Delta \phi \geq \Delta t$, as before. The other three boundary anchored partial ribs have similar expressions.

The only remaining (and also the less trivial) partial surface to check is the middle component that is bounded by the two intersection seams. In particular, we must show that the distance function $\Delta(\lambda_1,\lambda_2)$ between the two timelike geodesics $\gamma^{\mu}_{\perp}(\lambda_1)$ and $\gamma^{\mu}_{\perp}(\lambda_2)$ attains a maximum at $(0,0)$. This can be checked explicitly with Mathematica by performing an analysis of the Hessian $\mathbf{H}(\Delta(\lambda_1,\lambda_2))$ at that point, i.e.\ we check if it is negative-definite. We find that maximality is guaranteed whenever $\chi \leq \chi_{c}$. See figure \ref{fig:AdS3-coop-plots} for the numerics for different configurations.

\subsubsection{Spherically symmetric matter}

\begin{figure}
    \centering
    \includegraphics[width=0.7\linewidth]{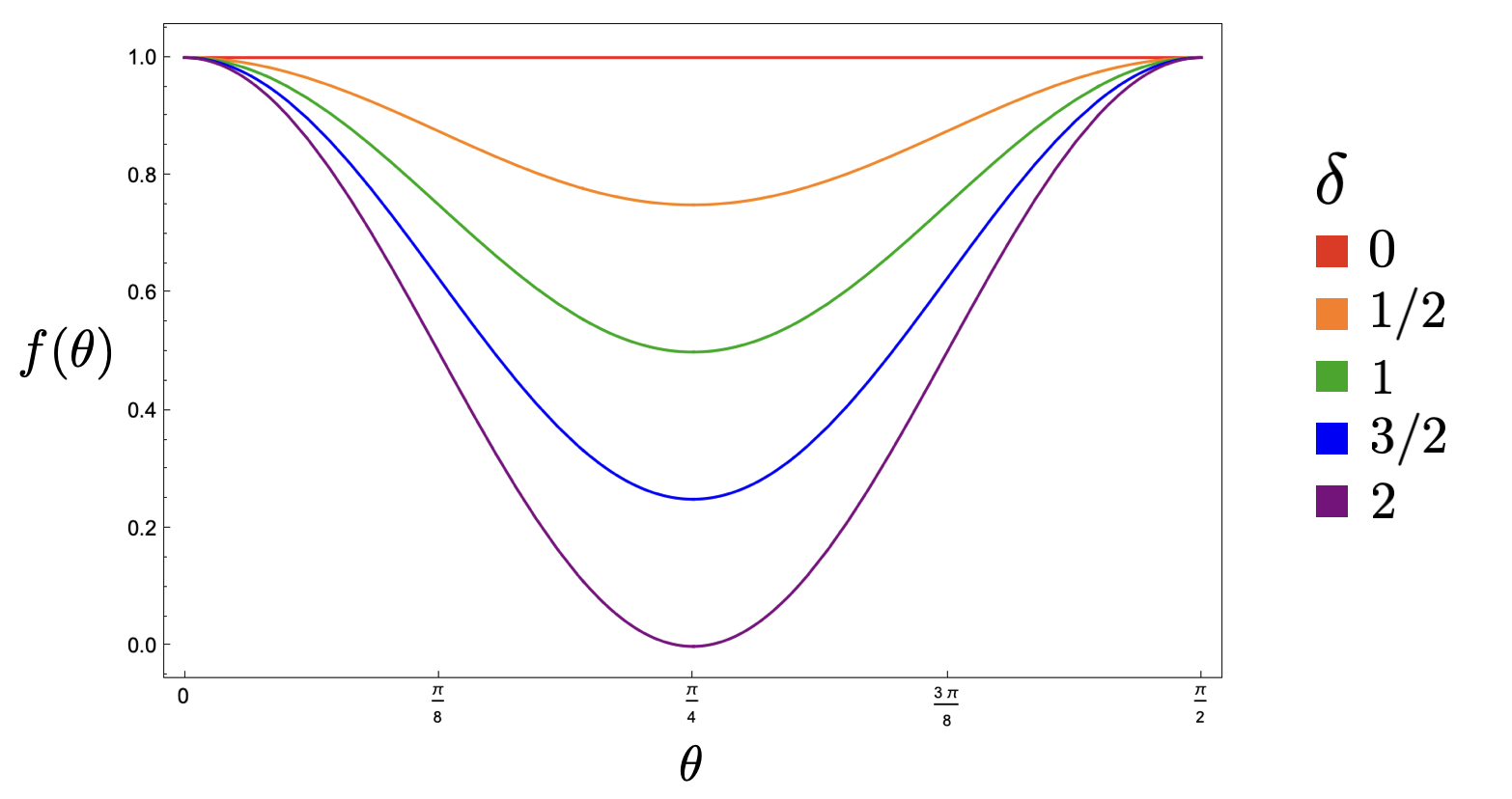}
    \caption{The ``blackening'' factor $f(\theta)$ defining the AAdS geometry \eqref{eq:metric-matter} for various values of the parameter $\delta$. For $\delta = 0$, one recovers pure global AdS$_3$. The matter warps the geometry most strongly at $\theta = \pi/4$, diminishing near the asymptotic boundary as required.}
    \label{fig:blackening}
\end{figure}

Having provided evidence to the conjecture in pure AdS$_3$, we now move away from the vacuum to test how the cooperating property holds up when matter is introduced in the spacetime. Consider the following bulk metric
\be\label{eq:metric-matter}
\dd s^2 = \frac{1}{\cos^2{\theta}}\left(-f(\theta)\, \dd t^2 + \frac{\dd \theta^2}{f(\theta)} + \sin^2{\theta}\, \dd \phi^2\right), \quad f(\theta) = 1 - \frac{\delta}{2}\sin^{2}(2\theta).
\ee
The matter supporting this geometry is described by a spherically symmetric and time independent energy-momentum tensor $T_{ab}$ obeying the null energy condition whenever $0\leq \delta < 2$ (in particular, for a null vector $k^a$, the above matter satisfies $T_{ab}k^a k^b = 0$). In figure \ref{fig:blackening} we show a plot of the factor $f(\theta)$ for different values of the parameter $\delta$. We will again consider boundary regions whose minimax surfaces generate a boosted H configuration in the spacetime. Due to the added complexity, the analysis will not be as exhaustive as that in pure AdS$_3$ and we will have to settle for a numerical exploration of a few configurations instead.

We want to find the spine, ribs, and seams in this geometry, just like we have done for the empty AdS$_3$ case. This is harder, but there are a few considerations that will simplify our problem. First, note that the spine gets unmodified by the presence of matter. This conveniently simplifies the computation for the tangent and orthogonal vectors.  
Second, note that we are doing a local analysis, so we only need to solve the geodesic equations for the timelike seams up to quadratic order in the affine parameter. Once again, we refer the reader to appendix \ref{app:geodesic_seams} for some of the details. We found it easier to work with the angular momentum $L$ and energy $E$ of the spacelike geodesics (the ribs), where $L$ controls the intersection point of the rib along the spine (at $\theta_0 = \arctan{L})$ and $E$ its boost ($E = 0$ being the static case). Thus, a configuration of two nearby ribs is defined by their angular momentum $L_1 = L$, $L_2 = L + \epsilon$ and by their equal and opposite boost $E_1 = -E_2 = E$.  In these variables, the parameterization for the timelike geodesic seam shot orthogonally from the intersection point $(0, \theta_0, 0)$ of the spine with a rib with angular momentum $L$ and energy $E$ is
\be
t(\lambda) = \frac{L \cos^2{\theta_0}}{f(\theta_0)}\lambda, \quad\theta(\lambda) = \theta_0 -\frac{L^3}{2(1+L^2)^2}\frac{g(\theta_0)}{f(\theta_0)}\lambda^2, \quad \phi(\lambda) = \frac{E\cos^2{\theta_0}}{f(\theta_0)}\lambda,
\ee
where $g(\theta) = 1 - \frac{\delta}{2}\cos^{4}{\theta}$.

\begin{figure}
    \centering
    \includegraphics[width=0.8\linewidth]{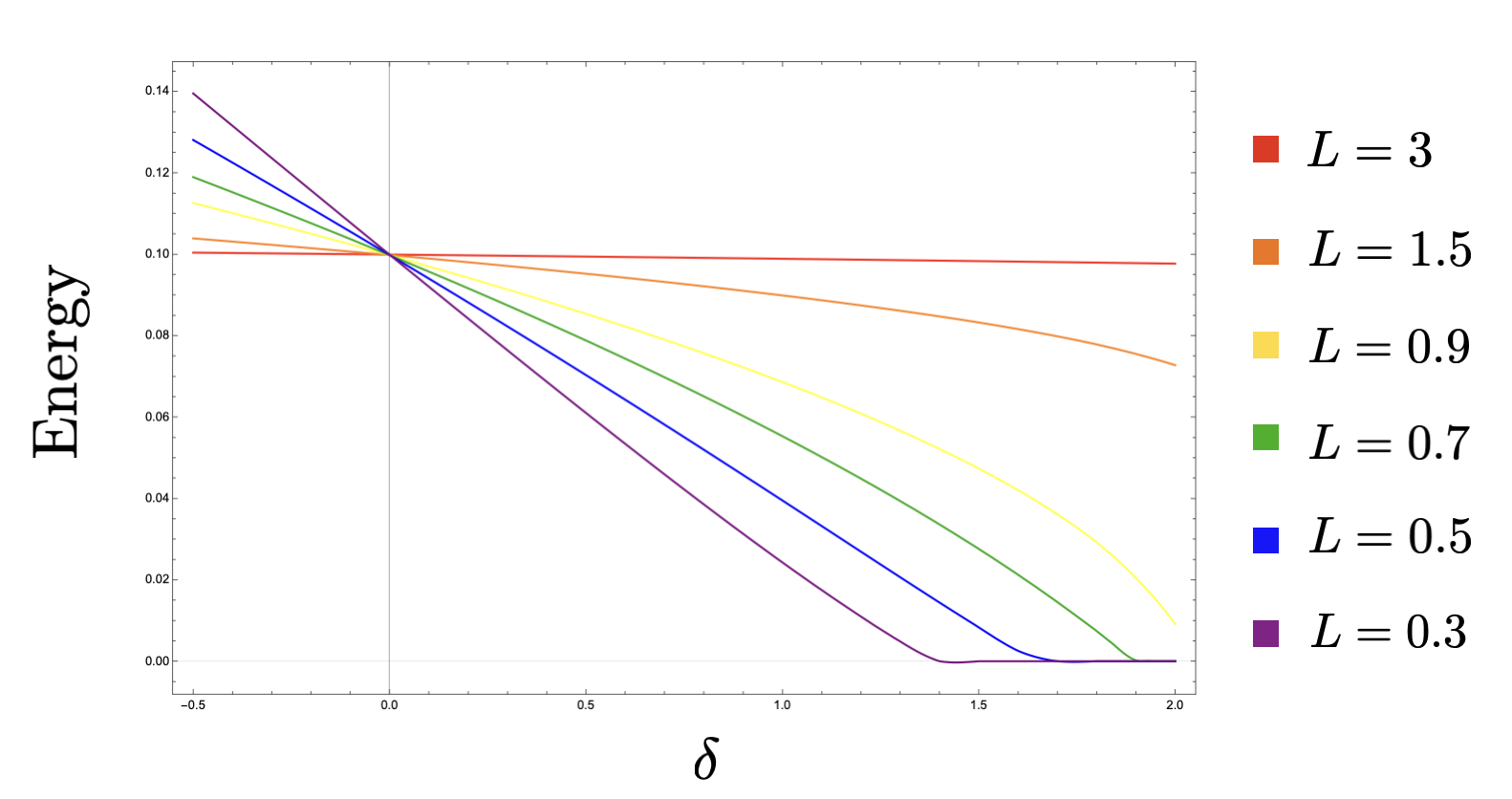}
    \caption{The maximum boost/energy $E_c$ allowed from boundary causality as a function of $\delta$ for various values of $L$ (the position of the first rib) and fixed $\epsilon = 0.2$. Larger values of $L$ correspond to the ribs being near the AdS boundary; this translates to $E_c$ not differing much from its vacuum value and having a flatter curve. As we decrease $L$ we bring the pair of ribs deeper into the bulk and the bound on causality gets stricter and stricter: matter helps the spacelike geodesics converge. For smaller values of $L$, the ribs can cross before reaching the boundary (shown by the purple, blue and green curves hitting the axis before $\delta = 2$). For negative values of $\delta$, where matter does not obey the NEC, we see that $E_c$ increases as we move away from the vacuum: this indicates that NEC may play a key role in cooperation.
    }
    \label{fig:causality-matter}
\end{figure}

\begin{figure}
    \centering
    \includegraphics[width=0.85\linewidth]{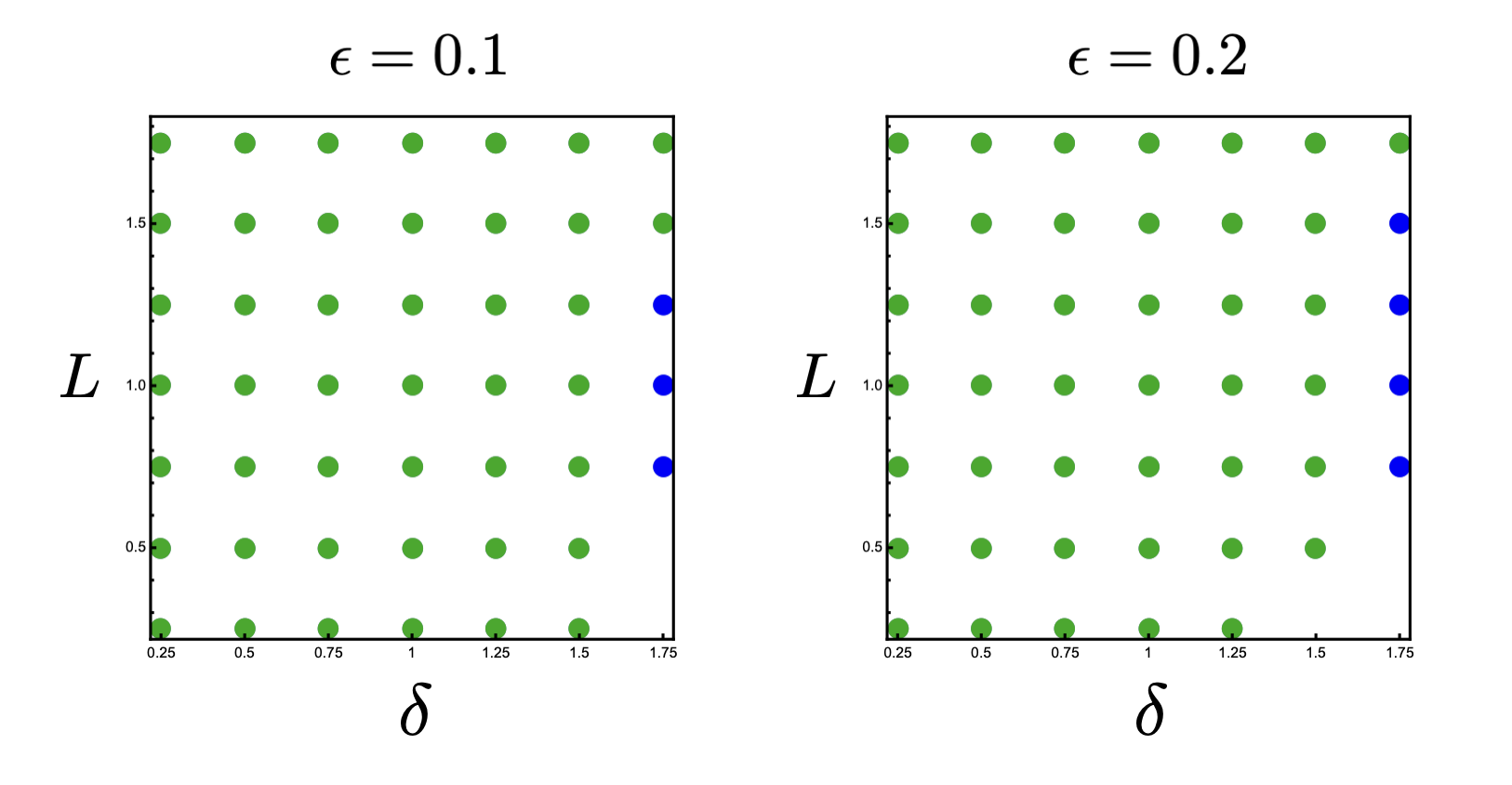}
    \caption{Test of local maximality in the $(L,\delta)$ parameter space for $\epsilon = 0.1$ {\bf (left)} and $\epsilon = 0.2$ {\bf (right)} in the regime where the NEC is obeyed. A green point indicates that the central segment is locally maximal for the worst possible boost dictated by boundary causality, and thus, $E_m$ for that configuration must be at least as large thereby confirming \eqref{eq:matter-bound}. A blue point means that the central segment is still locally maximal, but for a different choice of ansatz for the timelike seams. In particular, one seam was kept geodesic while the second one had a reduced acceleration in the $\theta$ direction (we reduced it by 1/4). A lack of a point means that the triplet $(L,\delta,\epsilon)$ does not produce a boosted H configuration in the bulk, and is therefore discared from the analysis. We chose to explore the range $0.25\leq L \leq 1.75$ because for $L = 1$ the rib intersects the spine at $\theta_0 = \pi/4$, which is the radius where the effects of matter are strongest. Since $f(\theta) \to 1$ both towards the boundary and towards the AdS center, a threat to the conjecture is more likely to be found near $\theta = \pi/4$.}
    \label{fig:coop-matter-1}
\end{figure}

As before, the energy/boost is constrained by boundary causality. We can find this bound numerically by integrating the $L$ and $E$ conservation equations all the way up to the boundary along the two ribs, which allows us to find the angle separation $\delta\phi$ and time separation $\delta t$ of the nearby endpoints of the two geodesics at the boundary, i.e.
\be
t_{\infty} = \int_0^{\infty} \dd\lambda \, \frac{E \cos^2{\theta}}{f(\theta)}, \quad \phi_{\infty} = \int_0^{\infty} \dd\lambda \, \frac{L}{\tan^2{\theta}},
\ee
then $\delta\phi = \phi_\infty(L,E) - \phi_\infty(L+\epsilon,-E)$ and $\delta t = t_\infty(L,E) - t_\infty(L+\epsilon,-E)$. Enforcing $\delta \phi \geq \delta t$ leads to a causality contraint $E \leq E_{c}$. For fixed $(L, \epsilon)$ not all values of $0<\delta<2$ are allowed to obtain a boosted H configuration; there exists a critical value of $\delta$ for which the two ribs meet at the same point on the boundary, and for higher values the two ribs cross before reaching the boundary thereby introducing new intersections between time-sheets, which is beyond the scope of this analysis. So when computing $E_c$, we filter out these cases. The overall trend we observe is for $E_c$ to be decreasing monotonically as a function of $\delta$, see figure \ref{fig:causality-matter}.

\begin{figure}
    \centering
    \includegraphics[width=0.85\linewidth]{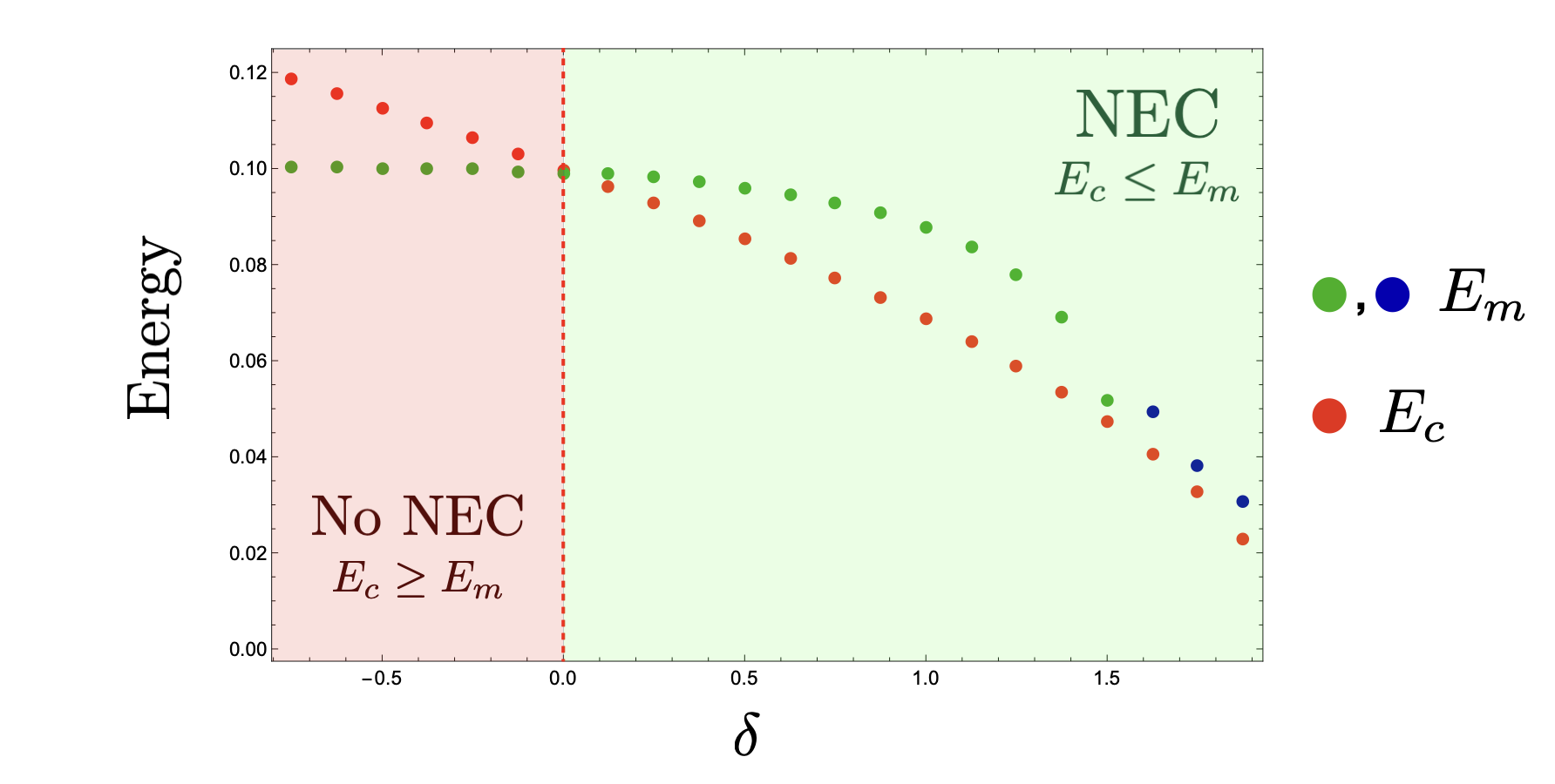}
    \caption{Plot of $E_m$ (green and blue dots) and $E_c$ (red dots) vs $\delta$ for a configuration with $L = 0.9$ and $\epsilon = 0.2$. We chose these values of $L$ and $\epsilon$ as they pose the most threat: the ribs are located near $\theta = \pi/4$ where matter acts the strongest, with one rib towards the outside of the shell and the other towards the inside. In the range $0\leq \delta<2$, NEC is obeyed and we find $E_c \leq E_m$: cooperation is guaranteed. For high values of $\delta \gtrsim 1.5$ we found our simple ansatsz of timelike geodesics not to be enough and we had to slightly deform one of the timelike intersection seams by reducing its accelleration in the $\theta$ direction by a factor of (i) 0.95 when $\delta = 1.625$, (ii) 0.83 when $\delta = 1.750$ and $0.59$ when $\delta = 1.875$ (we indicate these three cases with blue dots). For $\delta = 0$ we recover the pure AdS$_3$ result, $E_c = E_m$, and for negative values of $\delta$ the matter configuration does not obey the NEC and we observe a sudden exchange in dominance between $E_c$ and $E_m$. }
    \label{fig:coop-matter-2}
\end{figure}

We can then proceed with a numerical investigation of local maximality of the central partial segment\footnote{Like in the pure AdS$_3$ example, the other six segments that reach the asymptotic boundary have a much weaker constraint for maximality than the central segment and therefore do not pose a threat.}. In order to do that, we need to solve for the geodesic distance between two points along the two orthogonal timelike geodesics $\gamma_{\perp}(\lambda_1)$ and  $\gamma_{\perp}(\lambda_2)$; this can be done by first solving the geodesic equations with Dirichlet boundary conditions on $\gamma_{\perp}(\lambda_1)$ and  $\gamma_{\perp}(\lambda_2)$ and then numerically integrating the corresponding line element. Then, computing the eigenvalues of the Hessian (which in turn is found by numerical differentiation) at the point $(\lambda_1, \lambda_2) = (0,0)$ for a set of parameters $\{L, E, \epsilon, \delta\}$ will test for the local maximality of the partial surface. More precisely, we are interested in the \emph{critical} value $E_{m}$ past which the partial surface ceases to be maximal (i.e.\ the turning point of the definiteness of the Hessian). For the cooperating property to hold, it must be that
\be\label{eq:matter-bound}
E_c \leq E_m,
\ee
i.e.\ the bound dictated by causality must be stricter than the one for maximality. For the example in pure AdS$_3$, this bound was found to be tight. For the spacetime in question here, we find that whenever NEC is obeyed, \eqref{eq:matter-bound} holds. To test this, we perform two different numerical explorations:
\begin{enumerate}
    \item We sample configurations in the $(L, \epsilon, \delta)$ parameter space and check whether the Hessian is negative definite at $(0,0)$ for the \emph{worst possible boost}, i.e.\ for $E = E_c$. This test contains less information (it does not specify what $E_m$, only that it must be greater), but it has the advantage of being fast. 
    \item We perform an active search for the value $E_m$ for fixed $(L, \epsilon, \delta)$ with a Newton-like method: we start with a guess $(E_{min},\ E_{max})$ for the lower and upper bound on $E_m$ and iteratively either (i) reduce $E_{max}$ if the Hessian test fails or (ii) increase $E_{min}$ if the Hessian test passes. The test converges to the desired $E_m$. We then plot $E_m$ and $E_c$ as functions of $\delta$ and compare.
\end{enumerate}
We show our results for (1) and (2) in figures \ref{fig:coop-matter-1} and \ref{fig:coop-matter-2} respectively. 

\subsection{Challenges for cooperation}\label{sec:against}
In the previous section we considered examples  of configurations with more than two time-sheets. However, these configurations were rather simple in that the two boosted HRT surfaces intersected the static spine orthogonally, and further, they did not cross each other. A more  thorough test of the cooperating conjecture for three time-sheets would involve  pairwise crossing boundary regions with pairwise non-intersecting HRT surfaces. While such setup makes for a much harder configuration to perform concrete computations with, it serves to highlight potential issues with the conjecture. In this subsection we consider an extreme example of a configuration in this class, for which we are unable to find a cooperating set of time-sheets (and identify the challenges in constructing one).

This configuration consists of three boosted HRT surfaces that approach each other (but do not intersect) near the center of AdS.
Each of the three HRT surfaces, $\minimax_1, \minimax_2$ and $\minimax_3$, have endpoints separated by $\Delta \phi = \pi - \epsilon$ and $\Delta t = \pi /3 - 2 \epsilon$, and are taken to be centered on $\phi_0 = 0, 2\pi/3$ and $4\pi/3$, respectively. See figure \ref{fig:small-triangle} for a depiction of the configuration. As $\epsilon$ goes to zero, the endpoint of an HRT surface becomes null separated from the neighboring points. For $\epsilon$ nonzero, every other boundary interval between endpoints is null separated, while the others remain spacelike. Further, as $\epsilon$ approaches zero, the centers of the HRT surfaces approach at the origin (forming a ``triangle'' as viewed from above, with spatial size $O(\epsilon)$). 
\begin{figure}
    \centering
    \includegraphics[width = \textwidth]{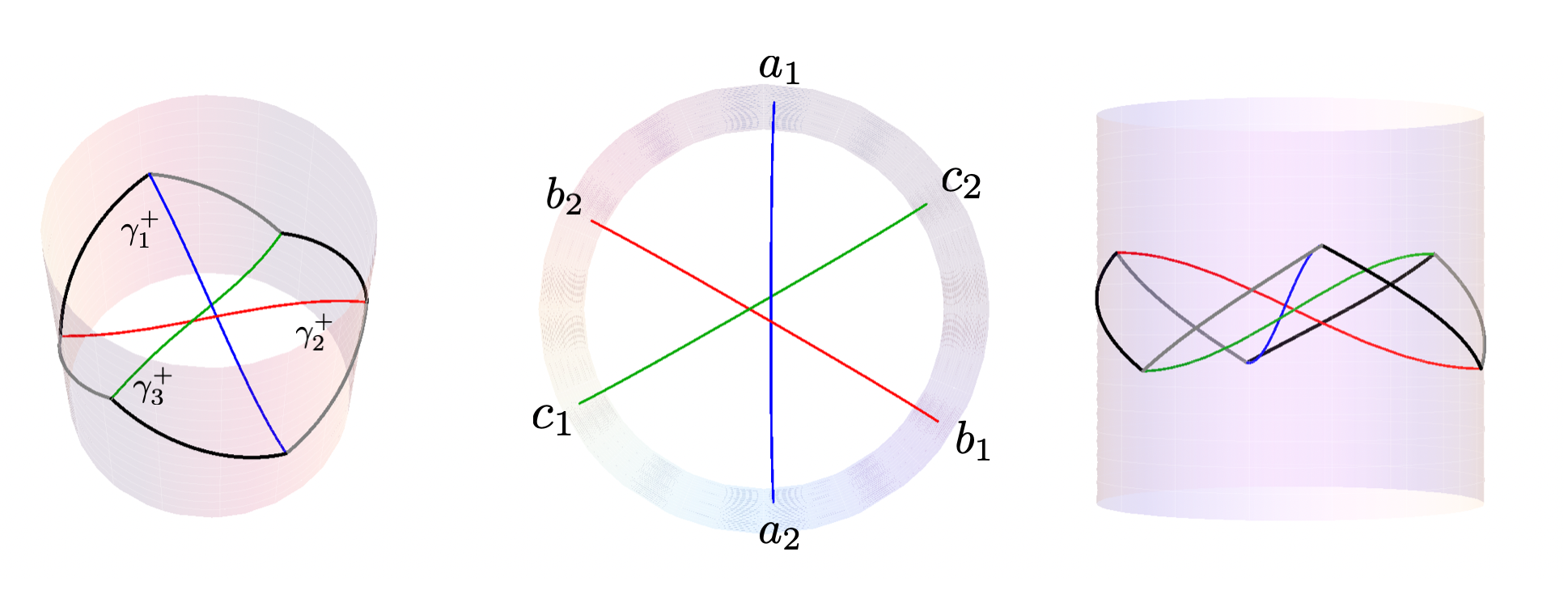}
    \caption{The small triangle configuration, from different points of view. {\bf Left:} Angled view of the configuration, with the three HRT surfaces (shown in blue, red, and green) that come close to intersecting near the center, forming a small triangle. We also show the boundary regions with black and gray lines on the boundary, the black ones are spacelike while the gray ones are null. {\bf Middle:} Top view, showing the endpoints of the regions. Specifically, $a_1$ and $a_2$ have coordinates $\phi_1 = -\pi/2 + \epsilon/2$, $t_1 = \pi/6-\epsilon$ and $\phi_2 =\pi/2 - \epsilon/2$, $t_2 = -\pi/6+\epsilon$ on the boundary, all the others can be obtained by $2\pi/3$ rotations.  {\bf Right:} A side view of the same configuration, emphasizing the alternating nature of the boundary regions and showing how the three HRT surfaces all miss each other near the center.}
    \label{fig:small-triangle}
\end{figure}

For simplicty, we will consider a collection of time-sheets that respects the symmetry of the configuration: rotation by $2\pi/3$ and $\phi\rightarrow-\phi$ plus $t\rightarrow -t$. Such time-sheets intersect each other with seams that pass through $\pm \lambda_0$ on each HRT surface for some chosen $\lambda_0$. This will partition each time-sheet into three segments -- two that are anchored to the boundary, and one central segment. 

To understand why cooperation is difficult to achieve, consider the line segment $\ell$ constructed as follows: on the time-sheet containing $\minimax_1$, connect the point $\minimax_1(\lambda_0)$ to the point $\minimax_2(-\lambda_0)$. If we take $\lambda_0\sim \epsilon$, so that the seams are near the ``triangle,'' this segment has length\footnote{The equality gives the minimal geodesic distance between $\minimax_1(+\lambda_0)$ and $\minimax_2(-\lambda_0)$. Varying the time-sheet will give a strictly larger distance.}
\begin{equation}
    |\ell| \geq \sqrt{\epsilon^2 + 4 \lambda_0^2}. 
\end{equation}
This, however, is larger than the length of the segment of the HRT surface on the central portion of the time-sheet, $2\lambda_0$. Thus, this configuration would not be cooperating. 

We can do slightly better by placing the seam farther out -- while, as mentioned above, the size of the triangle is of order $\epsilon$, any pair of the HRT surfaces will remain timelike separated over a longer distance.  Choosing the $\lambda_0$, which maximizes $2\lambda_0 - |\ell|$, while keeping the seam timelike yields $\lambda_0 \sim \frac{\log{3}}{2}$.  This gives  $|\ell| - 2\lambda_0 \sim \epsilon$, so that the configuration is not automatically non-cooperating. 

However, for the time-sheet configuration to be cooperating, we need to be able to evolve the seams to the future of $\lambda_0$, while both keeping the geodesic distance between equal time points on the seam less than $2\lambda$, and also keeping the HRT segments on the exterior portions of the time-sheet locally maximal. To ensure local maximality, we can form the seam to be orthogonal to the HRT surface at $\lambda_0$. We can follow along this seam and compare the HRT segment length on the middle time-sheet segment ($2\lambda_0$) with the geodesic distance between the points given by the intersection of constant time slices and the seams. However, after $k_{max} =  8/13 \epsilon$, (corresponding to a difference in time between the HRT surface at $k=0$ and $k_{max}$ of $\Delta t = \frac{4  \tanh (\lambda_0)}{\tanh ^2(\lambda_0)+3} \epsilon + O(\epsilon)$), this distance already shrinks to zero. 

Thus, after choosing the seams to evolve for a time of order $\epsilon$ along the normal to the HRT surfaces, keeping the middle time-sheet cooperating would require choosing the seams to go straight up (with constant $\theta,\phi$ after $k_{max}$). However, the difference in area between the HRT surface on the exterior partial time-sheets, and the geodesic distance between a boundary point, and the point at $k_{max}$ goes as $O(\epsilon^2)$. Consequently, the seam can only be evolved straight for a amount of time $\Delta t \sim \epsilon$. At that point, the seam cannot evolve straight or towards the origin (otherwise the outer partial time-sheets would fail to be cooperating), and the seam can not evolve towards the boundary (otherwise the middle partial time-sheet would fail to be cooperating). 

While it remains possible that a different choice of time-sheet configuration (for example, one that does not respect the symmetries of the HRT surfaces) leads to a cooperating configuration, we were unable to find one. 

\subsection{Geometric proof of the inequalities}\label{sec:geometric-proof}
As established in subsection \ref{sec:graph-model}, if the cooperating property holds, it implies the existence of a spacetime graph model which is graph-equivalent to the model for static states. It then follows directly that all holographic entropy inequalities valid for static states will remain valid in the time-dependent setting. However, we find instructive to spell out how these inequalities could be proven geometrically using minimax. In this section we assume the cooperating property and present a strategy which resembles the original exclusion/inclusion argument of \cite{Headrick:2007km, Headrick:2013zda}. 
We find this approach useful because it demonstrates how the minimax prescription utilizes the full spacetime (overcoming issues that maximin faced for higher inequalities) while also clarifying the role of the cooperating property. 

\subsubsection*{RT and maximin strategy}

We first begin by reviewing the static argument. To keep things simple, we will use SSA as a toy example to explain the proof strategy. Consider three disjoint boundary regions $A,B$ and $C$ (plus a purifier), lying on the $t = 0$ slice of the boundary. Recall from \eqref{eq:ssa}, SSA states that
\be
S(AB) + S(BC) \geq S(B) + S(ABC).
\ee
In the time-reflection invariant case, all of the RT surfaces lie on the same slice by construction (the $t=0$ slice) and are minimal on that slice. Let $m(AB)$ and $m(BC)$ be the RT surfaces for the $AB$ and $BC$ regions, and let $r(AB)$ and $r(BC)$ be the two (spatial) homology regions. By taking the intersection and union of these, we define two new regions
\begin{equation}
    r'(B) = r(AB) \cap r(BC), \quad  r'(ABC) = r(AB) \cup r(BC),
\end{equation}
which by construction are homology regions for the boundary regions $B$ and $ABC$ respectively. The bulk part of their boundary, call it $m'(B)$ and $m'(ABC)$ respectively, can be constructed by cutting and gluing portions of the RT surfaces for $AB$ and $BC$. Since these surfaces are homologous to $B$ and $ABC$, their area must be greater than the true entropies $S(B)$ and $S(ABC)$ due to the global minimization step of the RT formula. So we have,
\begin{align}
S(AB) + S(BC) &= |m(AB)| + |m(BC)|\\
&\geq |m'(B)| + |m'(ABC)|\\
&\geq S(B) + S(ABC).
\end{align}

In the dynamical case, this proof strategy seemingly fails, as the relevant HRT surfaces do not generally all lie on the same slice. However, the maximin construction overcomes this with the help of one additional property: there exists a slice on which maximin surfaces for nested regions are all minimal. This allows us to use a nested set of inequalities as indicated in the Introduction. Both SSA and MMI can therefore be proven using this strategy \cite{Wall_2014}. Higher inequalities, however, generally do not possess this critical property: regions on either side of the inequality are generally not nested. Thus, the maximin proof strategy fails for higher party inequalities \cite{Rota:2017ubr}.

\begin{figure}
    \centering
    \includegraphics[width=0.8\textwidth]{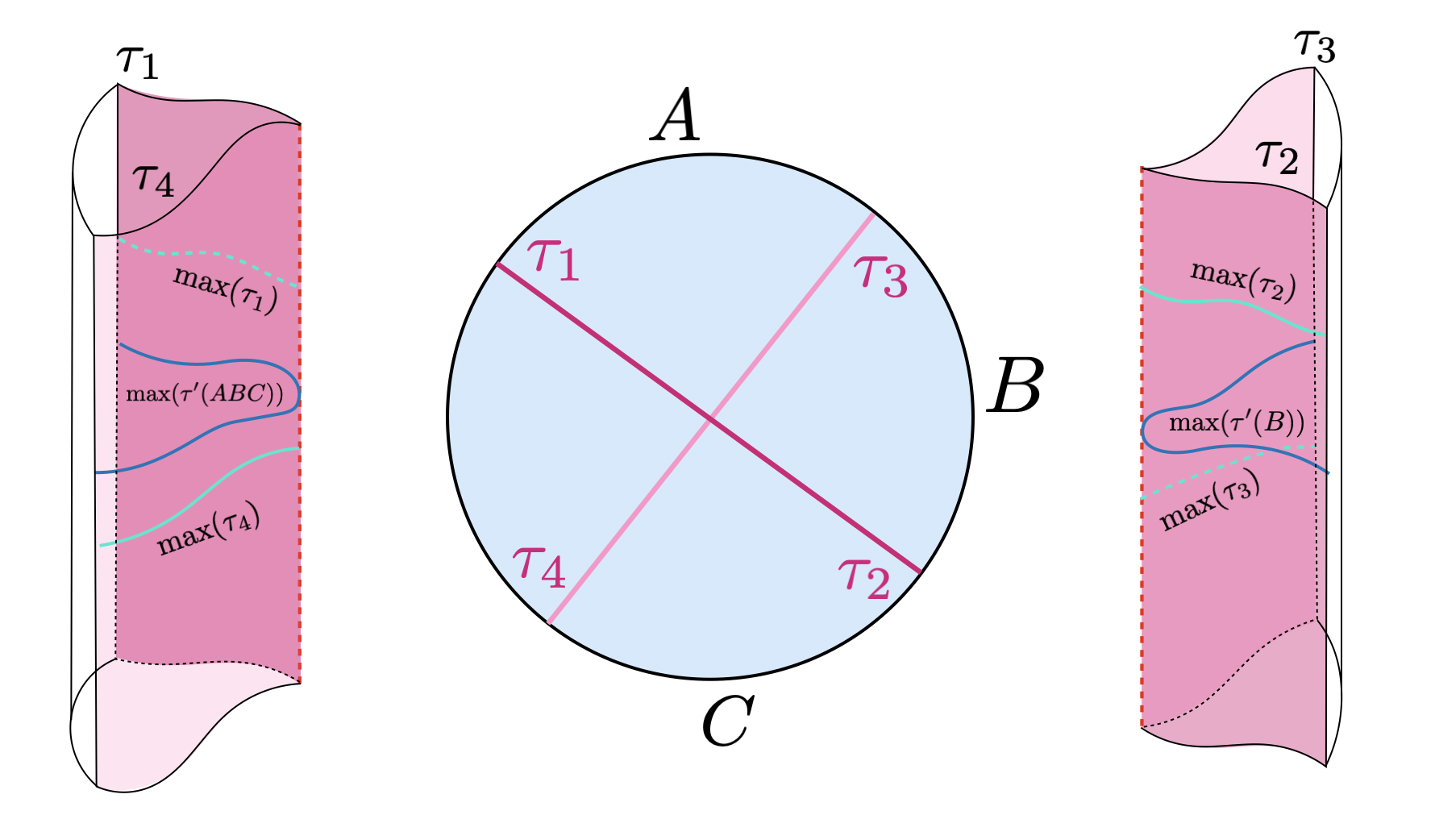}
    \caption{Proof sketch of SSA using minimax. In the {\bf middle}, we show a time slice of the bulk, with the cross sections of time-sheets shown in pink. On the {\bf left} and on the {\bf right} we show the patched partial time-sheets in the full spacetime to form $\ts'(ABC)=\tau_1 \cup \tau_4$
    and $\ts'(B)=\tau_2 \cup \tau_3$ respectively. The maximal achronal surfaces on $\ts'(B)$ and $\ts'(ABC)$ are denoted with $\max(\ts'(B))$ and $\max(\ts'(ABC))$ respectively and are shown in blue. The maximal achronal surfaces on the various $\ts_i$ are denoted by $\max(\ts_i)$ and are shown in cyan. Clearly, one always has 
    $\max(\tilde{\ts}(B)) \leq \max(\ts_2) + \max(\ts_3)$ 
    (and a similar statement on the other time-sheet) since the requirement of being connected poses a stronger constraint.
    The main cooperating conjecture is 
    that $\max(\ts_i) = \minimax_i$, which leads to $\max(\ts'(B)) + \max(\ts'(ABC)) \leq \minimax(AB) + \minimax(BC)$ -- hence proving SSA.}
    \label{fig:ssa-friendly}
\end{figure}

\subsubsection*{Minimax}

Now we can present how minimax overcomes this issue by elevating the static proof to the entire spacetime. In particular, we can use the same exclusion/inclusion strategy as in the RT proof of SSA, but this time the regions which we take intersections and unions of will be codimension-0 homology volumes instead of codimension-1 spacelike regions. Because of this, we will not need to project all of the relevant HRT surfaces onto the same slice, and so the nesting properties will become unimportant. 

Let $\ts(AB)$ and $\ts(BC)$ be 
any (cooperating, minimally intersecting)
minimax time-sheets containing the minimax surfaces $\minimax(AB)$ and $\minimax(BC)$. Define the following spacetime volumes
\be
R'(B) = R(AB) \cap R(BC), \quad R'(ABC) = R(AB) \cup R(BC)
\ee
where $R(AB)$ and $R(BC)$ are the two homology regions bounded by the two minimax time-sheets $\ts(AB)$ and $\ts(BC)$. By construction, $R'(B)$ and $R'(ABC)$ will be homologous to $D(B)$ and $D(ABC)$ respectively and they will be bounded by portions of $\ts(AB)$ and $\ts(BC)$ that have been cut and glued -- call them $\ts'(B)$ and $\ts'(ABC)$. On $\ts'(B)$ there are two relevant sets of surfaces:
\begin{enumerate}
    \item The maximal surface on $\ts'(B)$, which we denote $\max(\ts'(B))$. We know by minimality over time-sheets that this has greater area than the minimax surface, $|\max(\ts'(B))|\geq |\minimax(B)|$. 
    
    \item The second relevant surface will be $\gamma'(B)$ obtained as the union of the portions of $\minimax(BC)$ and $\minimax(AB)$ on $\ts'(B)$ (i.e.\ the partial minimax surfaces which we refered to before in subsections \ref{sec:coop} and \ref{sec:graph-model}). Note that $\gamma'(B)$ will generally be disconnected, as $\minimax(AB)$ and $\minimax(BC)$ need not intersect.
\end{enumerate}

A priori, there is no obvious way to compare the areas of $\max(\ts'(B))$ and $\gamma'(B)$. This is the key step where the cooperating property appears: by showing that each partial minimax surface is maximal on the respective partial time-sheet, the inequality 
\be
|\max(\ts'(B))| \leq |\gamma'(B)| 
\ee
follows since is $|\max(\ts'(B))|$ maximized under the stronger constraint that it must be a connected surface. A similar argument also follows for $\gamma'(ABC)$ and $\max(\ts'(ABC))$. The proof can then be completed in a similar fashion as for the RT case:
\begin{align}
    S(B) + S(ABC) &= |\minimax(B)| + |\minimax(ABC)|\\
    &\leq |\max(\ts'(B))| + |\max(\ts'(ABC))|\\
    &\leq |\gamma'(B))| + |\gamma'(ABC)|\\
    &= |\minimax(AB)| + |\minimax(BC)|\\
    &= S(AB) + S(BC),
\end{align}
where the second line follows by minimality over time-sheets, and the third follows from the cooperating property. If the cooperating conjecture holds, this stretegy is applicable to all higher entropy inequalities provable by the exclusion/inclusion argument (i.e.\ all inequalities provable with a contraction map).

\section{Discussion}\label{sec:discussion}

Having multiple reformulations of the same concept is invaluable in physics, as it deepens our understanding of the concept and introduces new toolkits to address various problems. In the context of holographic entanglement, the maximin formulation is a prime example of this. On one hand, it sacrifices the clarity of the original HRT formulation by rewriting it in a seemingly less covariant way; on the other hand, it provides us with new methods for proving theorems. By introducing a minimization of areas over Cauchy slices, maximin reformulates the extremization problem in a way that brings it into closer contact with the static scenario. 

One of the most useful tools that emerged from maximin is the “trick of representatives,” which has been used repeatedly in many proofs. This trick involves null-projecting an extremal surface onto a Cauchy slice --- often one containing the maximin surface. By focusing, the projected surface will have a smaller area than the original extremal surface while also lying on the same slice as the maximin surface. This is often an advantageous configuration to utilize in proofs.

As is often the case, the greatest strength can secretly be the greatest weakness and for maximin it is no exception. The comfort of projecting surfaces onto the same Cauchy slice is precisely the limiting factor when confronted with the higher entropy inequalities. In this paper, we saw how the minimax approach offers the advantage of moving beyond the familiar setting of slices, lifting the entire problem to the level of spacetime. In this way, the usual spacelike homology condition is replaced by a spacetime homology condition. For the proof of the higher entropy inequalities, it becomes irrelevant that surfaces may not lie on the same Cauchy slice, since the static proof of union and intersection of spacelike regions is turned into a covariant proof using unions and intersections of spacetime volumes bounded by the time-sheets (instead of the RT surfaces). From the minimax perspective, the trick of representatives of maximin is simply a special case of the property that any candidate minimax surface is maximal on its time-sheet, since the entanglement horizon used in projecting the surface is a minimax time-sheet. Hence, minimax often allows for more direct proofs without the need of referencing to any projection, whether null or timelike.

While of course some of the claims presented in this paper are contingent on the validity of the cooperating conjecture, we hope to have shown the potential of the minimax prescription as a powerful new tool for holography. Because of this, there are many interesting future directions and open questions remaining that we believe the minimax prescription may play a central role in solving. We briefly discuss them in the remainder of this section.

\paragraph{What if time-sheets do not cooperate?}

While we hope to have given good evidence in favor of the cooperating property, it still remains a conjecture that needs further investigation. In particular, the comments in subsection \ref{sec:against} suggest that the conjecture may be false in its current form. However, we are hopeful that under some mild modifications a weaker statement may be proven that still allows for the construction of the graph model. If even that turns out to be false, we note that the cooperating property is only a sufficient condition for our results, and there is still overwhelming evidence in favor of the equivalence between the RT and HRT cones, as we argued in the introduction. This would make the search for such an elusive proof even more exciting.

\paragraph{The homology condition and a ``minimin'' formula}

Theorem \ref{thm:relaxed-achronal} established the equivalence between relaxed minimax \eqref{eq:relaxed-minimax} and original minimax \eqref{Splus} (and therefore with HRT as well) by showing that relaxed minimax surfaces must be achronal in the full spacetime. We mentioned how this is an interesting result regarding the nature of the homology condition for holographic entanglement, since it implies that spacelike homology is equivalent to spacetime homology. There is also another, more practical, consequence of this equivalence: it establishes a new ``minimin'' formula originally proposed in footnote 34 of \cite{Headrick:2022nbe}. This approach involves replacing the maximization of the area of an achronal surface with the minimization of the flux of a timelike flow through the Lorentzian max cut-min flow theorem \cite{Headrick:2017ucz}. We encourage interested readers to explore this formulation further. It would be particularly interesting to see if this offers a useful new perspective on the cooperating conjecture.

\paragraph{Tensor networks and time evolution}

Tensor networks can be used to model static holographic states, with entanglement playing a crucial role in their construction \cite{Swingle:2009bg}. One class of such models utilizes tree graphs constructed using non-intersecting RT surfaces, to define a tensor network that approximates a given static holographic state \cite{Bao:2018pvs}. The bond dimension of each edge corresponds to the area of the RT surface it passes through. In this sense, tree graph models of holographic entanglement can be seen as a primitive starting point for building tensor networks. The existence of our spacetime graph model from subsection \ref{sec:graph-model} would allow the construction of a tensor network for any time-dependent classical holographic state by applying the same recipe of \cite{Bao:2018pvs}. In fact, for tree tensor networks formed by non-intersecting time-sheets as in figure \ref{fig:spacetime-graph}, the cooperating conjecture is not needed for the graph construction to work, but it is needed for more general bulk discretizations (though the construction of tensor networks for these are less understood even in the static case).

It is important to stress, however, that while we are calling our graph a ``spacetime'' graph, it is still associated to the entanglement structure of a CFT state on a single (boundary) Cauchy slice. Time evolution of this slice through the boundary Hamiltonian will cause the graph to change both in its edge weights and in its topology. Therefore, while we argue that our minimax construction allows for the definition of tensor networks for time-dependent states, it should not be confused with the problem of time evolving the tensor network itself, which our construction does not address.

\paragraph{Quantum minimax \& bulk inequalities}
Throughout this paper, we have worked in the fully classical regime of $N \to \infty$, $\lambda \to \infty$. Quantum corrections are captured by the quantum extremal surface (QES) formula \cite{Engelhardt:2014gca}
\be\label{eq:QES}
S(A) = \min \underset{\gamma}{\text{ext}}
\left( |\surf| + S(\rho_r)\right),
\ee
where the extremization is over achronal surfaces $\gamma$ spacelike homologous to $A$ and the quantity in brackets being extremized is the generalized entropy, the sum of the area of the surface $\gamma$ and the entropy $S(\rho_r)$ of bulk quantum fields living on the spacelike homology region $r(A)$.

The QES has since been reformulated in its quantum maximin form \cite{Akers:2019lzs}, so it is only natural to propose a quantum minimax formula
\begin{equation}
    S(A) = \min_{\ts} \max_{\gamma} \left( |\surf| + S(\rho_{\text{bulk}})\right).
\end{equation}
One first puzzle is the interpretation of $S(\rho_{\text{bulk}})$, as there appears to be no prefered slice on which to evaluate the bulk quantum fields entropy. This can be fixed by reintroducing Cauchy slices, i.e.
\begin{equation}\label{q-minimax}
    S(A) = \min_{\ts} \max_{\sigma} \left( |\surf| + S(\rho_{\ts \cap \sigma})\right),
\end{equation}
which makes the prescription more explicit: one first maximizes the generalized entropy of spacelike homology regions bounded by $\ts \cap \sigma$ over all Cauchy slices, and then minimizes over the choice of time-sheets. 

As is well known, the holographic entropy inequalities beyond SSA do not hold in general when bulk quantum corrections are turned on. Nevertheless, \cite{Akers:2021lms} studied the interplay between bulk matter entropies and boundary entropies, showing that in the static regime constraining the bulk matter to obey all holographic entropy inequalities implies that the boundary state obeys them as well. It is natural to explore similar questions in the time-dependent setting.

\paragraph{Pythons and bulges}

The python's lunch conjecture (PLC) \cite{Brown:2019rox} gives a geometrical formula for the complexity of reconstructing operators lying in the bulk region between competing extremal surfaces homologous to a given boundary region. Suppose the extremal surface closest to the region $A$, which we will call the constriction $\surf_c$, is \emph{not} the HRT surface. According to the PLC, reconstructing operators between the two surfaces (more precisely, within the entanglement wedge but outside the wedge of $\surf_c$) is exponentially complex, with the log-complexity given by
\be
\ln\mathcal{C}=\frac12\left(|\surf_b|-|\surf_c|\right),
\ee
where $\surf_b$ is the ``bulge'' surface, an extremal surface lying between $\HRT$ and $\surf_b$. The bulge surface can be found by a ``maximinimax'' procedure, which involves the following steps: fix a Cauchy slice containing both $\HRT$ and $\surf_c$; choose a 1-parameter family of surfaces interpolating between the two surfaces (called a ``sweep-out''); find the maximal-area surface within the sweep-out; minimize over the sweep-out; and finally, maximize over the Cauchy slice. In principle, this gives a surface with one spatial negative mode and no temporal positive modes. (The maximization and minimization steps within a fixed Cauchy slice are standard maneuvers in geometric measure theory for proving the existence of an extremal surface with exactly one negative mode on an arbitrary Riemannian manifold. See \cite{Brown:2019rox,Arora:2024edk,Engelhardt:2023bpv} for further details.)

While we will not attempt a careful proof, we claim that, like the HRT surface, the constriction and bulge surfaces can be found using time-sheets rather than Cauchy slices. First, as always, on each time-sheet we find the maximal-area achronal surface; this gives us a function $a_{\rm max}$ on the space $\tsset_A$ of time-sheets. (For the maximal surface $\surf$ on a given time-sheet, there is a large continuous family of other time-sheets on which $\surf$ is also maximal. Hence the function $a_{\rm max}$ has many flat directions.) The HRT surface of course is (the maximal surface on) the global minimum of $a_{\rm max}$. The function $a_{\rm max}$ goes to infinity for time-sheets that approach the AdS boundary, and the constriction is (the maximal surface on) the local minimum closest to the boundary. Between these two local minima, we expect there to exist a saddle point with Morse index 1 (i.e.\ one negative mode), which is the bulge surface. This can be found using sweep-outs in $\tsset_A$ that interpolate between the HRT and constriction time-sheets; we maximize $a_{\rm max}$ on each sweep-out, then minimize over sweep-outs. A possible advantage of this ``minimaximax'' formulation over the original maximinimax one is that it dispenses with the time direction first, leaving a standard exercise in Morse theory.

\paragraph{Entangled universes}

In forthcoming work, Gupta-Headrick-Sasieta \cite{entangleduniverses} will propose a generalization of the HRT formula to spacetimes consisting of asymptotically AdS, Minkowski, and/or de Sitter regions (or universes) connected by wormholes, with the area of an extremal surface lying in the causal shadow inside each wormhole giving the entanglement between the neighboring universes. These (generalized) HRT surfaces can be found using either maximin or minimax (or U- or V-flows \cite{Headrick:2022nbe}). Minimax is perhaps more elegant than maximin because the rules (and especially the spacetime homology condition) are the same for all of these types of boundaries --- AdS, Minkowski, and dS. On the other hand, for maximin, a special rule must be made for dS boundaries, in which Cauchy slices are replaced by timelike hypersurfaces that end on the dS boundary.

\acknowledgments

We would like to thank Raphael Bousso, Sergio Hern\'andez-Cuenca, Mukund Rangamani, and Nico Valdes-Meller for useful conversations. B.G.W., G.G., and M.H. were supported in part by the Department of Energy through awards DE-SC0009986 and QuantISED DE-SC0020360, and in part by the Simons Foundation through the \emph{It from Qubit} Simons Collaboration. B.G.W. was also supported by the AFOSR under FA9550-19-1-0360. V.H. was supported in part by the Department of Energy through awards DE-SC0009999 and QuantISED DE-SC0020360, and by funds from the University of California.
This research was supported in part by grant NSF PHY-2309135 to the Kavli Institute for Theoretical Physics (KITP), where part of this work was completed. This work was also performed in part at the Aspen Center for Physics, which is supported by National Science Foundation grant PHY-2210452. We are also grateful to the Perimeter Institute, the Centro de Ciencias de Benasque Pedro Pascual, and the Yukawa Institute for Theoretical Physics, where part of this work was completed.

\appendix

\section{Spacelike and timelike geodesics}\label{app:geodesic_seams}
\subsection{AdS$_3$}
We work in AdS$_3$ in global coordinates $(t,r,\phi)$ and $(t,\theta, \phi)$, with $\tan\theta = r$ with the following line elements
\be
\dd s^2 = -(1+r^2)\dd t^2 + \frac{\dd r^2}{1+r^2} + r^2 \dd\phi^2
\ee
and
\be
\dd s^2 = \frac{1}{\cos^2{\theta}}\left(-\dd t^2 + \dd \theta^2 + \sin^2{\theta}\,\dd\phi^2\right)
\ee
\subsection*{Spacelike geodesics}
We begin by studying HRT surfaces in AdS$_3$, which are spacelike geodesics. Time-translational and rotational symmetry lead to conservation of energy and angular momentum,
\begin{equation}\label{eq:cons}
    \dv{t}{\lambda} = \frac{E}{r^2 +1}, \quad \dv{\phi}{\lambda} = \frac{L}{r^2}.
\end{equation}
Further, we have the spacelike constraint $g_{ab}\dot{x}^a \dot{x}^b = 1$. Combining the three equations we get
\begin{equation}
    1 = -\frac{E^2}{r^2+1} +\frac{1}{r^2+1}\left(\dv{r}{\lambda}\right)^2 + \frac{L^2}{r^2}.
\end{equation}
The above can be written as
\begin{equation}
    \dot{r}^2 + V_{\text{eff}} = 0,
\end{equation}
where $V_{\text{eff}} = -r^2 - 1 - E^2 + L^2 + \frac{L^2}{r^2}$ is an effective potential. Substituting $u = r^2$ we have
\be
\frac{\dot{u}^2}{4} =  u^2 + a u - L^2 
\ee
with $a = E^2 - L^2 + 1$. The turning point $r_{\star}$ of the spacelike geodesic is the largest positive zero of $V_{\text{eff}}$ which is
\begin{equation}
    r_{\star}^2 = \frac{L^2-E^2-1+\sqrt{4L^2+(E^2-L^2+1)^2}}{2} = \frac{- a + \sqrt{\Delta}}{2}.
\end{equation}
with $\Delta = a^2 + 4L^2$. When $E = 0$, we are in the static/RT case and the turning point reduces to $r_{\star} = L$. The solution of the radial equation under the initial condition $r(0) = r_{\star}$ (i.e.\ we choose it so that intersection point with the spine is at $\lambda = 0$) is
\begin{equation}
    r(\lambda)^2  = \frac{- a  + \sqrt{\Delta}\cosh{2\lambda}}{2}.
\end{equation}
Transforming it back to the $\theta$ global coordinate we have
\begin{equation}
    \theta(\lambda) = \tan^{-1} \sqrt{ \frac{-a + \sqrt{\Delta}\cosh{2\lambda}}{2}  } = \cos^{-1}\sqrt{\frac{2}{2 + a + \sqrt{\Delta}\cosh 2\lambda}},
\end{equation}
for $E = 0$ this reduces to the static case $\theta(\lambda) = \cos^{-1}\left(\frac{\sech{\lambda}}{1+L^2}\right)$. Now that we have $\theta(\lambda)$ we solve for $t(\lambda)$ and $\phi(\lambda)$ (with initial conditions $t(0) = \phi(0) = 0$). We find
\be
t(\lambda) = \tan^{-1}\left[\left(\frac{-2+a +\sqrt{\Delta}}{2E}\right)\tanh\lambda\right],
\ee
and
\be
\phi(\lambda) = \tan^{-1}\left[\left(\frac{a+\sqrt{\Delta}}{2L}\right)\tanh\lambda\right]
\ee
It's useful to rewrite the above formulas not in terms of $(E,L)$ but in terms of the opening angle $\Delta\phi$ and the time separation $\Delta t$ between a region's endpoints. These are easily found by integrating equations \eqref{eq:cons}. One finds
\be
\Delta t = 2\tan^{-1}\left(\frac{-2+a +\sqrt{\Delta}}{2E}\right),\quad \Delta \phi = 2\tan^{-1}\left(\frac{a +\sqrt{\Delta}}{2L}\right).
\ee
Solving for $\Delta\phi$ and $\Delta t$ we have
\be
E = \frac{\sin{\Delta t}}{\cos{\Delta t}-\cos{\Delta \phi}}, \quad L = \frac{\sin{\Delta \phi}}{\cos{\Delta t}-\cos{\Delta \phi}}.
\ee
Obtaining the new parameterization $(\theta(\lambda), t(\lambda), \phi(\lambda))$ 
\be
\theta(\lambda) = \tan^{-1}\left(\sqrt{\frac{\cos{\Delta\phi}+\cosh{2\lambda}}{\cos{\Delta t} - \cos{\Delta\phi}}}\right),
\ee
and
\be
t(\lambda) = \tan^{-1}\left(\tan{\frac{\Delta t}{2}} \tanh{\lambda}\right),
\ee
and
\be
\phi(\lambda) = \tan^{-1}\left(\tan{\frac{\Delta \phi}{2}}\tanh{\lambda}\right).
\ee
We now want to find the perpendicular vector to both the spine and the rib at the point of intersection. The two tangent vectors to the spine and the rib are
\be
s = (0,1,0),\quad r=\left(\frac{-2+a + \sqrt{\Delta}}{2E},0,\frac{a + \sqrt{\Delta}}{2L}\right) = \left(\tan{\frac{\Delta t}{2}},0, \tan{\frac{\Delta\phi}{2}}\right) 
\ee
respectively. The orthogonal vector $\mathfrak{t}$ to both $s$ and $r$ is
\be\label{eq:orthogonality}
    \mathfrak{t} =\left( \frac{1+\cos{\Delta\phi}}{1+\cos{\Delta t}}\tan\frac{\Delta \phi}{2},0, \tan\frac{\Delta t}{2}\right).
\ee
For completeness, the intersection point $\theta_0$ of a rib along the spine in terms of $\Delta\phi$ and $\Delta t$ is given by
\be\label{eq:intersection-rib}
\theta_0 = \tan^{-1}\sqrt{\frac{\cos{\Delta \phi} + 1}{\cos{\Delta t}-\cos{\Delta\phi}}}
\ee
We now have all the ingredients to compute the orthogonal timelike geodesic seams.
\subsection*{Timelike geodesics}
For the timelike geodesics we only need a local analysis. Therefore, we are going to solve the geodesic equations up to quadratic order in the affine parameter $\lambda$, with initial conditions dictated by the orthogonality constraint \eqref{eq:orthogonality}. The parameterization for $t(\lambda)$ and $\phi(\lambda)$ is simple,
\be\label{eq:t-and-phi-seams}
t(\lambda) = \left(\frac{1+\cos{\Delta\phi}}{1+\cos{\Delta t}}\tan\frac{\Delta \phi}{2}\right)\,\lambda, \quad \phi(\lambda) = \tan\frac{\Delta t}{2}\lambda.
\ee
For the $\theta$ coordinate, we solve the geodesic equation
\be
\ddot{\theta} + \Gamma^{\theta}_{\,\,\,\,tt} \dot{t}\dot{t}+ \Gamma^{\theta}_{\,\,\,\,\theta \theta} \dot{\theta}\dot{\theta} + \Gamma^{\theta}_{\,\,\,\,\phi\phi} \dot{\phi} \dot{\phi}  = 0
\ee
by plugging the ansatsz $\theta(\lambda) = \theta_0 + \theta_1 \lambda + \theta_2 \lambda^2$. We find
\be
\theta(\lambda) = \theta_0 - \left(\frac{(t_1^2 - \phi_1^2) \tan{\theta_0}}{2}\right)\,\lambda^2
\ee
where $t_1$ and $\phi_1$ are the slopes in \eqref{eq:t-and-phi-seams}, and $\theta_0$ is given in \eqref{eq:intersection-rib}.

\subsection{Spherically symmetric 
matter}
The bulk metric reads
\be
\dd s^2 = \frac{1}{\cos^2{\theta}}\left(-f(\theta)\dd t^2 + \frac{\dd \theta^2}{f(\theta)} + \sin^2{\theta}\dd \phi^2\right), \quad f(\theta) = 1 - \frac{\delta}{2}\sin^{2}(2\theta).
\ee
Now, we want to find the spine, ribs and seams in this geometry. Note that since we are doing a local analysis, we really only need to solve the geodesic equations up to quadratic order in the affine parameter $\lambda$. We begin by finding the parameterization for a static rib with initial condition $\theta(0) = \theta_0$. The ode to be solved is
\be
L^2 \cot^2{\theta} + \frac{\sec^2{\theta}}{f(\theta)}\left(\frac{\dd \theta}{\dd \lambda}\right)^2 = 1
\ee
putting the ansatsz $\theta(\lambda) = \theta_0 + \theta_1\lambda + \theta_2 \lambda^2$ we find
\be
\theta(\lambda) = \arctan{L} + \frac{f(\theta_0)}{2L}\lambda^2
\ee
notice that for $\delta = 0$ we correctly recover the rib from empty AdS$_3$. Then we have
\be
\frac{\dd t}{\dd \lambda} = \frac{E\, \cos^2{\theta}}{f(\theta)}, \quad\frac{\dd \phi}{\dd \lambda} = \frac{L}{\tan^2{\theta}}
\ee
which have solutions
\be
t(\lambda) = \frac{E\cos^2{\theta_0}}{f(\theta_0)}\lambda, \quad \phi(\lambda) = L\cot^2{\theta_0}\lambda.
\ee
From these, following similar steps to the empty AdS case, we can get the parameterization for the geodesic seam
\be
t(\lambda) = \frac{L\cos^2{\theta_0}}{f(\theta_0)}\lambda, \quad\theta(\lambda) = \theta_0 -\frac{L^3}{2(1+L^2)^2}\frac{1-\frac{\delta}{2}\cos^4{\theta_0}}{f(\theta_0)}\lambda^2, \quad \phi(\lambda) = \frac{E\cos^2{\theta_0}}{f(\theta_0)}\lambda.
\ee
\bibliographystyle{jhep}
\bibliography{references}

\providecommand{\href}[2]{#2}\begingroup\raggedright\begin{thebibliography}{10}

\bibitem{VanRaamsdonk:2010pw}
M.~Van~Raamsdonk, \emph{{Building up spacetime with quantum entanglement}}, \href{https://doi.org/10.1142/S0218271810018529}{\emph{Gen. Rel. Grav.} {\bfseries 42} (2010) 2323} [\href{https://arxiv.org/abs/1005.3035}{{\ttfamily 1005.3035}}].

\bibitem{Ryu:06b20v}
S.~Ryu and T.~Takayanagi, \emph{{Holographic derivation of entanglement entropy from AdS/CFT}}, \href{https://doi.org/10.1103/PhysRevLett.96.181602}{\emph{Phys. Rev. Lett.} {\bfseries 96} (2006) 181602} [\href{https://arxiv.org/abs/hep-th/0603001}{{\ttfamily hep-th/0603001}}].

\bibitem{Ryu:2006ef}
S.~Ryu and T.~Takayanagi, \emph{{Aspects of Holographic Entanglement Entropy}}, \href{https://doi.org/10.1088/1126-6708/2006/08/045}{\emph{JHEP} {\bfseries 08} (2006) 045} [\href{https://arxiv.org/abs/hep-th/0605073}{{\ttfamily hep-th/0605073}}].

\bibitem{Headrick:2007km}
M.~Headrick and T.~Takayanagi, \emph{{A Holographic proof of the strong subadditivity of entanglement entropy}}, \href{https://doi.org/10.1103/PhysRevD.76.106013}{\emph{Phys. Rev. D} {\bfseries 76} (2007) 106013} [\href{https://arxiv.org/abs/0704.3719}{{\ttfamily 0704.3719}}].

\bibitem{Headrick:2013zda}
M.~Headrick, \emph{{General properties of holographic entanglement entropy}}, \href{https://doi.org/10.1007/JHEP03(2014)085}{\emph{JHEP} {\bfseries 03} (2014) 085} [\href{https://arxiv.org/abs/1312.6717}{{\ttfamily 1312.6717}}].

\bibitem{Hayden:2011ag}
P.~Hayden, M.~Headrick and A.~Maloney, \emph{{Holographic Mutual Information is Monogamous}}, \href{https://doi.org/10.1103/PhysRevD.87.046003}{\emph{Phys. Rev. D} {\bfseries 87} (2013) 046003} [\href{https://arxiv.org/abs/1107.2940}{{\ttfamily 1107.2940}}].

\bibitem{Bao:2015bfa}
N.~Bao, S.~Nezami, H.~Ooguri, B.~Stoica, J.~Sully and M.~Walter, \emph{{The Holographic Entropy Cone}}, \href{https://doi.org/10.1007/JHEP09(2015)130}{\emph{JHEP} {\bfseries 09} (2015) 130} [\href{https://arxiv.org/abs/1505.07839}{{\ttfamily 1505.07839}}].

\bibitem{Grado-White:2024gtx}
B.~Grado-White, G.~Grimaldi, M.~Headrick and V.~E. Hubeny, \emph{{Testing holographic entropy inequalities in 2 + 1 dimensions}}, \href{https://doi.org/10.1007/JHEP01(2025)065}{\emph{JHEP} {\bfseries 01} (2025) 065} [\href{https://arxiv.org/abs/2407.07165}{{\ttfamily 2407.07165}}].

\bibitem{Bao:2024vmy}
N.~Bao, K.~Furuya and J.~Naskar, \emph{{A framework for generalizing toric inequalities for holographic entanglement entropy}}, \href{https://doi.org/10.1007/JHEP10(2024)251}{\emph{JHEP} {\bfseries 10} (2024) 251} [\href{https://arxiv.org/abs/2408.04741}{{\ttfamily 2408.04741}}].

\bibitem{Bao:2024azn}
N.~Bao, K.~Furuya and J.~Naskar, \emph{{Towards a complete classification of holographic entropy inequalities}},  \href{https://arxiv.org/abs/2409.17317}{{\ttfamily 2409.17317}}.

\bibitem{Czech:2025tds}
B.~Czech and S.~Shuai, \emph{{Nesting is not Contracting}},  \href{https://arxiv.org/abs/2501.17222}{{\ttfamily 2501.17222}}.

\bibitem{Hubeny:2007xt}
V.~E. Hubeny, M.~Rangamani and T.~Takayanagi, \emph{{A Covariant holographic entanglement entropy proposal}}, \href{https://doi.org/10.1088/1126-6708/2007/07/062}{\emph{JHEP} {\bfseries 07} (2007) 062} [\href{https://arxiv.org/abs/0705.0016}{{\ttfamily 0705.0016}}].

\bibitem{Headrick:2014cta}
M.~Headrick, V.~E. Hubeny, A.~Lawrence and M.~Rangamani, \emph{{Causality \& holographic entanglement entropy}}, \href{https://doi.org/10.1007/JHEP12(2014)162}{\emph{JHEP} {\bfseries 12} (2014) 162} [\href{https://arxiv.org/abs/1408.6300}{{\ttfamily 1408.6300}}].

\bibitem{Callan:2012ip}
R.~Callan, J.-Y. He and M.~Headrick, \emph{{Strong subadditivity and the covariant holographic entanglement entropy formula}}, \href{https://doi.org/10.1007/JHEP06(2012)081}{\emph{JHEP} {\bfseries 06} (2012) 081} [\href{https://arxiv.org/abs/1204.2309}{{\ttfamily 1204.2309}}].

\bibitem{Rota:2017ubr}
M.~Rota and S.~J. Weinberg, \emph{{New constraints for holographic entropy from maximin: A no-go theorem}}, \href{https://doi.org/10.1103/PhysRevD.97.086013}{\emph{Phys. Rev. D} {\bfseries 97} (2018) 086013} [\href{https://arxiv.org/abs/1712.10004}{{\ttfamily 1712.10004}}].

\bibitem{Bao:2018wwd}
N.~Bao and M.~Mezei, \emph{{On the Entropy Cone for Large Regions at Late Times}},  \href{https://arxiv.org/abs/1811.00019}{{\ttfamily 1811.00019}}.

\bibitem{Mezei:2018jco}
M.~Mezei, \emph{{Membrane theory of entanglement dynamics from holography}}, \href{https://doi.org/10.1103/PhysRevD.98.106025}{\emph{Phys. Rev. D} {\bfseries 98} (2018) 106025} [\href{https://arxiv.org/abs/1803.10244}{{\ttfamily 1803.10244}}].

\bibitem{Erdmenger:2017gdk}
J.~Erdmenger, D.~Fernandez, M.~Flory, E.~Megias, A.-K. Straub and P.~Witkowski, \emph{{Time evolution of entanglement for holographic steady state formation}}, \href{https://doi.org/10.1007/JHEP10(2017)034}{\emph{JHEP} {\bfseries 10} (2017) 034} [\href{https://arxiv.org/abs/1705.04696}{{\ttfamily 1705.04696}}].

\bibitem{Caginalp:2019mgu}
R.~J. Caginalp, \emph{{Holographic entropy cone in AdS-Vaidya spacetimes}}, \href{https://doi.org/10.1103/PhysRevD.101.026010}{\emph{Phys. Rev. D} {\bfseries 101} (2020) 026010} [\href{https://arxiv.org/abs/1905.00544}{{\ttfamily 1905.00544}}].

\bibitem{Czech:2019lps}
B.~Czech and X.~Dong, \emph{{Holographic Entropy Cone with Time Dependence in Two Dimensions}}, \href{https://doi.org/10.1007/JHEP10(2019)177}{\emph{JHEP} {\bfseries 10} (2019) 177} [\href{https://arxiv.org/abs/1905.03787}{{\ttfamily 1905.03787}}].

\bibitem{Bousso:2024ysg}
R.~Bousso and S.~Kaya, \emph{{Geometric quantum states beyond the AdS/CFT correspondence}}, \href{https://doi.org/10.1103/PhysRevD.110.066017}{\emph{Phys. Rev. D} {\bfseries 110} (2024) 066017} [\href{https://arxiv.org/abs/2404.11644}{{\ttfamily 2404.11644}}].

\bibitem{Bao:2018pvs}
N.~Bao, G.~Penington, J.~Sorce and A.~C. Wall, \emph{{Beyond Toy Models: Distilling Tensor Networks in Full AdS/CFT}}, \href{https://doi.org/10.1007/JHEP11(2019)069}{\emph{JHEP} {\bfseries 11} (2019) 069} [\href{https://arxiv.org/abs/1812.01171}{{\ttfamily 1812.01171}}].

\bibitem{Headrick:2022nbe}
M.~Headrick and V.~E. Hubeny, \emph{{Covariant bit threads}}, \href{https://doi.org/10.1007/JHEP07(2023)180}{\emph{JHEP} {\bfseries 07} (2023) 180} [\href{https://arxiv.org/abs/2208.10507}{{\ttfamily 2208.10507}}].

\bibitem{Wall_2014}
A.~C. Wall, \emph{Maximin surfaces, and the strong subadditivity of the covariant holographic entanglement entropy}, \href{https://doi.org/10.1088/0264-9381/31/22/225007}{\emph{Classical and Quantum Gravity} {\bfseries 31} (2014) 225007}.

\bibitem{Dutta:2019gen}
S.~Dutta and T.~Faulkner, \emph{{A canonical purification for the entanglement wedge cross-section}}, \href{https://doi.org/10.1007/JHEP03(2021)178}{\emph{JHEP} {\bfseries 03} (2021) 178} [\href{https://arxiv.org/abs/1905.00577}{{\ttfamily 1905.00577}}].

\bibitem{Grado-White:2020wlb}
B.~Grado-White, D.~Marolf and S.~J. Weinberg, \emph{{Radial Cutoffs and Holographic Entanglement}}, \href{https://doi.org/10.1007/JHEP01(2021)009}{\emph{JHEP} {\bfseries 01} (2021) 009} [\href{https://arxiv.org/abs/2008.07022}{{\ttfamily 2008.07022}}].

\bibitem{Galloway:1999ny}
G.~J. Galloway, \emph{{Maximum principles for null hypersurfaces and null splitting theorems}}, \href{https://doi.org/10.1007/s000230050006}{\emph{Annales Henri Poincare} {\bfseries 1} (2000) 543} [\href{https://arxiv.org/abs/math/9909158}{{\ttfamily math/9909158}}].

\bibitem{Hubeny:2012wa}
V.~E. Hubeny and M.~Rangamani, \emph{{Causal Holographic Information}}, \href{https://doi.org/10.1007/JHEP06(2012)114}{\emph{JHEP} {\bfseries 06} (2012) 114} [\href{https://arxiv.org/abs/1204.1698}{{\ttfamily 1204.1698}}].

\bibitem{Hubeny:2013gba}
V.~E. Hubeny, M.~Rangamani and E.~Tonni, \emph{{Global properties of causal wedges in asymptotically AdS spacetimes}}, \href{https://doi.org/10.1007/JHEP10(2013)059}{\emph{JHEP} {\bfseries 10} (2013) 059} [\href{https://arxiv.org/abs/1306.4324}{{\ttfamily 1306.4324}}].

\bibitem{Kelly:2013aja}
W.~R. Kelly and A.~C. Wall, \emph{{Coarse-grained entropy and causal holographic information in AdS/CFT}}, \href{https://doi.org/10.1007/JHEP03(2014)118}{\emph{JHEP} {\bfseries 03} (2014) 118} [\href{https://arxiv.org/abs/1309.3610}{{\ttfamily 1309.3610}}].

\bibitem{Headrick:2017ucz}
M.~Headrick and V.~E. Hubeny, \emph{{Riemannian and Lorentzian flow-cut theorems}}, \href{https://doi.org/10.1088/1361-6382/aab83c}{\emph{Class. Quant. Grav.} {\bfseries 35} (2018) 10} [\href{https://arxiv.org/abs/1710.09516}{{\ttfamily 1710.09516}}].

\bibitem{Swingle:2009bg}
B.~Swingle, \emph{{Entanglement Renormalization and Holography}}, \href{https://doi.org/10.1103/PhysRevD.86.065007}{\emph{Phys. Rev. D} {\bfseries 86} (2012) 065007} [\href{https://arxiv.org/abs/0905.1317}{{\ttfamily 0905.1317}}].

\bibitem{Engelhardt:2014gca}
N.~Engelhardt and A.~C. Wall, \emph{{Quantum Extremal Surfaces: Holographic Entanglement Entropy beyond the Classical Regime}}, \href{https://doi.org/10.1007/JHEP01(2015)073}{\emph{JHEP} {\bfseries 01} (2015) 073} [\href{https://arxiv.org/abs/1408.3203}{{\ttfamily 1408.3203}}].

\bibitem{Akers:2019lzs}
C.~Akers, N.~Engelhardt, G.~Penington and M.~Usatyuk, \emph{{Quantum Maximin Surfaces}}, \href{https://doi.org/10.1007/JHEP08(2020)140}{\emph{JHEP} {\bfseries 08} (2020) 140} [\href{https://arxiv.org/abs/1912.02799}{{\ttfamily 1912.02799}}].

\bibitem{Akers:2021lms}
C.~Akers, S.~Hern\'andez-Cuenca and P.~Rath, \emph{{Quantum Extremal Surfaces and the Holographic Entropy Cone}}, \href{https://doi.org/10.1007/JHEP11(2021)177}{\emph{JHEP} {\bfseries 11} (2021) 177} [\href{https://arxiv.org/abs/2108.07280}{{\ttfamily 2108.07280}}].

\bibitem{Brown:2019rox}
A.~R. Brown, H.~Gharibyan, G.~Penington and L.~Susskind, \emph{{The Python\textquoteright{}s Lunch: geometric obstructions to decoding Hawking radiation}}, \href{https://doi.org/10.1007/JHEP08(2020)121}{\emph{JHEP} {\bfseries 08} (2020) 121} [\href{https://arxiv.org/abs/1912.00228}{{\ttfamily 1912.00228}}].

\bibitem{Arora:2024edk}
G.~Arora, M.~Headrick, A.~Lawrence, M.~Sasieta and C.~Wolfe, \emph{{Geometric surprises in the Python's lunch conjecture}}, \href{https://doi.org/10.21468/SciPostPhys.16.6.152}{\emph{SciPost Phys.} {\bfseries 16} (2024) 152} [\href{https://arxiv.org/abs/2401.06678}{{\ttfamily 2401.06678}}].

\bibitem{Engelhardt:2023bpv}
N.~Engelhardt, G.~Penington and A.~Shahbazi-Moghaddam, \emph{{Twice upon a time: timelike-separated quantum extremal surfaces}}, \href{https://doi.org/10.1007/JHEP01(2024)033}{\emph{JHEP} {\bfseries 01} (2024) 033} [\href{https://arxiv.org/abs/2308.16226}{{\ttfamily 2308.16226}}].

\bibitem{entangleduniverses}
D.~Gupta, M.~Headrick and M.~Sasieta, \emph{{Entangled universes}}, {\emph{to appear} (2025) }.

\end{thebibliography}\endgroup
\end{document}